\keywords{vector addition system, affine transformation, reachability,
  computational complexity}
\let\C\undefined          
\begin{document}

\title[The Complexity of Reachability in Affine VASS]{The Complexity
  of Reachability in \\ Affine Vector Addition Systems with States}

\author[M.~Blondin]{Michael Blondin}
\address{Universit\'{e} de Sherbrooke, Canada}
\email{michael.blondin@usherbrooke.ca}
\thanks{M.~Blondin was supported by the Fonds de recherche du Qu\'{e}bec --
  Nature et technologies (FRQNT) and by a Discovery Grant from the
  Natural Sciences and Engineering Research Council of Canada (NSERC)}

\author[M.~Raskin]{Mikhail Raskin}
\address{Technische Universit\"{a}t M\"{u}nchen, Germany}
\email{raskin@in.tum.de}
\thanks{M.~Raskin was upported by the European Research Council (ERC)
  under the European Union's Horizon 2020 research and innovation
  programme under grant agreement No 787367 (PaVeS)}


\begin{abstract}
  Vector addition systems with states (VASS) are widely used for the
  formal verification of concurrent systems. Given their tremendous
  computational complexity, practical approaches have relied on
  techniques such as reachability relaxations, \eg, allowing for
  negative intermediate counter values. It is natural to question
  their feasibility for VASS enriched with primitives that typically
  translate into undecidability. Spurred by this concern, we pinpoint
  the complexity of integer relaxations with respect to arbitrary
  classes of affine operations.

  More specifically, we provide a trichotomy on the complexity of
  integer reachability in VASS extended with affine operations (affine
  VASS). Namely, we show that it is \NP-complete for VASS with
  resets, \PSPACE-complete for VASS with (pseudo-)transfers and VASS
  with (pseudo-)copies, and undecidable for any other class. We
  further present a dichotomy for standard reachability in affine
  VASS\@: it is decidable for VASS with permutations, and undecidable
  for any other class. This yields a complete and unified complexity
  landscape of reachability in affine VASS\@. We also consider the
  reachability problem parameterized by a fixed affine VASS, rather
  than a class, and we show that the complexity landscape is arbitrary in
  this setting.
\end{abstract}

\maketitle

\section{Introduction}
\label{sec:introduction}
\emph{Vector addition systems with states (VASS)}, which can
equivalently be seen as Petri nets, form a widespread general model of
infinite-state systems with countless applications ranging from the
verification of concurrent programs to the modeling of biological,
chemical and business processes (see, \eg,~\cite{GS92, KKW14, EGLM17,
  HGD08, Aal98}). They comprise a finite-state controller with
counters ranging over $\N$ and updated via instructions of the form $x
\leftarrow x + c$ which are executable if $x + c \geq 0$. The central
decision problem concerning VASS is the \emph{reachability problem}:
given configurations $x$ and $y$, is it possible to reach $y$ starting
from $x$? Such queries allow, \eg, to verify whether unsafe states can
be reached in concurrent programs. The notorious difficulty of the
reachability problem led to many proofs of its decidability over the
last decades~\cite{ST77, May81, Kos82, Lam92, Ler10, Ler11,
  Ler12}. While the problem has been known to be \EXPSPACE-hard since
1976~\cite{Lip76}, its computational complexity has remained unknown
until very recently, where it was shown to be
\TOWER-hard~\cite{CLLLM19} and solvable in Ackermannian
time~\cite{LS15, LS19}.

Given the potential applications on the one hand, and the tremendously
high complexity on the other hand, researchers have investigated
relaxations of VASS in search of a tradeoff between expressiveness and
algorithmic complexity. Two such relaxations consist in permitting
either:
\begin{enumerate}[(a)]
\item transitions to be executed fractionally, and consequently
  counters to range over $\Q_{\geq 0}$ (\emph{continuous
    reachability}); or

\item counters to range over $\Z$ (\emph{integer reachability}).
\end{enumerate}
In both cases, the complexity drops drastically: continuous
reachability is \P-complete and \NP-complete for Petri nets and VASS
respectively~\cite{FH15, BH17}, while integer reachability is
\NP-complete for both models~\cite{HH14,CHH18}. Moreover, these two types of
reachability have been used successfully to prove safety of real-world
instances like multithreaded program skeletons,
\eg, see~\cite{ELMMN14, ALW16, BFHH17}.

Although VASS are versatile, they are sometimes too limited to model
common primitives. Consequently, their modeling power has been
extended with various operations. For example,
\emph{(multi-)transfers}, \ie, operations of the form \[x \leftarrow x
+ \sum_{i=1}^n y_i;\ y_1 \leftarrow 0;\ y_2 \leftarrow
0;\ \cdots;\ y_n \leftarrow 0,\] allow, \eg, for the verification of
multi-threaded C and Java program skeletons with communication
primitives~\cite{KKW14, DRV02}. Another example is the case of
\emph{resets}, \ie, operations of the form $x \leftarrow 0$, which
allow, \eg, for the validation of some business
processes~\cite{WAHE09}, and the generation of program loop
invariants~\cite{SK19}. Many such extensions fall under the generic
family of \emph{affine VASS}, \ie, VASS with instructions of the form
$\vec{x} \leftarrow \mat{A} \cdot \vec{x} + \vec{b}$. As a general
rule of thumb, reachability is undecidable for essentially any class
of affine VASS introduced in the literature; in particular, for
transfers and resets~\cite{AK76, DFS98}.

Given the success of relaxations for the practical analysis of
(standard) VASS, it is tempting to employ the same approach for affine
VASS\@. Unfortunately, continuous reachability becomes undecidable for
mild affine extensions such as resets and transfers. However,
\emph{integer reachability} was recently shown decidable for affine
operations such as resets (\NP-complete) and transfers
(\PSPACE-complete)~\cite{HH14, BHM18}. While such complexity results
do not translate immediately into practical procedures, they arguably
guide the design of algorithmic verification strategies.

\medskip\parag{Contribution} Thus, these recent results raise two
natural questions: for \emph{what classes} of affine VASS is integer
reachability decidable? And, whenever it is decidable, what is its
\emph{exact} computational complexity? We fully answer these questions
in this paper by giving a precise trichotomy: integer reachability is
\NP-complete for VASS with resets, \PSPACE-complete for VASS with
(pseudo-)transfers and VASS with (pseudo-)copies, and undecidable for
\emph{any} other class. In particular, this answers a question left
open in~\cite{BHM18}: integer reachability is undecidable for
\emph{any} class of affine VASS with \emph{infinite} matrix monoid.

This clear complexity landscape is obtained by formalizing classes of
affine VASS and by carefully analyzing the structure of arbitrary
affine transformations; which could be of independent interest. In
particular, it enables us to prove a dichotomy on \emph{(standard)
reachability} for affine VASS\@: it is decidable for VASS with
permutations, and undecidable for \emph{any} other class. To the best
of our knowledge, this is the first proof of the folkore rule of thumb
stating that ``reachability is undecidable for essentially any class
of affine VASS''.

We further complement these trichotomy and dichotomy by showing that
the (integer or standard) reachability problem has an arbitrary
complexity when it is parameterized by a fixed affine VASS rather than
a fixed class.

\medskip\parag{Related work} Our work is related
to~\cite{BHM18} which shows that integer reachability is decidable for
affine VASS whose matrix monoid is finite (refined to \EXPSPACE\
by~\cite{BHKST20}); and more particularly \PSPACE-complete in general
for VASS with transfers and VASS with copies. While it is also
recalled in~\cite{BHM18} that integer reachability is undecidable in
general for affine VASS, the authors do not provide any necessary
condition for undecidability to hold. Moreover, the complexity
landscape for affine VASS with finite monoids is left blurred, \eg, it
does not give necessary conditions for \PSPACE-hardness results to
hold, and the complexity remains unknown for monoids with negative
coefficients. This paper completes the work initiated in~\cite{BHM18}
by providing a unified framework, which includes the notion
of \emph{matrix class}, that allows us to precisely characterize the
complexity of integer reachability for \emph{any} class of affine
VASS\@.

Our work is also loosely related to a broader line of research on
(variants of) affine~VASS dealing with, \eg, modeling
power~\cite{Valk78}, accelerability~\cite{FL02}, formal
languages~\cite{CFM12}, coverability~\cite{BFP12}, and the complexity
of integer reachability for restricted counters~\cite{FGH13} and
structures~\cite{IS16}.

\medskip\parag{Structure of the paper} Section~\ref{sec:preliminaries}
introduces general notation and affine VASS\@. In
Section~\ref{sec:trichotomy}, we prove our main result, namely the
complexity trichotomy for integer reachability in affine VASS\@. In
Section~\ref{sec:dichotomy}, we show a dichotomy on (standard)
reachability in affine VASS\@. In Section~\ref{sec:arb:complexity}, we
show that the complexity of reachability can be arbitrary when fixing
an affine VASS\@. Finally, we conclude in
Section~\ref{sec:conclusion}. To avoid cluttering the presentation
with too many technical details, some of them are deffered to an
appendix.

\section{Preliminaries}
\label{sec:preliminaries}
\parag{Notation} Let $\Z$, $\N$, $[a, b]$ and $[k]$ denote
respectively the sets $\{\ldots, -1, 0, 1, \ldots\}$, $\{0, 1, 2,
\ldots\}$, $\{a, a + 1, \ldots, b\}$ and $[1, k]$. For every vectors
$\vec{u}, \vec{v} \in \Z^k$, let $\vec{u} + \vec{v}$ be the vector
$\vec{w} \in \Z^k$ such that $\vec{w}(i) \defeq \vec{u}(i) +
\vec{v}(i)$ for every $i \in [k]$. Let $\vec{e}_i$ be the unit vector
such that $\vec{e}_i(i) = 1$ and $\vec{e}_i(j) = 0$ for every $j \neq
i$. We do not specify the arity of $\vec{e}_i$ as we will use it
without ambiguity in various dimensions. For every square matrix
$\mat{A} \in \Z^{k \times k}$, let $\size{\mat{A}} \defeq k$ and
let \[\norm{\mat{A}} \defeq \max\{|\mat{A}_{i, j}| : i, j \in [k]\}.\]
We naturally extend the latter notation to any set $X$ of matrices,
\ie, $\norm{X} \defeq \sup\{\norm{\mat{A}} : \mat{A} \in
X\}$. Throughout the paper, we will often refer to matrix and vector
indices as \emph{counters}. We will also often describe permutations
in cycle notation, where elements are separated by semicolons for
readability, \eg, $(i; j)$ denotes the permutation that swaps $i$ and
$j$.

\medskip\parag{Affine VASS} An \emph{affine vector
  addition system with states (affine VASS)} is a tuple $\V = (d, Q,
T)$ where:
\begin{itemize}
\item $d \geq 1$ is the \emph{number of counters} of $\V$;

\item $Q$ is a finite set of elements called
  \emph{control-states};

\item $T \subseteq Q \times \Z^{d \times d} \times \Z^d \times Q$ is a
  finite set of elements called \emph{transitions}.
\end{itemize}

For every transition $t = (p, \mat{A}, \vec{b}, q)$, let $\tsrc{t}
\defeq p$, $\tmat{t} \defeq \mat{A}$, $\tvec{t} \defeq \vec{b}$ and
$\ttgt{t} \defeq q$. A \emph{configuration} is a pair $(q, \vec{v})
\in Q \times \Z^d$ written $q(\vec{v})$. For all $t \in T$ and
$\D \in \{\Z, \N\}$, we~write \[p(\vec{u}) \trans{t}_\D q(\vec{v})\]
if $\vec{u}, \vec{v} \in \D^d$, $\tsrc{t} = p$, $\ttgt{t} = q$, and
$\vec{v} = \tmat{t} \cdot \vec{u} + \tvec{t}$. The relation
$\trans{}_\D$ is naturally extended to sequences of transitions,
\ie, for every $w \in T^k$ we
let \[\trans{w}_\D\ \defeq\ \trans{w_k}_\D \circ \cdots \circ
\trans{w_2}_\D \circ \trans{w_1}_\D.\] Moreover, we write
\begin{align*}
  p(\vec{u}) \trans{}_\D q(\vec{v})\ &
  \text{ if $p(\vec{u}) \trans{\mathmakebox[8pt][c]{t}}_\D q(\vec{v})$
  for some $t \in T$, and} \\  
  p(\vec{u}) \trans{*}_\D q(\vec{v})\ &
  \text{ if $p(\vec{u}) \trans{\mathmakebox[8pt][c]{w}}_\D q(\vec{v})$
  for some $w \in T^*$}.
\end{align*}

As an example, let us consider the affine VASS of
Figure~\ref{fig:ex:vass},
\ie, where $d = 2$, $Q = \{p, q, r\}$ and $T$ is as depicted
graphically. We have:
\[
  p(3, 1) \trans{s}_\Z q(1, 0)
  \trans{t}_\Z r(1, 0) \trans{u}_\Z q(1, 1)
  \trans{t}_\Z r(2, 0) \trans{u}_\Z q(2, 2) \trans{t}_\Z r(4, 0).
\]
More generally, $p(x, 1) \trans{*}_\Z r(2^k, 0)$ for all $x \in \Z$
and $k \in \N_{> 0}$. However, $p(3, 0) \trans{s}_\N q(1, -1)$
does \emph{not} hold as counters are not allowed to become negative
under this semantics.

\begin{figure}[h]
  \centering
  \begin{tikzpicture}[->, node distance=5cm, auto, very thick, scale=0.8, transform shape, font=\Large]      
    \tikzset{every state/.style={inner sep=1pt, minimum size=25pt}}

    \node[state] (p)               {$p$};
    \node[state] (q) [right= of p] {$q$};
    \node[state] (r) [right= of q] {$r$};

    \path[->]
    (p) edge[] node[] {
      $
      s \colon
      \begin{pmatrix}
        0 & 0 \\
        0 & 1
      \end{pmatrix},
      \begin{pmatrix}
       1 \\
       -1
      \end{pmatrix}
      $
    } (q)
    
    (q) edge[bend left=20] node[] {
      $
      t \colon
      \begin{pmatrix}
        1 & 1 \\
        0 & 0
      \end{pmatrix},
      \begin{pmatrix}
        0 \\
        0
      \end{pmatrix}
      $
    } (r)
    
    (r) edge[bend left=20] node[] {
      $
      u \colon
      \begin{pmatrix}
        1 & 0 \\
        1 & 0
      \end{pmatrix},
      \begin{pmatrix}
        0 \\
        0
      \end{pmatrix}
      $
    } (q)
    ;
  \end{tikzpicture}
  \caption{Example of an affine VASS.}\label{fig:ex:vass}
\end{figure}
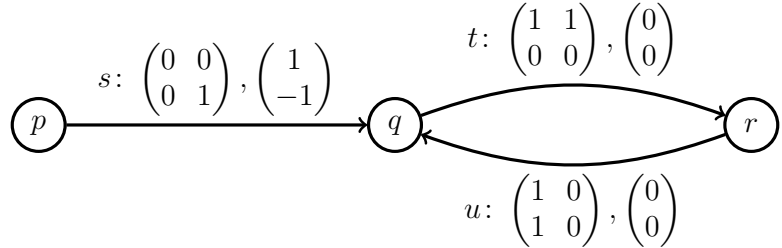

\medskip\parag{Classes of matrices} Let us formalize the informal
notion of \emph{classes} of affine VASS, such as ``VASS with resets'',
``VASS with transfers'', ``VASS with doubling'', etc., used throughout
the literature.

Such classes depend on the extra operations they provide, \ie, by
their affine transformations, and more precisely by their linear part
(matrices) rather than their additive part (vectors). Affine VASS extend
standard VASS since they always include the identity matrix, which
amounts to not applying any extra operation. Moreover, as
transformations can be composed along sequences of transitions, their
matrices are closed under multiplication, \ie, if matrices $\tmat{s}$
and $\tmat{t}$ are allowed on transitions $s$ and $t$ of two affine
VASS of a given class, then $\tmat{s} \cdot \tmat{t}$ is typically
also allowed in an affine VASS of the class as it is understood that
$s$ and $t$ can be composed. In other words, matrices form
a \emph{monoid}. In addition, classes of affine VASS typically
considered do not pose restrictions on the number of counters that can
be used, or on the subset of counters on which operations can be
applied. In other words, their affine transformations can be extended
to \emph{arbitrary dimensions} and can be applied on \emph{any subset
of counters}, \eg, general ``VASS with resets'' allow to reset any
counter, not just say the first one.

We formalize these observations as follows. For every $k \geq 1$, let
$\mat{I}_k$ be the $k \times k$ identity matrix and let $\S_k$ denote
the set of permutations over $[k]$. Let $\mat{P}_\sigma \in {\{0,
1\}}^{k \times k}$ be the permutation matrix of $\sigma \in \S_k$. For
every matrix $\mat{A} \in \Z^{k \times k}$, every permutation $\sigma
\in \S_k$ and every $n \geq 1$, let \[\perm{\mat{A}}{\sigma} \defeq
\mat{P}_\sigma \cdot \mat{A} \cdot \mat{P}_{\sigma^{-1}},\] and let
$\ext{\mat{A}}{n} \in \Z^{(k + n) \times (k + n)}$ be the matrix such
that:
\[
\ext{\mat{A}}{n} \defeq
\begin{pmatrix}
  \mat{A} & \mat{0}_{\phantom{n}} \\
  \mat{0} & \mat{I}_n
\end{pmatrix}.
\]

A \emph{class (of matrices)} is a set of matrices $\C \subseteq
\bigcup_{k \geq 1} \Z^{k \times k}$ that satisfies
$\{\perm{\mat{A}}{\sigma}, \ext{\mat{A}}{n}, \allowbreak \mat{I}_n,
\mat{A} \cdot \mat{B}\} \subseteq \C$ for every $\mat{A}, \mat{B} \in
\C$, every $\sigma \in \S_{\dim{\mat{A}}}$ and every $n \geq 1$. In
other words, $\C$ is closed under counter renaming; each matrix of
$\C$ can be extended to larger dimensions; and $\C \cap \Z^{k \times
  k}$ is a monoid under matrix multiplication for every $k \geq 1$.

Note that ``counter renaming'' amounts to choosing a set of counters
on which to apply a given transformation, \ie, it renames the
counters, applies the transformation, and renames the counters back to
their original names. Let us illustrate this. Consider the classical
case of transfer VASS, \ie, where the contents of a counter can be
transferred onto another counter with operations of the form ``$x
\leftarrow x + y;\ y \leftarrow 0$''. In matrix notation, this
amounts~to:
\[
\mat{O} \defeq
\begin{pmatrix}
1 & 1 \\
0 & 0 
\end{pmatrix}.
\] Now, consider a system with three counters $c_1$, $c_2$ and
$c_3$. This system should be able to compute ``$c_1 \leftarrow c_1 +
c_2 + c_3;\ c_2 \leftarrow 0;\ c_3 \leftarrow 0$'', but matrix
$\mat{O}$ cannot achieve this on its own. However, it can be done with
the following matrix:
\[
\mat{O}' \defeq
\begin{pmatrix}
  1 & 1 & \textit{0} \\
  0 & 0 & \textit{0} \\
  \textit{0} & \textit{0} & \textit{1}
\end{pmatrix}
\cdot
\begin{pmatrix}
  1 & \textit{0} & 1 \\
  \textit{0} & \textit{1} & \textit{0} \\
  0 & \textit{0} & 0
\end{pmatrix}
=
\begin{pmatrix}
  1 & 1 & 1 \\
  0 & 0 & 0 \\
  0 & 0 & 0
\end{pmatrix}.
\]
We have $\mat{O}' = \ext{\mat{O}}{1} \cdot
\perm{\ext{\mat{O}}{1}}{\sigma}$ where $\sigma \defeq (2; 3)$. Thus,
the operation can be achieved by any class containing $\mat{O}$. The
symmetric operation ``$c_3 \leftarrow c_1 + c_2 + c_3;\ c_1 \leftarrow
0;\ c_2 \leftarrow 0$'', \eg, can also be achieved with appropriate
permutations. Hence, this corresponds to the usual notion of
transfers: we are allowed to choose some counters and apply transfers
in either direction.

Note that requiring $\mat{P}_\sigma \cdot \mat{A} \in \C$ for classes
would be too strong as it would allow to permute the contents of
counters even for classes with no permutation matrix, such as resets.

\medskip\parag{Classes of interest} We say that a matrix $\mat{A} \in
\Z^{k \times k}$ is a \emph{pseudo-reset}, \emph{pseudo-transfer} or
\emph{pseudo-copy} matrix if $\mat{A} \in {\{-1, 0, 1\}}^{k \times k}$
and if it also satisfies the following:
\begin{itemize}
\item \makebox[4cm][l]{pseudo-reset matrix:} $\mat{A}$ is a diagonal matrix;

\item \makebox[4cm][l]{pseudo-transfer matrix:} $\mat{A}$ has at most one nonzero entry per column;

\item \makebox[4cm][l]{pseudo-copy matrix:} $\mat{A}$ has at most one nonzero entry per row.
\end{itemize}

We omit the prefix ``pseudo-'' if $\mat{A} \in {\{0, 1\}}^{k \times
  k}$. Note that the sets of (pseudo-)reset matrices,
  (pseudo-)transfer matrices, and (pseudo-)copy matrices all form
  classes. Moreover, (pseudo-)reset matrices are both
  (pseudo-)transfer and (pseudo-)copy matrices.

Note that the terminology of ``reset'', ``transfer'' and ``copy''
comes from the fact that such matrices implement operations like ``$x
\leftarrow 0$'', ``$x \leftarrow x + y;\ y \leftarrow 0$'' and ``$x
\leftarrow x;\ y \leftarrow x$'', as achieved respectively by the
matrices of transitions $s$ (\emph{reset}), $t$ (\emph{transfer}) and
$u$ (\emph{copy}) illustrated in Figure~\ref{fig:ex:vass}.

\medskip\parag{Reachability problems} We say that an affine VASS $\V =
(k, Q, T)$ \emph{belongs to} a class of matrices $\C$ if $\{\tmat{t} :
t \in T\} \subseteq \C$, \ie, if all matrices appearing on its
transitions belong to $\C$. The \emph{reachability problem} and
\emph{integer reachability problem} for a fixed class $\C$ are defined
as:
\begin{center}
  \begin{tabular}{lp{12cm}}
    \multicolumn{2}{l}{\underline{$\reach{\C}$}} \\[5pt]
  
    \textsc{Input}: & an affine VASS $\V$ that belongs to $\C$, and
    two configurations $p(\vec{u}), q(\vec{v})$; \\

    \textsc{Decide}: & $p(\vec{u}) \trans{*}_\N q(\vec{v})$ in $\V$?
  \end{tabular}

  \bigskip
  
  \begin{tabular}{lp{12cm}}
    \multicolumn{2}{l}{\underline{$\zreach{\C}$}} \\[5pt]

    \textsc{Input}: & an affine VASS $\V$ that belongs to $\C$, and
    two configurations $p(\vec{u}), q(\vec{v})$; \\

    \textsc{Decide}: & $p(\vec{u}) \trans{*}_\Z q(\vec{v})$ in $\V$?
  \end{tabular}
\end{center}

\section{A complexity trichotomy for integer reachability}
\label{sec:trichotomy}
This section is devoted to the proof of our main result, namely the
trichotomy on $\zreach{\C}$:

\begin{thm}\label{thm:trichotomy}
  The integer reachability problem $\zreach{\C}$~is:
  \begin{enumerate}[(i)]
  \item \NP-complete if $\C$ only contains reset matrices;\label{itm:np}

  \item \PSPACE-complete, otherwise, if either $\C$ only contains
    pseudo-transfer matrices or $\C$ only contains pseudo-copy
    matrices;\label{itm:pspace}
    
  \item Undecidable otherwise.\label{itm:undec}
  \end{enumerate}
\end{thm}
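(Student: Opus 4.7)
For part~(\ref{itm:np}), I would establish \NP-membership by exploiting the very simple structure of reset matrices: since every matrix is a $\{0,1\}$-diagonal, along any trajectory each counter's history decomposes into segments between consecutive resets, and over $\Z$ the endpoint value of each segment is a linear function of the Parikh image of the transitions in that segment. This yields an existential Presburger encoding of polynomial size, hence membership in \NP. \NP-hardness is inherited from integer reachability of classical VASS, the sub-case in which only the identity matrix appears on transitions, already known to be \NP-hard.

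For part~(\ref{itm:pspace}), the key observation is that pseudo-transfer matrices in dimension $k$ form a finite monoid of cardinality at most $3^{k^{2}}$: the product of two pseudo-transfer matrices is again pseudo-transfer, because each column of the product is a $\pm 1$ scalar multiple of a column of the left factor (using the at-most-one-nonzero-entry-per-column property), so entries remain in $\{-1,0,1\}$. The same reasoning applies to pseudo-copy matrices. The integer reachability problem thus reduces to one for affine VASS with a \emph{finite} matrix monoid, for which I would adapt the \PSPACE\ algorithm of~\cite{BHM18}, with extra care to track the negative entries permitted by the ``pseudo-'' variant. \PSPACE-hardness is inherited from the transfer VASS case proved in the same reference.

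Part~(\ref{itm:undec}) is the main obstacle and rests on a structural analysis of the hypothesis. Suppose $\C$ is contained in none of the reset, pseudo-transfer, and pseudo-copy classes. I claim that, modulo closure under products, dimension extension, and permutation conjugation, the class $\C$ must contain a matrix with some entry of absolute value at least~$2$. Either $\C$ already contains such a matrix, or every entry of every matrix of $\C$ lies in $\{-1,0,1\}$; in the latter case, $\C$ contains some $\mat{A}$ with a row having at least two nonzero entries (witnessing failure of pseudo-copy) and some $\mat{B}$ with a column having at least two nonzero entries (witnessing failure of pseudo-transfer). By extending the dimensions and permuting so that the two nonzero entries in the chosen row of $\mat{A}$ align with the two nonzero entries in the chosen column of $\mat{B}$, the product $\mat{A} \cdot \mat{B}$ contains an entry equal to $\pm 2$ or to $0$; the degenerate $0$ case can be avoided by switching signs via suitable additional permuted copies of $\mat{A}$ or $\mat{B}$.

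With such a ``large-entry'' matrix in hand, I would construct a doubling gadget that, modulo auxiliary bookkeeping counters, simulates $x \leftarrow 2x$ on a designated counter, and then reduce from Hilbert's Tenth Problem over~$\Z$---a natural fit for the integer semantics, avoiding the usual complications of Minsky-machine simulations that require counters to stay nonnegative. The most delicate step will be iterating the doubling primitive through a control-state loop so as to produce values of arbitrarily large magnitude, while preventing the bookkeeping counters from polluting the final equality test; I expect this will require slightly different gadgets in the ``large-entry'' and the ``combined pseudo-transfer/pseudo-copy'' sub-cases, and some care to control signs when only $-1$-entries are available as primitive.
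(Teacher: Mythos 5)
Your parts~(i) and the \PSPACE\ \emph{upper} bound match the paper: membership in (ii) follows exactly as you say, since products of pseudo-transfer (resp.\ pseudo-copy) matrices stay pseudo-transfer (resp.\ pseudo-copy), so $\C \cap \Z^{k\times k}$ is a finite monoid of polynomially bounded norm and exponentially bounded size, and the result of \cite{BHM18} applies as is (no adaptation for negative entries is needed). The genuine gap in (ii) is the lower bound. The hypothesis only says that $\C$ contains \emph{some} non-reset pseudo-transfer (or pseudo-copy) matrix; it need not contain the transfer matrices, so \PSPACE-hardness for transfer VASS does not ``inherit'' to $\C$ --- hardness transfers to superclasses, not to arbitrary classes contained in the pseudo-transfer matrices. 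For instance, the class generated by the $1\times 1$ matrix $(-1)$ satisfies the hypothesis of (ii) but contains only diagonal $\pm 1$ matrices and no transfer at all. The paper must therefore show that any such class \emph{implements} either sign flips (if some entry equals $-1$) or swaps (if some off-diagonal entry equals $1$), with a careful notion of $0$-/$?$-implementation using auxiliary counters, and then give dedicated reductions from linear bounded automata that use only these weak primitives. That entire machinery is missing from your proposal.

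In (iii), your structural step is essentially the paper's: from a non-pseudo-copy matrix and a non-pseudo-transfer matrix one gets $\norm{\C}\geq 2$, though note that the sign problem cannot be fixed ``by permuting'' --- conjugation by permutations never changes signs; the paper first proves a same-sign lemma by multiplying a matrix with a permuted copy of itself (producing an $a^2$ entry), and only then aligns the row of $\mat{A}$ with the column of $\mat{B}$ on fresh counters. The more serious gap is your next step: a matrix with an entry of absolute value at least $2$ does \emph{not} allow you to simulate exact doubling $x \leftarrow 2x$. The other entries of the matrix inject noise into the designated counter, and without transfer-, copy- or reset-like matrices in $\C$ there is no way to clean it up; this is precisely why the paper proves only a growth lemma (a matrix $\mat{C}$ with $\lambda_{n+1}(\mat{C}) \geq 2\lambda_n(\mat{C})$ along the orbit of $\vec{e}$, obtained by rotating through auxiliary counters under a decaying-noise invariant) rather than exact doubling. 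It then sidesteps exact arithmetic altogether: the at-least-doubling growth makes the induced encoding of binary strings injective, and undecidability follows by a reduction from the Post correspondence problem \cite{Rei15}, where only \emph{equality} of two encodings must be checked. Your plan to reduce from Hilbert's Tenth Problem presupposes exact doubling (and, implicitly, exact multiplication gadgets built from it); with only approximate growth available it is not clear this can be made to work, so as written the undecidability argument does not go through.
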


It is known from~\cite[Cor.~10]{HH14} that \NP-hardness holds for
affine VASS using only the identity matrix (recall that our definition
of reset matrices include the identity matrix), and that
\NP\ membership holds for any class of reset matrices. Hence,
\ref{itm:np} follows immediately. Thus, the rest of this section is
dedicated to proving~\ref{itm:pspace} and~\ref{itm:undec}.

\subsection{PSPACE-hardness}

For the rest of this subsection, let us fix some class $\C$ that
either only contains pseudo-transfer matrices or only contains
pseudo-copy matrices. We prove \PSPACE-hardness of $\zreach{\C}$ by
first proving that \PSPACE-hardness holds if either:
\begin{itemize}
\item $\C$ contains a matrix with an entry equal to $-1$; or

\item $\C$ contains a matrix with entries from $\{0, 1\}$ and a
  nonzero entry outside of its diagonal.
\end{itemize}

For these two cases, we first show that $\C$ can implement operations
$x \leftarrow -x$ or $(x, y) \leftarrow (y, x)$ respectively,
\ie, \emph{sign flips} or \emph{swaps}. Essentially, each of these
operations is sufficient to simulate linear bounded automata. Before
investigating these two cases, let us carefully formalize what it
means to implement an operation:

\begin{defi}\label{def:op:impl}
  Let $f \colon \Z^k \mapsto \Z^k$ and let $\tau \in \{0, ?\}$. Given
  a set of counters $X \subseteq [m]$,
  let \[V_X \defeq \left\{\vec{v} \in \Z^m : \bigwedge_{j \not\in
  X} \vec{v}(j) = 0\right\} \text{ if } \tau = 0,\] and let
  $V_X \defeq \Z^m$ otherwise. We say that
  $\C$ \emph{$\tau$-implements} $f$ if for every $n \geq k$, there
  exist counters $X = \{x_1, x_2, \ldots, x_n\}$, matrices
  $\{\mat{F}_\sigma : \sigma \in \S_k\} \subseteq \C$ and $m \geq n$
  such that the following holds for every $\sigma \in \S_k$ and
  $\vec{v} \in V_X$:
  \begin{enumerate}[(a)]
  \item $\size{\mat{F}_\sigma} = m$;\label{itm:op:dim}
 
  \item $(\mat{F}_\sigma \cdot \vec{v})(x_{\sigma(i)}) =
    f(x_{\sigma(1)}, x_{\sigma(2)}, \ldots, x_{\sigma(k)})(i)$ for
    all $i \in [k]$;\label{itm:op:f}

  \item $(\mat{F}_\sigma \cdot \vec{v})(x_{\sigma(i)}) =
    \vec{v}(x_{\sigma(i)})$ for every $i \not\in
        [k]$;\label{itm:op:untouched}

  \item $\mat{F}_\sigma \cdot \vec{v} \in V_X$.\label{itm:op:closure}
  \end{enumerate}
  We further say that $\C$ \emph{implements} $f$ if it either
  $0$-implements or $?$-implements $f$.
\end{defi}

Definition~\ref{def:op:impl}~\ref{itm:op:f} and~\ref{itm:op:untouched}
state that it is possible to obtain arbitrarily many counters $X$ such
that $f$ can be applied on any $k$-subset of $X$, provided that the
counter values belong to $V_X$. Moreover, \ref{itm:op:closure} states
that vectors resulting from applying operation $f$ also belong to
$V_X$, which ensures that $f$ can be applied arbitrarily many
times. Note that~\ref{itm:op:dim} allows for extra auxiliary counters
whose values are only restricted by $V_X$.

Informally, $?$-implementation means that we use additional counters
that can hold arbitrary values, while $0$-imple\-mentation requires
the extra counters to be initialized with zeros but promises to keep
them in this state. It turns out that pseudo-transfer matrix classes
$0$-implement the functions we need, while pseudo-copy matrix classes
$?$-implement~them.

\begin{prop}\label{prop:exist:flip}
  If $\C$ contains a matrix with some entry equal to $-1$, then it
  implements sign flips, \ie, the operation $f \colon \Z \to \Z$ such
  that $f(x) \defeq -x$.
\end{prop}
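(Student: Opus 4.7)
My plan is to construct the sign-flip matrix $\mat{F}$ purely from $\mat{A}$ via the three closure operations of $\C$ (products, permutation-conjugation, and extension), splitting the argument on whether the $-1$ entry $\mat{A}_{i_0,j_0}$ is diagonal. If $i_0 = j_0$, the sparsity constraint of the class (at most one nonzero per column for pseudo-transfer, per row for pseudo-copy) forces the $j_0$-th column (resp.\ row) of $\mat{A}$ to consist of only this single $-1$, so $\mat{F} \defeq \ext{\mat{A}}{n-1}$ with $x_1 \defeq j_0$ and $x_2, \ldots, x_n$ placed in the identity-extension block already flips $x_1$ in place, preserves every other $x_r$, and by the same sparsity neither contaminates the remaining $[d] \setminus \{j_0\}$ counters (pseudo-transfer, $0$-implementation) nor depends on their initial value (pseudo-copy, $?$-implementation).

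The core of the proof is the off-diagonal case $i_0 \neq j_0$, where a single shadow of $\mat{A}$ merely \emph{moves} the data from one counter to another (with sign $-1$) rather than flipping it in place. I route the value around a closed odd-length walk, the cheapest being a triangle $x_1 \to a \to b \to x_1$ on two fresh auxiliaries. Concretely, I set the ambient dimension to $m \defeq n + d$ with designated counters $x_1, \ldots, x_n$, auxiliaries $a, b$, and scratch counters $u_1, \ldots, u_{d-2}$ accommodating the $\mat{A}$-roles distinct from $i_0$ and $j_0$. For $k \in \{1,2,3\}$ I pick a permutation $\pi_k$ of $[m]$ that places the $\mat{A}$-block so that its $(i_0,j_0)$-entry lands at the $k$-th edge of the triangle, i.e.\ at ambient position $(a,x_1)$, $(b,a)$, $(x_1,b)$ respectively, routes the remaining $\mat{A}$-roles to $u_1, \ldots, u_{d-2}$, and keeps $x_2, \ldots, x_n$ in the identity extension. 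Setting $\mat{M}_k \defeq \perm{\ext{\mat{A}}{n}}{\pi_k}$, closure of $\C$ yields $\mat{F} \defeq \mat{M}_3 \mat{M}_2 \mat{M}_1 \in \C$.

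The two implementations then follow from the sparsity of the respective subclass. For pseudo-transfer I verify $0$-implementation by forward simulation: the ``at most one nonzero per column'' constraint forces column $j_0$ of $\mat{A}$ to equal exactly $-\vec{e}_{i_0}$, so each $\mat{M}_k$, applied to a vector whose only footprint counter with nonzero value is the current data counter, sends that value (multiplied by $-1$) to the next triangle vertex, clears the source, and contributes nothing to scratch or other rows; after three hops the value is back at $x_1$ with net sign $(-1)^3 = -1$, and every auxiliary and scratch counter has returned to zero. For pseudo-copy I verify $?$-implementation by row algebra: ``at most one nonzero per row'' forces row $i_0$ of $\mat{A}$ to equal exactly $-\vec{e}_{j_0}^{T}$, so row $a$ of $\mat{M}_1$, row $b$ of $\mat{M}_2$, and row $x_1$ of $\mat{M}_3$ are $-\vec{e}_{x_1}^{T}$, $-\vec{e}_a^{T}$, $-\vec{e}_b^{T}$ respectively; multiplying right-to-left gives $\vec{e}_{x_1}^{T} \mat{F} = -\vec{e}_{x_1}^{T}$, while each $x_r$ ($r \geq 2$) sits in the identity extension of every $\mat{M}_k$, hence $\vec{e}_{x_r}^{T} \mat{F} = \vec{e}_{x_r}^{T}$.

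The main technical subtlety is ensuring that the ``other'' columns (or rows) of $\mat{A}$, which are in general nonzero, do not corrupt the data as it travels around the triangle. This is exactly where the sparsity hypothesis of the class is used: for pseudo-transfer, the scratch counters have zero input so the extraneous columns cannot fire; for pseudo-copy, the extraneous rows of $\mat{A}$ land on positions outside $X$, whose output values we are not obliged to control under $?$-implementation.
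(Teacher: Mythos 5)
Your construction is correct and is essentially the paper's own proof: the same case split on whether the $-1$ entry is diagonal, the same three-step routing of the value around a cycle through two fresh auxiliary counters (the paper's $\mat{B}_{x,y},\mat{B}_{y,z},\mat{B}_{z,x}$ built from $\ext{\mat{A}}{n+2}$ by conjugation), and the same use of column sparsity for $0$-implementation in the pseudo-transfer case versus row sparsity for $?$-implementation in the pseudo-copy case. The only cosmetic differences (dedicated scratch counters for the leftover $\mat{A}$-roles, identifying $x_1$ with $j_0$ in the diagonal case, and stating the flip only for $x_1$ rather than for every $x\in X$, which follows by the identical construction after renaming the counters via class closure under $\perm{\cdot}{\sigma}$) do not affect correctness.
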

  
\begin{proof}
  Let $n \geq 1$ and $\mat{A} \in \C$ be such that $\mat{A}_{a, b} =
  -1$ for some counters $a$ and $b$. Let $d \defeq \size{\mat{A}}$. We
  extend $\mat{A}$ with $n + 2$ counters $X' \defeq X \cup \{y, z\}$,
  where $X \defeq \{x_i : i \in [n]\}$ are the counters for which we
  wish to implement sign flips, and $\{y, z\}$ are auxiliary
  counters. More formally, let $\mat{A}' \defeq \ext{\mat{A}}{n + 2}$
  where $X' = [d + 1, d']$ and $d' \defeq d + n + 2$.

  For every $s, t \in X'$ such that $s \neq t$, let $\mat{B}_{s, t}
  \defeq \pi_{s, t}(\mat{A}')$ and let $\mat{C}_t \defeq
  \sigma_t(\mat{A}')$ where $\pi_{s, t} \defeq (a; t)(b; s)$ and
  $\sigma_t \defeq (a; t)$. For every $x \in X$, let
  \[
  \mat{F}_x \defeq
  \begin{cases}
    \mat{B}_{z, x} \cdot \mat{B}_{y, z} \cdot \mat{B}_{x, y} &
    \text{if $a \neq b$}, \\
    \mat{C}_x & \text{otherwise}.
  \end{cases}
  \]

  Intuitively, $\mat{B}_{s, t}$ (resp.\ $\mat{C}_t$) flips the sign
  from source counter $s$ (resp.\ $t$) to target counter $t$. If $a
  \neq b$, then matrix $\mat{F}_x$ implements a sign flip in three
  steps using auxiliary counters $y$ and $z$, as illustrated in
  Figure~\ref{fig:flip}. Otherwise, $\mat{F}_x$ implements sign flip
  directly in one step.

  \begin{figure*}[!h]
  \hfill
  \begin{tikzpicture}[->, node distance=0.75cm, auto, very thick, scale=0.8, transform shape, font=\Large]
    \tikzset{every state/.style={inner sep=1pt, minimum size=5pt}}
    \newcommand{\cspace}{2.25cm}

    \node[state, label={right:$x$}]       (xi0)                {};
    \node[state, label={right:$y$}, fill] (yi0) [below=of xi0] {};
    \node[state, label={right:$z$}, fill] (zi0) [below=of yi0] {};

    \node[state, fill] (xi1) [left=\cspace of xi0] {};
    \node[state]       (yi1) [below=       of xi1] {};
    \node[state, fill] (zi1) [below=       of yi1] {};

    \node[state, fill] (xi2) [left=\cspace of xi1] {};
    \node[state, fill] (yi2) [below=       of xi2] {};
    \node[state]       (zi2) [below=       of yi2] {};

    \node[state]       (xi3) [left=\cspace of xi2] {};
    \node[state, fill] (yi3) [below=       of xi3] {};
    \node[state, fill] (zi3) [below=       of yi3] {};

    \path[->, dashed]
    (xi0) edge node {} (yi1)
    (yi1) edge node {} (zi2)
    (zi2) edge node {} (xi3)
    ;

    \path[->]
    (yi0) edge node {} (xi1)
    (zi0) edge node {} (zi1)

    (xi1) edge node {} (xi2)
    (zi1) edge node {} (yi2)

    (yi2) edge node {} (yi3)
    (xi2) edge node {} (zi3)
    ;

    \node[right={\dimexpr\cspace/8} of zi1, yshift=-20pt]
         {$\mat{B}_{x, y}$};
    \node[right={\dimexpr\cspace/8} of zi2, yshift=-20pt]
         {$\mat{B}_{y, z}$};
    \node[right={\dimexpr\cspace/8} of zi3, yshift=-20pt]
         {$\mat{B}_{z, x}$};
  \end{tikzpicture}%
  \hfill\hfill
  \begin{tikzpicture}[->, node distance=0.75cm, auto, very thick, scale=0.8, transform shape, font=\Large]
    \tikzset{every state/.style={inner sep=1pt, minimum size=5pt}}
    \newcommand{\cspace}{2.25cm}

    \node[state, label={right:$x$}] (xi0)                {};
    \node[state, label={right:$y$}] (yi0) [below=of xi0] {};
    \node[state, label={right:$z$}] (zi0) [below=of yi0] {};

    \node[state] (xi1) [left=\cspace of xi0] {};
    \node[state] (yi1) [below=       of xi1] {};
    \node[state] (zi1) [below=       of yi1] {};

    \node[state] (xi2) [left=\cspace of xi1] {};
    \node[state] (yi2) [below=       of xi2] {};
    \node[state] (zi2) [below=       of yi2] {};

    \node[state] (xi3) [left=\cspace of xi2] {};
    \node[state] (yi3) [below=       of xi3] {};
    \node[state] (zi3) [below=       of yi3] {};

    \node (zr0) [left={\dimexpr\cspace/2} of xi0] {?};
    \node (zr1) [left={\dimexpr\cspace/2} of zi0] {?};
    \node (zr2) [left={\dimexpr\cspace/2} of xi1] {?};
    \node (zr3) [left={\dimexpr\cspace/2} of yi1] {?};
    \node (zr4) [left={\dimexpr\cspace/2} of yi2] {?};
    \node (zr5) [left={\dimexpr\cspace/2} of zi2] {?};

    \path[->, dashed]
    (xi0) edge node {} (yi1)
    (yi1) edge node {} (zi2)
    (zi2) edge node {} (xi3)
    ;

    \path[->]
    (zr0) edge node {} (xi1)
    (zr1) edge node {} (zi1)

    (zr2) edge node {} (xi2)
    (zr3) edge node {} (yi2)

    (zr4) edge node {} (yi3)
    (zr5) edge node {} (zi3)
    ;

    \node[right={\dimexpr\cspace/8} of zi1, yshift=-20pt]
         {$\mat{B}_{x, y}$};
    \node[right={\dimexpr\cspace/8} of zi2, yshift=-20pt]
         {$\mat{B}_{y, z}$};
    \node[right={\dimexpr\cspace/8} of zi3, yshift=-20pt]
         {$\mat{B}_{z, x}$};
  \end{tikzpicture}%
  \hfill\hfill
  \caption{Effect of applying $\mat{F}_x$ for the case $a \neq b$,
    where the left (resp.\ right) diagram depicts the case where
    $\mat{A}$ is a pseudo-transfer (resp.\ pseudo-copy) matrix. A
    solid or dashed edge from $s$ to $t$ represents
    operation $s \leftarrow t$ or $s \leftarrow -t$
    respectively. Filled nodes indicate counters that necessarily hold
    $0$. Symbol ``?''  stands for an integer whose value is irrelevant
    and depends on $\mat{A}$ and the counter values.}\label{fig:flip}
\end{figure*}
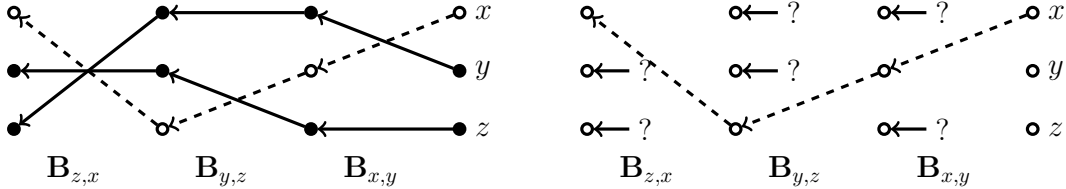

  Let us consider the case where $\mat{A}$ is a pseudo-transfer
  matrix. From the definition of $\mat{B}_{s, t}$ and $\mat{C}_t$, it
  can be shown that for every $s, t, u \in X'$ such that $s \neq t$ and
  $u \not\in \{s, t\}$, the following holds:
  \begin{enumerate}[(i)]
  \item $\mat{B}_{s, t} \cdot \vec{e}_s = \mat{C}_t \cdot \vec{e}_t =
    -\vec{e}_t$, and\label{itm:flips:src:tgt}
    
  \item $\mat{B}_{s, t} \cdot \vec{e}_u = \mat{C}_t \cdot \vec{e}_u =
    \vec{e}_u$.\label{itm:flips:other}
  \end{enumerate}

  Let us show that we $0$-implement sign flips, so let \[V \defeq
  \left\{\vec{v} \in \Z^{d'} : \bigwedge_{j \not\in X}
  \vec{v}(j)=0\right\}.\] Let $\vec{v} \in V$ and $x \in X$. By
  definition of $V$, $\vec{v} = \sum_{y \in X} \vec{v}(y) \cdot
  \vec{e}_y$. Let $\vec{v}' \defeq \sum_{j \in X \setminus \{x\}}
  \vec{v}(j) \cdot \vec{e}_j$. Items~\ref{itm:op:f},
  \ref{itm:op:untouched} and~\ref{itm:op:closure} of
  Definition~\ref{def:op:impl} are satisfied since:
  \begin{align*}
    \mat{F}_x \cdot \vec{v}
    &= \mat{F}_x \cdot \vec{v}(x) \cdot \vec{e}_x + \mat{F}_x \cdot
    \vec{v}' \\    
    &= \vec{v}(x) \cdot \mat{F}_x \cdot \vec{e}_x + \vec{v}'
    && \text{(by~\ref{itm:other} and def.\ of $\mat{F}_x$)} \\
    &= \vec{v}(x) \cdot -\vec{e}_x + \vec{v}'
    && \text{(by~\ref{itm:flips:src:tgt}, \ref{itm:other} and
      def.\ of $\mat{F}_x$)} \\
    &= -\vec{v}(x) \cdot \vec{e}_x + \vec{v}'.
  \end{align*}
  
  The proof of~\ref{itm:flips:src:tgt} and \ref{itm:flips:other}, and
  the similar proof for the case where $\mat{A}$ is a pseudo-copy
  matrix, are analogous (see Appendix~\ref{sec:appendix}).
\end{proof}

\begin{prop}\label{prop:pspace:mone}
  $\zreach{\C}$ is \PSPACE-hard if $\C$ has a matrix with an entry
  equal to $-1$.
\end{prop}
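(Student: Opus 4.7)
The plan is to reduce a standard \PSPACE-complete problem to $\zreach{\C}$, leveraging the sign flips provided by Proposition~\ref{prop:exist:flip}. Combined with the additive updates $\vec{b}$ that are already available on every affine VASS transition, the implemented sign flips give, on each working counter, all affine maps of the form $x \leftarrow \pm x + c$ with $c \in \Z$. A natural target for the reduction is the acceptance problem for a linear bounded automaton (LBA) over the tape alphabet $\{0,1\}$, or equivalently reachability in polynomial-space bounded counter machines.

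Given such an LBA of tape size $n$, I would construct an affine VASS whose control-states track the LBA state together with the head position, and whose $n$ counters $x_1, \ldots, x_n$ encode the tape bits, intended to range over $\{0,1\}$. Writing a bit to the cell under the head is simulated either by leaving $x_i$ unchanged or by applying $x_i \leftarrow 1 - x_i$, which is a sign flip followed by an additive update. Reading a bit is the central difficulty, because integer counters cannot be tested directly during a run. I would address it with the standard guess-and-verify pattern: each read step non-deterministically branches on the guessed value of the bit, and the source/target configurations are engineered so that reachability holds iff every guess along the run actually agreed with the corresponding counter value at that moment.

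The main technical obstacle is designing the verification mechanism under the restricted operations available in $\C$. Because we essentially only have maps of the form $x \leftarrow \pm x + c$ acting on a single counter at a time, there is no obvious way to copy a cell value into an auxiliary counter for later inspection; instead, the verification has to be performed locally on each $x_i$, by forcing the overall trajectory of sign flips and shifts applied to $x_i$ to match the prescribed target in $\Z$ exactly when every guess concerning $x_i$ was consistent with its actual value. A further subtlety is the distinction from Proposition~\ref{prop:exist:flip} between $0$-implementation in the pseudo-transfer case and $?$-implementation in the pseudo-copy case: in the latter the auxiliary counters used inside a single sign-flip step may be overwritten arbitrarily, so the construction must keep these auxiliaries disjoint from the verification counters in order not to destroy the tape encoding.

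Once these issues are handled, a run of the LBA accepts its input iff the encoded reachability instance has a yes answer, which yields the claimed \PSPACE-hardness of $\zreach{\C}$.
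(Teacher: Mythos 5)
You have the right high-level plan---reduce from LBA acceptance, exploit the sign flips from Proposition~\ref{prop:exist:flip}, and handle reads by guess-and-verify---but you explicitly defer the one part of the argument that carries all the weight: the verification mechanism. As written, ``forcing the overall trajectory of sign flips and shifts applied to $x_i$ to match the prescribed target exactly when every guess concerning $x_i$ was consistent'' restates the goal rather than constructs a gadget, and the encoding you chose (one counter $x_i$ per cell, intended to range over $\{0,1\}$) is too weak to support one: a single bit-valued counter cannot simultaneously record the current tape symbol and whether an earlier read-guess on that cell was wrong. After a bad guess the counter merely flips, a later step may flip it back, and a dishonest run can terminate in exactly the same configuration as an honest one, with nothing left to detect.

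The paper closes this gap with a different cell encoding: each tape cell $j$ is represented by \emph{two} counters $x_j, y_j$, the cell content is the \emph{sign} of $y_j$ (not its value), and $x_j$ monotonically counts accesses to the cell. The gadget for a simulated step is arranged so that $|x_j|$ always grows by one, $|y_j|$ grows by one exactly when the guessed symbol agrees with the sign of $y_j$, and $|y_j|$ shrinks otherwise; hence the slack $|x_j| - |y_j| \geq 0$ is a monotone error accumulator that equals zero at the end iff the simulation was faithful. A terminal gadget then checks $|x_j| = |y_j|$ for every $j$ by non-deterministically negating $y_j$ and decrementing the pair together toward zero. Your remark that the auxiliary counters from $?$-implementation must be kept disjoint and later reset does match the paper, but the paired-counter sign encoding is the missing core idea you would need to supply before the reduction goes through.
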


\begin{proof}
  \newcommand{\init}{p_\text{init}}
  \newcommand{\acc}{p_\text{acc}}
  \newcommand{\dleft}{\textsc{Left}}
  \newcommand{\dright}{\textsc{Right}}

  We give a reduction, partially inspired by~\cite[Thm.~10]{BHM18},
  from the membership problem for linear bounded automata, which
  is \PSPACE-complete (\eg, see \cite[Sect.~9.3 and~13]{HU79}).

  Let $w \in {\{0, 1\}}^k$ and let $\A = (P, \Sigma, \delta, \init,
  \acc)$ be a linear bounded automaton where:
  \begin{itemize}
  \item $P$ is its finite set of control-states;

  \item $\Sigma = \{0, 1\}$ is its input and tape alphabet;

  \item $\delta \colon P \times \Sigma \to P \times \Sigma \times
    \{\dleft, \dright\}$ is its transition function; and

  \item $\init$ and $\acc$ are its initial and accepting
    control-states, respectively.
  \end{itemize}
  
  We construct an affine VASS $\V = (d, Q, T)$ and configurations
  $p(\vec{u})$, $q(\vec{v})$ such that $\V$ belongs to $\C$, and
  \[p(\vec{u}) \trans{*}_\Z q(\vec{v}) \iff \text{$\A$ accepts $w$}.\]

  For every control-state $p$ and head position $j$ of $\A$, there is
  a matching control-state in $\V$, \ie, $Q \defeq \{q_{p,j} : p \in
  P, 1 \leq j \leq k\} \cup \overline{Q}$, where $\overline{Q}$ will
  be auxiliary control-states. We associate two counters to each tape
  cell of $\A$, \ie, $d \defeq 2 \cdot k$. For readability, let us
  denote these counters $\{x_j, y_j : j \in [k]\}$.

  We represent the
  contents of tape cell $i$ by the sign of counter $y_j$, \ie, $y_j >
  0$ represents $0$, and $y_j < 0$ represents $1$. We will ensure that
  $y_j$ is never equal to $0$, which would otherwise be an undefined
  representation. Since $\V$ cannot directly test the sign of a
  counter, it will be possible for $\V$ to commit errors during the
  simulation of $\A$. However, we will construct $\V$ in such a way
  that erroneous simulations are detected.

  The gadget depicted in Figure~\ref{fig:gadget:trans} simulates a
  transition of $\A$ in three steps:
  \begin{itemize}
  \item $x_i$ is incremented;

  \item $y_i$ is incremented (resp.\ decremented) if the letter $a$ to
    be read is $0$ (resp.\ $1$);

  \item the sign of $y_i$ is flipped if the letter $b$ to be written
    differs from the letter $a$ to be read.
  \end{itemize}

  \begin{figure*}[h]
  \centering
  \begin{tikzpicture}[->, node distance=3.5cm, auto, very thick, scale=0.8, transform shape, font=\Large]
    \tikzset{every state/.style={inner sep=1pt, minimum size=30pt}}

    \node[state] (qi)               {$q_{p, i}$};
    \node[state] (a1) [right=of qi] {};
    \node[state] (a2) [right=of a1] {};
    \node[state] (qj) [right=of a2] {$q_{p', i+1}$};

    \path[->, font=\normalsize]
    (qi) edge node {$x_i \leftarrow x_i + 1$}              (a1)
    (a1) edge node {$y_i \leftarrow y_i + (-1)^a$}         (a2)
    (a2) edge node {$y_i \leftarrow (-1)^{b-a} \cdot y_i$} (qj)
    ;
  \end{tikzpicture}
  \caption{Gadget of $\V$ simulating transition $\delta(p, a) = (p',
    b, \dright)$ of $\mathcal{A}$. The gadget for direction $\dleft$
    is the same except for $q_{p', i+1}$ which is replaced by $q_{p',
      i-1}$. Note that $a$ and $b$ are fixed, hence expressions such
    as $(-1)^a$ are constants; they do not require
    exponentiation.}\label{fig:gadget:trans}
\end{figure*}
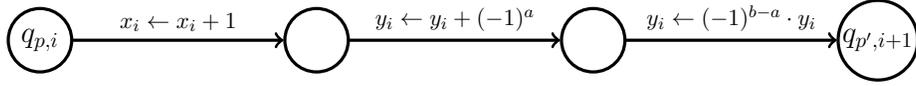

  Let $\vec{u} \in \Z^d$ be the vector such that for every $j \in
  [k]$: \[\vec{u}(x_j) \defeq 1 \text{ and }\vec{u}(y_j) \defeq
  (-1)^{w_j}.\] Provided that $\V$ starts in vector $\vec{u}$, we
  claim that: \begin{itemize} \item $\bigwedge_{j=1}^ k (|x_j| \geq
    |y_j| > 0)$ is an invariant;

  \item $\V$ has faithfully simulated $\A$ so far if and only if
    $\bigwedge_{j=1}^k (|x_j| = |y_j|)$ holds;

  \item if $\V$ has faithfully simulated $\A$ so far, then the sign of
    $y_j$ represents $w_j$ for every $j \in [k]$.
  \end{itemize}

  Let us see why this claim holds. Let $i \in [k]$. Initially, we have
  $|x_i| = |y_i|$ and the sign of $y_i$ set correctly. Assume we
  execute the gadget of Figure~\ref{fig:gadget:trans}, resulting in
  new values $x_i'$ and $y_i'$. Let $\lambda \geq 0$ be such that
  $|x_i| = |y_i| + \lambda$. Let $c \in \{0, 1\}$ be the letter
  represented by $y_i$. If $c = a$, then $|x_i'| = |y_i'| + \lambda$
  and the sign of $y_i'$ represents $b$ as desired. If $c \neq a$,
  then $|x_i'| = |y_i'| + (\lambda + 1)$. Thus, we have $|x_i'| =
  |y_i'|$ if and only if no error was made before and during the
  execution of the gadget.

  From the above observations, we conclude that $\A$ accepts $w$ if
  and only if there exist $i \in [k]$ and $\vec{v} \in \Z^d$ such that
  \[q_{\init,1}(\vec{u}) \trans{*}_\Z q_{\acc,i}(\vec{v})\] and
  $|\vec{v}(x_j)| = |\vec{v}(y_j)|$ for every $j \in[k]$. This can be
  tested using the gadget depicted in Figure~\ref{fig:gadget:acc}, which:
  \begin{itemize}
  \item detects nondeterministically that some control-state of the
    form $q_{\acc,i}$ has been reached;

  \item attempts to set $y_j$ to its absolute value for every $j \in
    [k]$;

  \item decrements $x_j$ and $y_j$ simultaneously for every $j \in
    [k]$.
  \end{itemize}

  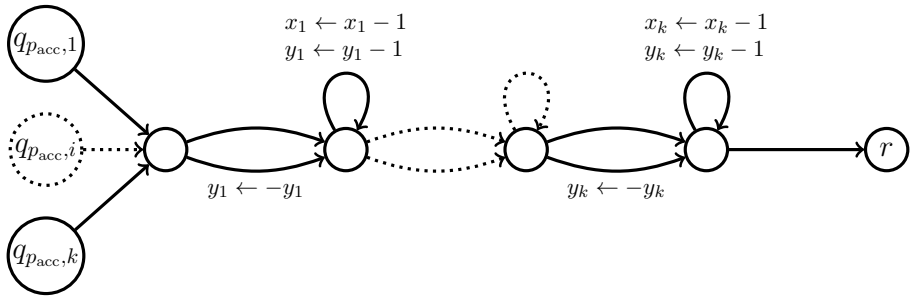
\begin{figure}[h]
  \centering
  \begin{tikzpicture}[->, node distance=2.25cm, auto, very thick, scale=0.8, transform shape, font=\Large]
    \tikzset{every state/.style={inner sep=1pt, minimum size=20pt}}

    \node[state] (r1)               {};
    \node[state] (r2) [right=of r1] {};
    \node[state] (r3) [right=of r2] {};
    \node[state] (r4) [right=of r3] {};
    \node[state] (r)  [right=of r4] {$r$};

    \node[state, dotted] (qi) [left=   1cm of r1] {$q_{\acc,i}$};
    \node[state]         (q1) [above=0.5cm of qi] {$q_{\acc,1}$};
    \node[state]         (qk) [below=0.5cm of qi] {$q_{\acc,k}$};

    \path[->, font=\normalsize]
    (q1) edge node {} (r1)
    (qk) edge node {} (r1)      
    
    (r1) edge[bend left=20]  node[]     {}                      (r2)
    (r1) edge[bend right=20] node[swap] {$y_1 \leftarrow -y_1$} (r2)
    (r2) edge[out=120, in=60, looseness=10] node[] {
      \begin{tabular}{l}
        $x_1 \leftarrow x_1 - 1$ \\
        $y_1 \leftarrow y_1 - 1$
      \end{tabular}
    } (r2)

    (r3) edge[bend left=20]  node[]     {}                      (r4)
    (r3) edge[bend right=20] node[swap] {$y_k \leftarrow -y_k$} (r4)
    (r4) edge[out=120, in=60, looseness=10] node[] {
      \begin{tabular}{l}
        $x_k \leftarrow x_k - 1$ \\
        $y_k \leftarrow y_k - 1$
      \end{tabular}
    } (r4)

    (r4) edge node {} (r)
    ;
    
    \path[->, font=\normalsize, dotted]
    (qi) edge node {} (r1)

    (r2) edge[bend left=20]  node[]     {} (r3)
    (r2) edge[bend right=20] node[swap] {} (r3)
    (r3) edge[out=120, in=60, looseness=10] node[] {} (r3)
    ;
  \end{tikzpicture}
  \caption{Gadget of $\V$ for tesing whether $\A$ was faithfully
    simulated and has accepted $w$.}\label{fig:gadget:acc}
\end{figure}

  Due to the above observations, it is \emph{only} possible to reach
  $r(\vec{0})$ if $|x_j| = |y_j|$ for every $j \in [k]$ before
  entering the gadget of Figure~\ref{fig:gadget:acc}. Thus, we are done
  proving the reduction since $\A$ accepts $w$ if and only if
  \[q_{\init,1}(\vec{u}) \trans{*}_\Z r(\vec{0}).\]

  \parag{Sign flips} The above construction considers sign flips as a
  ``native'' operation. However, this is not necessarily the case, and
  instead relies on the fact that class $\C$ either $0$-implements or
  $?$-implements sign flips, by
  Proposition~\ref{prop:exist:flip}. Thus, the reachability question
  must be changed to \[q_{\init,1}(\vec{u}, \vec{0}) \trans{*}_\Z
  r(\vec{0}, \vec{0})\] to take auxiliary counters into
  account. Moreover, if $\C$ $?$-implements sign flips, then extra
  transitions $(r, \mat{I}, \vec{e}_j, r)$ and $(r, \mat{I},
  -\vec{e}_j, r)$ must be added to $T$, for every auxiliary counter
  $j$, to allow counter $j$ to be set back to $0$. Note that
  control-state $r$ can only be reached \emph{after} the simulation of
  $\A$, hence it plays no role in the emulation of sign
  flips. Moreover, if there is an error during the simulation of $\A$
  and the extra transitions set the auxiliary counters to zero, we
  will stil detect it as the configuration will be of the form
  $r(\vec{w}, \vec{0})$ where $\vec{w} \neq \vec{0}$.
\end{proof}

In the two forthcoming propositions, we prove \PSPACE-hardness of the
remaining case.

\begin{prop}\label{prop:exist:swaps}
  If $\C$ contains a matrix with entries from $\{0, 1\}$ and a nonzero
  entry outside of its main diagonal, then it implements swaps,
  \ie, the operation $f \colon \Z^2 \to \Z^2$ such that $f(x, y)
  \defeq (y, x)$.
\end{prop}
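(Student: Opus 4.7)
The plan is to parallel the proof of Proposition~\ref{prop:exist:flip}, replacing the sign-flip primitive by a ``directed transfer/copy'' primitive and chaining three copies of it through a single auxiliary counter to realize a swap. Fix $\mat{A} \in \C$ with $\mat{A}_{a, b} = 1$, $a \neq b$, and entries from $\{0, 1\}$, and put $d \defeq \size{\mat{A}}$. For each $n \geq 2$, I allocate $X = \{x_1, \ldots, x_n\}$ with $x_i \defeq d + i$ together with a single auxiliary $z \defeq d + n + 1$, so that $m \defeq d + n + 1$. Set $\mat{A}' \defeq \ext{\mat{A}}{n + 1}$, and, for distinct $s, t \in [d + 1, m]$, define $\pi_{s, t} \defeq (a; t)(b; s)$ and $\mat{B}_{s, t} \defeq \perm{\mat{A}'}{\pi_{s, t}} \in \C$; the permutation is a product of disjoint transpositions because $\{a, b\} \subseteq [d]$ and $\{s, t\} \subseteq [d + 1, m]$.

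The first step is to establish the direct analogues of items (i) and (ii) of Proposition~\ref{prop:exist:flip}: $\mat{B}_{s, t} \cdot \vec{e}_s = \vec{e}_t$ and $\mat{B}_{s, t} \cdot \vec{e}_u = \vec{e}_u$ for every $u > d$ with $u \notin \{s, t\}$. Both identities are verified by the same column-analysis argument, the only change being that the $(t, s)$-entry of $\mat{B}_{s, t}$ now equals $\mat{A}_{a, b} = 1$ rather than $-1$. Armed with these primitives, I define the swap as the classical three-phase exchange through $z$,
\[
  \mat{F} \defeq \mat{B}_{z, x_2} \cdot \mat{B}_{x_2, x_1} \cdot \mat{B}_{x_1, z} \in \C,
\]
and set $\mat{F}_\sigma \defeq \mat{F}$ for both $\sigma \in \S_2$ (valid because swap is symmetric, so both choices of $\sigma$ describe the very same transformation of $X$). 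Items~\ref{itm:op:dim}--\ref{itm:op:untouched} of Definition~\ref{def:op:impl} then follow from (i), (ii), and the observation that, at each step, the target of the applied $\mat{B}_{s, t}$ either already holds $0$ (pseudo-transfer case) or is about to be overwritten (pseudo-copy case).

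The remaining item~\ref{itm:op:closure} is where the $0$- versus $?$-implementation dichotomy arises and is the main technical point. When $\mat{A}$ is a pseudo-transfer matrix, row $s$ of $\mat{B}_{s, t}$ comes from row $b$ of $\mat{A}$, whose entries only multiply $\vec{v}$-components that vanish under the invariants $\vec{v} \in V_X$ and $\vec{v}(t) = 0$ maintained along the three steps; hence $s$ is reset to $0$, the auxiliary $z$ is left at $0$ after the third step, and the final vector stays in $V_X$, yielding $0$-implementation. When $\mat{A}$ is a pseudo-copy matrix, row $s$ may leak values of counters in $[d]$ into $s$ (so $0$-implementation is hopeless), but item (ii) guarantees that every other counter with index $> d$ is preserved verbatim; hence the net effect on $X$ is still the desired swap, and closure holds trivially with $V_X = \Z^m$, yielding $?$-implementation. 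The one nontrivial calculation is this row-$s$ analysis, which I expect to be the main obstacle but which can be deferred to the appendix exactly as in Proposition~\ref{prop:exist:flip}.
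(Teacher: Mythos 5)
Your construction is essentially the paper's: the same conjugated matrices $\mat{B}_{s,t}=\perm{\ext{\mat{A}}{n+1}}{\pi_{s,t}}$, the same three-step chain through one auxiliary counter $z$, and the same split into $0$-implementation (transfer case) versus $?$-implementation (copy case). The pseudo-transfer half is fine. The genuine gap is in the pseudo-copy half: your primitive (i), $\mat{B}_{s,t}\cdot\vec{e}_s=\vec{e}_t$, is false for copy matrices, because a (pseudo-)copy matrix constrains \emph{rows}, not columns, so column $b$ of $\mat{A}$ may contain several $1$'s. For instance, take the copy matrix with $\mat{A}_{1,1}=\mat{A}_{2,1}=1$ and $\mat{A}_{1,2}=\mat{A}_{2,2}=0$ (the operation $x_2\leftarrow x_1$), so $a=2$, $b=1$; then $\mat{B}_{s,t}\cdot\vec{e}_s=\vec{e}_s+\vec{e}_t$, not $\vec{e}_t$. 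Moreover, even together with (ii), column identities for the columns indexed by $X'$ cannot yield items~\ref{itm:op:f}--\ref{itm:op:closure} of Definition~\ref{def:op:impl} in the $?$-setting: the new value of a counter $w\in X'$ depends on the whole row $w$ of $\mat{B}_{s,t}$, and (i)--(ii) say nothing about the entries of that row in column $t$ or in the columns $\leq d$, which now multiply arbitrary values. ``The target is about to be overwritten'' is not an argument that the target receives exactly $\vec{v}(s)$. What the copy case needs (and what the paper proves in its appendix) are the \emph{row-form} identities: since a pseudo-copy matrix has at most one nonzero entry per row, row $t$ of $\mat{B}_{s,t}$ is the unit row with a $1$ in position $s$ (row $a$ of $\mat{A}$ has its unique $1$ in column $b$, and $\pi_{s,t}(j)=b$ iff $j=s$), and row $u$ is the unit row with a $1$ in position $u$ for every $u\in X'\setminus\{s,t\}$; hence $(\mat{B}_{s,t}\cdot\vec{v})(t)=\vec{v}(s)$ and $(\mat{B}_{s,t}\cdot\vec{v})(u)=\vec{v}(u)$ for \emph{all} $\vec{v}\in\Z^{m}$, and chaining these three times gives the swap with $V_X=\Z^m$. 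This is an easy fix, but it is a different computation from ``the same column analysis,'' so as written your copy case does not go through.

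A secondary point: you only provide one matrix $\mat{F}$, swapping the fixed pair $x_1,x_2$, reading the index set $\S_k$ of Definition~\ref{def:op:impl} literally. The definition is meant (see the sentence following it, and the way both implementation results are used) to let $f$ be applied to \emph{any} $k$-subset of the counters $X$, and the PSPACE-hardness argument of Proposition~\ref{prop:pspace:nonreset} replaces native swaps between arbitrary pairs of counters by the emulated ones, so a swap of one fixed pair would not suffice. The repair is immediate with what you already have: define $\mat{F}_{x,y}\defeq\mat{B}_{z,y}\cdot\mat{B}_{y,x}\cdot\mat{B}_{x,z}$ for every pair of distinct $x,y\in X$, exactly as the paper does.
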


\begin{proof}
  Let $n \geq 2$ and let $\mat{A} \in \C$ be a matrix with entries
  from $\{0, 1\}$ and a nonzero entry outside of its main
  diagonal. Let $d \defeq \size{\mat{A}}$. There exist $a, b \in [d]$
  such that $\mat{A}_{a, b} = 1$ and $a \neq b$. Let us extend
  $\mat{A}$ with $n + 1$ counters $X' \defeq X \cup \{z\}$ where
  $X \defeq \{x_i : i \in [n]\}$. More formally, let
  $\mat{A}' \defeq \ext{\mat{A}}{n + 1}$, $X' = [d + 1, d']$ and
  $d' \defeq d + n + 1$.

  For all $s, t \in X'$ such that $s \neq t$, let $\mat{B}_{s, t}
  \defeq \pi_{s, t}(\mat{A}')$ where $\pi_{s, t} \defeq (b; s) (a;
  t)$. For every distinct counters $x, y \in X$, let
  \[\mat{F}_{x, y} \defeq \mat{B}_{z, x} \cdot \mat{B}_{x, y} \cdot
  \mat{B}_{y, z}.\]

  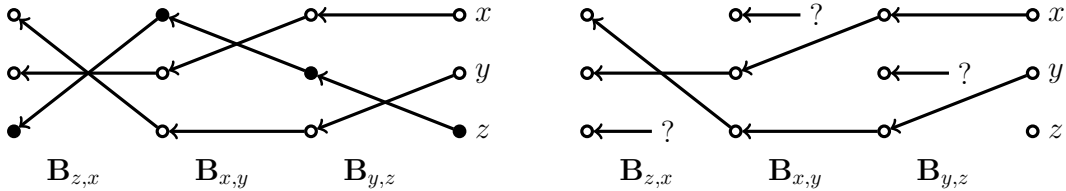
\begin{figure*}[h]
  \hfill
  \begin{tikzpicture}[->, node distance=0.75cm, auto, very thick, scale=0.8, transform shape, font=\Large]
    \tikzset{every state/.style={inner sep=1pt, minimum size=5pt}}
    \newcommand{\cspace}{2.25cm}

    \node[state, label={right:$x$}]       (xi0)                {};
    \node[state, label={right:$y$}]       (yi0) [below=of xi0] {};
    \node[state, label={right:$z$}, fill] (zi0) [below=of yi0] {};

    \node[state]       (xi1) [left=\cspace of xi0] {};
    \node[state, fill] (yi1) [below=       of xi1] {};
    \node[state]       (zi1) [below=       of yi1] {};

    \node[state, fill] (xi2) [left=\cspace of xi1] {};
    \node[state]       (yi2) [below=       of xi2] {};
    \node[state]       (zi2) [below=       of yi2] {};

    \node[state]       (xi3) [left=\cspace of xi2] {};
    \node[state]       (yi3) [below=       of xi3] {};
    \node[state, fill] (zi3) [below=       of yi3] {};

    \path[->]
    (xi0) edge node {} (xi1)
    (yi0) edge node {} (zi1)
    (zi0) edge node {} (yi1)

    (xi1) edge node {} (yi2)
    (yi1) edge node {} (xi2)
    (zi1) edge node {} (zi2)

    (xi2) edge node {} (zi3)
    (yi2) edge node {} (yi3)
    (zi2) edge node {} (xi3)
    ;

    \node[right={\dimexpr\cspace/8} of zi1, yshift=-20pt]
         {$\mat{B}_{y, z}$};
    \node[right={\dimexpr\cspace/8} of zi2, yshift=-20pt]
         {$\mat{B}_{x, y}$};
    \node[right={\dimexpr\cspace/8} of zi3, yshift=-20pt]
         {$\mat{B}_{z, x}$};
  \end{tikzpicture}%
  \hfill\hfill
  \begin{tikzpicture}[->, node distance=0.75cm, auto, very thick, scale=0.8, transform shape, font=\Large]
    \tikzset{every state/.style={inner sep=1pt, minimum size=5pt}}
    \newcommand{\cspace}{2.25cm}

    \node[state, label={right:$x$}] (xi0)                {};
    \node[state, label={right:$y$}] (yi0) [below=of xi0] {};
    \node[state, label={right:$z$}] (zi0) [below=of yi0] {};

    \node[state] (xi1) [left=\cspace of xi0] {};
    \node[state] (yi1) [below=       of xi1] {};
    \node[state] (zi1) [below=       of yi1] {};

    \node[state] (xi2) [left=\cspace of xi1] {};
    \node[state] (yi2) [below=       of xi2] {};
    \node[state] (zi2) [below=       of yi2] {};

    \node[state] (xi3) [left=\cspace of xi2] {};
    \node[state] (yi3) [below=       of xi3] {};
    \node[state] (zi3) [below=       of yi3] {};

    \node (zr0) [left={\dimexpr\cspace/3} of yi0] {?};
    \node (zr1) [left={\dimexpr\cspace/3} of xi1] {?};
    \node (zr2) [left={\dimexpr\cspace/3} of zi2] {?};

    \path[->, font=\normalsize]
    (xi0) edge node {} (xi1)
    (yi0) edge node {} (zi1)
    (zr0) edge node {} (yi1)

    (xi1) edge node {} (yi2)
    (zi1) edge node {} (zi2)
    (zr1) edge node {} (xi2)

    (yi2) edge node {} (yi3)
    (zi2) edge node {} (xi3)
    (zr2) edge node {} (zi3)
    ;

    \node[right={\dimexpr\cspace/8} of zi1, yshift=-20pt]
         {$\mat{B}_{y, z}$};
    \node[right={\dimexpr\cspace/8} of zi2, yshift=-20pt]
         {$\mat{B}_{x, y}$};
    \node[right={\dimexpr\cspace/8} of zi3, yshift=-20pt]
         {$\mat{B}_{z, x}$};
  \end{tikzpicture}%
  \hfill\hfill
  \caption{Effect of applying $\mat{F}_{x, y}$, where the left
    (resp.\ right) diagram depicts the case where $\mat{A}$ is a
    transfer (resp.\ copy) matrix. An edge from counter $s$ to counter
    $t$ represents operation $s \leftarrow t$. Filled nodes indicate
    counters that necessarily hold $0$. Symbol ``?'' stands for an
    integer whose value is irrelevant and depends on $\mat{A}$ and the
    counter values.}\label{fig:swap}
\end{figure*}

  Intuitively, $\mat{B}_{s, t}$ moves the contents from some source
  counter $s$ to some target counter $t$, and $\mat{F}_{x, y}$
  implements a swap in three steps using an auxiliary counter $z$ as
  depicted in Figure~\ref{fig:swap}. In the case where $\mat{A}$ is a
  transfer matrix, $\mat{B}_{s, t}$ resets $s$, provided that $t$ held
  value~$0$.

  Let us consider the case where $\mat{A}$ is a transfer matrix.  From
  the definition of $\mat{B}_{s, t}$, it can be shown that for every
  $s, t, u \in X'$ such that $s \neq t$ and $u \not\in \{s, t\}$, the
  following holds:
  \begin{enumerate}[(i)]
  \item $\mat{B}_{s, t} \cdot \vec{e}_s = \vec{e}_t$,
    and\label{itm:src:tgt}
    
  \item $\mat{B}_{s, t} \cdot \vec{e}_u =
    \vec{e}_u$.\label{itm:other}
  \end{enumerate}

  Let us show that we $0$-implement swaps, so let \[V_X \defeq
  \left\{\vec{v} \in \Z^{d'} : \bigwedge_{j \not\in X}
  \vec{v}(j)=0\right\}.\] Let $\vec{v} \in V_X$ and let $x, y \in X$
  be such that $x \neq y$. By definition of $V_X$, $\vec{v} = \sum_{j \in X}
  \vec{v}(j) \cdot \vec{e}_j$.~Let \[\vec{v}'
  \defeq \sum_{j \in X \setminus \{x, y\}} \vec{v}(j) \cdot
  \vec{e}_j.\] Items~\ref{itm:op:f}, \ref{itm:op:untouched}
  and~\ref{itm:op:closure} of Definition~\ref{def:op:impl} are satisfied
  since we obtain the following by applications of \ref{itm:src:tgt}
  and~\ref{itm:other}:
  \begin{align*}
    \mat{F}_{x, y} \cdot \vec{v}
    &=
    \mat{F}_{x, y} \cdot \left(\vec{v}(x) \cdot \vec{e}_x +
    \vec{v}(y) \cdot \vec{e}_y\right) + \mat{F}_{x, y} \cdot \vec{v}'
    && \text{(by def.\ of $\vec{v}'$)} \\
    &= \mat{F}_{x, y} \cdot \left(\vec{v}(x) \cdot \vec{e}_x +
    \vec{v}(y) \cdot \vec{e}_y\right) + \vec{v}'
    && \text{(by~\ref{itm:other} and def.\ of $\mat{F}_{x, y}$)} \\
    &= \mat{B}_{z, x} \cdot \mat{B}_{x, y} \cdot \mat{B}_{y, z} \cdot
    (\vec{v}(x) \cdot \vec{e}_x + \vec{v}(y)
    \cdot \vec{e}_y) + \vec{v}'
    && \text{(by def.\ of $\mat{F}_{x, y}$)} \\
    &= \mat{B}_{z, x} \cdot \mat{B}_{x, y} \cdot (\vec{v}(x)
    \cdot \vec{e}_x + \vec{v}(y) \cdot \vec{e}_z) + \vec{v}'
    && \text{(by~\ref{itm:other} and~\ref{itm:src:tgt})} \\
    &= \mat{B}_{z, x} \cdot (\vec{v}(x) \cdot \vec{e}_y +
    \vec{v}(y) \cdot \vec{e}_z) + \vec{v}'
    && \text{(by~\ref{itm:src:tgt} and~\ref{itm:other})} \\
    &= \vec{v}(x) \cdot \vec{e}_y + \vec{v}(y) \cdot
    \vec{e}_x + \vec{v}'
    && \text{(by~\ref{itm:other} and~\ref{itm:src:tgt})}.
  \end{align*}

  The proof of~\ref{itm:src:tgt} and~\ref{itm:other}, and the
  similar proof for the case where $\mat{A}$ is a copy matrix, are
  analogous (see Appendix~\ref{sec:appendix}).
\end{proof}

\begin{prop}\label{prop:pspace:nonreset}
  $\zreach{\C}$ is \PSPACE-hard if $\C$ contains a matrix with entries
  from $\{0, 1\}$ and a nonzero entry outside of its main diagonal.
\end{prop}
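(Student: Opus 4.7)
The plan is to mimic the LBA-membership reduction used in the proof of Proposition~\ref{prop:pspace:mone}, but to replace sign flips with swaps (available via Proposition~\ref{prop:exist:swaps}) and correspondingly redesign the bit encoding and the acceptance gadget. Fix an LBA $\mathcal{A}$ over tape alphabet $\{0,1\}$, with initial state $p_{\text{init}}$, accepting state $p_{\text{acc}}$, and input $w \in \{0,1\}^k$.

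The affine VASS $\V$ uses, for every tape cell $j \in [k]$, three counters $x_j, y_j^0, y_j^1$, together with whatever auxiliary counters the implemented swap matrices require. The intended invariants are $x_j = y_j^0 + y_j^1$ throughout the run, and additionally $y_j^0 \cdot y_j^1 = 0$ whenever the simulation of cell $j$ has been faithful so far; when the pair $(y_j^0, y_j^1)$ is in this ``clean'' form, its nonzero slot encodes the current bit of cell $j$. We initialize $x_j = 1$, $y_j^{w_j} = 1$, $y_j^{1-w_j} = 0$. A transition $\delta(p, a) = (p', b, \text{Right})$ at cell $i$ is simulated, in analogy with Figure~\ref{fig:gadget:trans}, by three steps: increment $x_i$; then increment $y_i^a$; finally, if $a \neq b$, apply the swap matrix $\mat{F}_{y_i^a, y_i^b}$ from Proposition~\ref{prop:exist:swaps}. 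All three matrices belong to $\C$. A direct case analysis shows that the linear invariant is preserved unconditionally, whereas the clean form is preserved iff the claimed read letter $a$ matches the actual bit in the active slot; otherwise, after step~2 both $y_i^0, y_i^1$ become strictly positive, and since the only subsequent operations on $y$-counters are further increments and swaps, neither can ever be driven back to zero.

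The acceptance gadget mirrors Figure~\ref{fig:gadget:acc}: from any control-state of the form $q_{p_{\text{acc}}, i}$, we enter a cascade of control-states $s_1, s_1', s_2, s_2', \ldots, s_k, s_k', r$; for each $j$, two parallel transitions go from $s_j$ to $s_j'$, one using the identity matrix and one using $\mat{F}_{y_j^0, y_j^1}$; at $s_j'$ a self-loop applies the additive update $(x_j, y_j^0) \leftarrow (x_j - 1, y_j^0 - 1)$ (no other counter is touched). The target configuration is $r(\vec{0})$. Walking through the $j$-th stage shows that the counters $x_j, y_j^0, y_j^1$ can all be zeroed before reaching $s_{j+1}$ iff there exists some $b_j \in \{0,1\}$ with $y_j^{b_j} = x_j$ and $y_j^{1-b_j} = 0$; by the linear invariant this reduces to $y_j^0 \cdot y_j^1 = 0$, i.e., cell $j$ is clean. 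Hence $r(\vec{0})$ is reachable iff every cell is clean at acceptance, iff the simulation has been faithful and $\mathcal{A}$ accepts $w$.

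Auxiliary counters arising from the swap implementation are handled exactly as in the proof of Proposition~\ref{prop:pspace:mone}: in the pseudo-transfer ($0$-implementation) case they remain zero throughout, already matching the target; in the pseudo-copy ($?$-implementation) case we append self-loops $(r, \mat{I}, \pm \vec{e}_j, r)$ at $r$ to clear them. The only delicate point is to argue that the cascade cannot repair a contaminated cell even under integer semantics; this holds because self-loop counts must be nonnegative and, once a bit $b_j$ is committed by the choice of transition from $s_j$ to $s_j'$, no subsequent transition in the cascade touches $y_j^{1-b_j}$, so a strictly positive $y_j^{1-b_j}$ cannot be driven to zero.
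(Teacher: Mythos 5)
Your construction is correct, but it takes a more self-contained route than the paper. The paper's own proof of this proposition is essentially a citation: it invokes the result of~\cite{BHMR19} that $\Z$-reachability is \PSPACE-hard for affine VASS with \emph{native} swaps (itself proved by an LBA reduction), and then only explains how to replace native swaps by the swaps implemented via Proposition~\ref{prop:exist:swaps}, adjusting the reachability query with auxiliary counters exactly as in Proposition~\ref{prop:pspace:mone}. You instead re-derive the hardness from scratch: you redesign the cell encoding (two unary slots $y_j^0, y_j^1$ with the linear invariant $x_j = y_j^0 + y_j^1$ and the ``clean'' condition $y_j^0 \cdot y_j^1 = 0$), argue that a wrong guess of the read letter contaminates a cell irreversibly because the simulation phase only ever increments and swaps the $y$-counters (so both slots stay strictly positive), and verify cleanliness with a cascade whose committed slot $y_j^{1-b_j}$ is frozen after stage $j$. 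This mirrors the skeleton of the sign-flip reduction in Proposition~\ref{prop:pspace:mone} but adapts it to swaps, which is precisely what~\cite{BHMR19} does externally; your version buys independence from that reference at the cost of redoing the encoding and invariant analysis, while the paper's version is shorter and isolates the only genuinely new ingredient, namely that emulated ($0$- or $?$-implemented) swaps suffice and that the auxiliary-counter clean-up loops at the target state cannot mask an unfaithful simulation. One point worth stating explicitly in your write-up is that all $x_j, y_j^0, y_j^1$ remain nonnegative throughout the simulation phase (they start in $\{0,1\}$ and are only incremented or swapped), since both the ``contamination is permanent'' claim and the equivalence between cleanliness and the zeroing condition in the cascade rely on it; as written you use it implicitly, and it does hold, so this is a presentational remark rather than a gap.
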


\begin{proof}
  It is shown in~\cite{BHMR19} that $\Z$-reachability is \PSPACE-hard
  for affine VASS with swaps, using a reduction from the membership
  problem for linear bounded automata.

  Here, we may not have swaps as a ``native'' operation. However, by
  Proposition~\ref{prop:exist:swaps}, class $\C$ implements
  swaps. Thus, as in the proof of Proposition~\ref{prop:pspace:mone},
  if the reachability question is of the
  form \[p(\vec{u}) \trans{*}_\Z q(\vec{v}),\] then it must be changed
  to \[p(\vec{u}, \vec{0}) \trans{*}_\Z q(\vec{v}, \vec{0}).\]
  Moreover, if the class $\C$ $?$-implements swaps, then new
  transitions must be introduced to allow auxiliary counters to be set
  back to $0$. Recall that under $?$-implementation, there is no
  requirement on the value of the auxiliary counters, hence these new
  transitions do not interfere with the emulation of swaps.
\end{proof}

We now proceed to prove the main result of this subsection, namely
Theorem~\ref{thm:trichotomy}~\ref{itm:pspace}:

\begin{proof}[Proof of Theorem~\ref{thm:trichotomy}~\ref{itm:pspace}]
  Let $\monoid_k \defeq \C \cap \Z^{k \times k}$ for every $k \geq 1$.
  Theorem~7 of~\cite{BHM18} shows that $\zreach{\C}$ belongs
  to \PSPACE\ if each $\monoid_k$ is a finite monoid of at most
  exponential norm and size in $k$. Let us show that this is the
  case. First, since $\C$ is a class that contains only
  pseudo-transfer (reps.\ pseudo-copy) matrices, and since the product
  of two such matrices remains so, $\monoid_k$ is a monoid which is
  finite as $\monoid_k \subseteq \{-1, 0, 1\}^{k \times k}$. Moreover,
  by definitions of pseudo-transfer and pseudo-copy matrices, each
  such matrix can be described by cutting it into $k$ lines and
  specifying for each line either the position of the unique nonzero
  entry (which is $-1$ or $1$), or the lack of such entry. Therefore,
  for every $k \geq 1$, it is the case that $\norm{\monoid_k} \leq 1$
  and
  \begin{align*} |\monoid_k|
    &\leq (2k + 1)^k
    \leq (4k)^k
    = 2^{2k + k \log k}
    \leq 2^{\mathrm{poly}(k)}.
  \end{align*}

  It remains to show \PSPACE-hardness. By assumption, $\C$ contains a
  nonreset matrix $\mat{A}$. Since $\norm{\C} \leq 1$, we have
  $\norm{\mat{A}} = 1$ as no class can be such that $\norm{\C} =
  0$. If $\mat{A}$ contains an entry equal to $-1$, then we are done
  by Proposition~\ref{prop:pspace:mone}. Otherwise, $\mat{A}$ only has
  entries from $\{0, 1\}$, and hence we are done by
  Proposition~\ref{prop:pspace:nonreset}.
\end{proof}

\subsection{Undecidability}

In this subsection, we first show that any class $\C$, that does not
satisfy the requirements for $\zreach{\C} \in
\{\text{\NP-complete}, \allowbreak \text{\PSPACE-complete}\}$, must be
such that $\norm{\C} \geq 2$. We then show that this is sufficient to
mimic doubling, \ie, the operation $x \mapsto 2x$, even if $\C$ does
not contain a doubling matrix. In more details, we will (a)~construct
a matrix $\mat{C}$ that provides a sufficiently fast growth; which
will (b)~allow us to derive undecidability by revisiting a reduction
from the Post correspondence problem which depends on doubling.

\begin{prop}\label{prop:row:col:two}
  Let $\C$ be a class that contains some matrices $\mat{A}$ and
  $\mat{B}$ which are respectively not pseudo-copy and pseudo-transfer
  matrices. It is the case that $\norm{\C} \geq 2$.
\end{prop}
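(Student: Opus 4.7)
The plan is a sign-based case analysis. If either $\mat{A}$ or $\mat{B}$ already contains an entry of absolute value at least~$2$, then $\norm{\C} \geq 2$ immediately, so I henceforth assume that all entries of both matrices lie in $\{-1,0,+1\}$. Failing the pseudo-copy condition, $\mat{A}$ contains a row~$i$ whose nonzero entries $a_r \defeq \mat{A}_{i,j_r} \in \{-1,+1\}$ occur at distinct columns $j_1, j_2$; symmetrically, $\mat{B}$ contains a column~$\ell$ with nonzero entries $b_r \defeq \mat{B}_{k_r,\ell} \in \{-1,+1\}$ at distinct rows $k_1, k_2$.

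My first step is to work in a common extended dimension~$N$ chosen large enough to accommodate several fresh auxiliary indices. Conjugating $\ext{\mat{B}}{N-\size{\mat{B}}}$ by a permutation $\sigma$ that identifies $\{k_1,k_2\}$ with $\{j_1,j_2\}$ and inspecting the entry of $\ext{\mat{A}}{N-\size{\mat{A}}} \cdot \perm{\ext{\mat{B}}{N-\size{\mat{B}}}}{\sigma}$ at row~$i$, column~$\sigma(\ell)$ yields either $a_1b_1 + a_2b_2$ or $a_1b_2 + a_2b_1$, depending on which of the two possible $\sigma$'s is chosen. Since
\[
(a_1b_1 + a_2b_2) + (a_1b_2 + a_2b_1) \;=\; (a_1+a_2)(b_1+b_2),
\]
at least one of the two quantities has absolute value~$2$ whenever this product is nonzero. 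Moreover, when $a_1+a_2=0$ and $b_1+b_2=0$ simultaneously, $a_1b_1 + a_2b_2 = 2 a_1 b_1 \in \{-2,+2\}$. Either situation settles $\norm{\C} \geq 2$.

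What remains is the mixed subcase in which \emph{exactly one} of $a_1+a_2$, $b_1+b_2$ vanishes. An analogous construction obtained by iterating $\mat{A}$ along rows (instead of $\mat{B}$ along columns) handles the mirror situation, so I focus on $a_1 = a_2 = +1$ with $b_1 = +1$, $b_2 = -1$. My idea is to realise in~$\C$ a product $\mat{B}_1 \cdot \mat{B}_2$ of two conjugations of $\ext{\mat{B}}{\cdot}$ whose column at a fresh position~$c$ equals $\vec{e}_{j_1} + \vec{e}_{j_2} - \vec{e}_q$ for a fresh auxiliary row~$q$. I will pick $\sigma_2$ so that $\mat{B}_2$'s bad column, placed at column~$c$, reads $\vec{e}_{j_1} - \vec{e}_Y$ for a fresh auxiliary position~$Y$, and pick $\sigma_1$ so that $\mat{B}_1$'s bad column sits at position~$Y$ and reads $\vec{e}_q - \vec{e}_{j_2}$, while column~$j_1$ of $\mat{B}_1$ is the identity column $\vec{e}_{j_1}$ inherited from the extension. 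Then
\[
\mat{B}_1\mat{B}_2 \cdot \vec{e}_c \;=\; \mat{B}_1(\vec{e}_{j_1} - \vec{e}_Y) \;=\; \vec{e}_{j_1} - (\vec{e}_q - \vec{e}_{j_2}) \;=\; \vec{e}_{j_1} + \vec{e}_{j_2} - \vec{e}_q.
\]
Left-multiplying by an appropriate conjugation $\perm{\ext{\mat{A}}{\cdot}}{\tau}$ of~$\mat{A}$ whose bad row carries $+1$ entries exactly at columns~$j_1, j_2$ and a zero at column~$q$ produces an entry of value $(+1)(+1) + (+1)(+1) = 2$, establishing $\norm{\C} \geq 2$.

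The hard part will be the index-level bookkeeping: realising $\sigma_1, \sigma_2, \tau$ as consistent bijections on the single index set~$[N]$ that simultaneously meet the three prescribed output constraints per matrix (bad-column position, positive-entry row, negative-entry row), and verifying that the identity-block columns of the extensions do supply the required $\vec{e}_{j_1}$ at the correct place. This rests on choosing $N$ large enough that $q, Y, c$ can be introduced as indices disjoint from the supports of $\mat{A}$ and~$\mat{B}$, and on handling degenerate cases---for instance when $\ell$ coincides with one of $k_1, k_2$, certain outputs of $\sigma_i$ are forced to agree, but the construction remains valid.
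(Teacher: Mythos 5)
Your proposal is correct in outline and draws on the same toolbox as the paper---extending matrices by fresh coordinates, conjugating by permutations to align or separate supports, and multiplying inside $\C$---but it decomposes the argument differently. The paper first proves a standalone lemma (Lemma~\ref{lem:same:sign} in the appendix): if $\C$ has a matrix with two nonzero entries in a row (resp.\ column), it has one where two such entries share a sign, obtained by multiplying the matrix with a conjugated copy of itself so that a negative entry gets squared; it then performs a single cross product $\ext{\mat{A}}{d}\cdot\perm{\ext{\mat{B}}{d}}{\sigma}$, with all counters of $\mat{B}$ other than the two shared ones renamed to fresh indices, yielding the entry $a\cdot a'+b\cdot b'$ of absolute value at least $2$. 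You do the cross product first and settle the signs via $(a_1b_1+a_2b_2)+(a_1b_2+a_2b_1)=(a_1+a_2)(b_1+b_2)$, which kills every case except the mixed one, and only there do you use the squaring/composition idea (two conjugated copies of $\ext{\mat{B}}{\cdot}$ multiplied together); a small bonus of your arrangement is that you repair a column by composing on the left and reading a column of the product, avoiding the transposition detour the paper needs to transfer its row lemma to columns. Two points of your deferred bookkeeping must be made explicit, though. First, in the opening cross product, the claim that the inspected entry equals exactly $a_1b_1+a_2b_2$ only holds if the permutation pushes every other index of $\mat{B}$'s block (and every other nonzero column of row $i$) onto fresh coordinates, exactly as the paper's choice $\sigma=\prod_{\ell\in[d]\setminus\{j,k\}}(\ell;\ell+d)$ does; otherwise stray products can corrupt the entry. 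Second, in the degenerate case $\ell\in\{k_1,k_2\}$ the constraints you impose on $\sigma_1,\sigma_2$ are inconsistent as stated (e.g.\ if $\ell=k_1$, then $\sigma_2(c)=\ell$ and $\sigma_2(j_1)=k_1$ force $c=j_1$), so the bad column must be placed at the diagonal position itself; the construction does survive these forced identifications (one checks that the resulting column is still $\vec{e}_{j_1}+\vec{e}_{j_2}-\vec{e}_q$ up to noise on fresh rows), but your ``the construction remains valid'' is an assertion, not a proof, and this is precisely the situation the paper isolates as a separate case ($j=i$) in Lemma~\ref{lem:same:sign}; it deserves the same explicit treatment in your write-up.
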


\begin{proof}
  By assumption, $\mat{A}$ and $\mat{B}$ respectively have a row and a
  column with at least two nonzero entries. We make use of the
  following lemma shown in Appendix~\ref{sec:appendix}:\smallskip
  \begin{quote}
    if $\C$ contains a matrix which has a row (resp.\ column) with at
    least two nonzero entries, then $\C$ also contains a matrix which
    has a row (resp.\ column) with at least two nonzero entries with
    the \emph{same sign}.
  \end{quote}\smallskip
  Since $\C$ is a class, we can assume that $\size{\mat{A}} =
  \size{\mat{B}} = d$ for some $d \geq 2$, as otherwise the smallest
  matrix can be enlarged. Thus, there exist $i, i', j, k \in [d]$ and
  $a, b, a', b' \neq 0$ such that:
  \begin{itemize}
  \item $j \neq k$;

  \item $\mat{A}_{i, j} = a$ and $\mat{A}_{i, k} = b$;

  \item $\mat{B}_{j, i'} = a'$ and $\mat{B}_{k, i'} = b'$; and
    
  \item $a > 0 \iff b > 0$ and $a' > 0 \iff b' > 0$.
  \end{itemize}
  Note that the reason we can assume $\mat{A}$ and $\mat{B}$ to share
  counters $j$ and $k$ is due to $\C$ being closed under counter
  renaming.

  We wish to obtain a matrix with entry \[\mat{A}_{i, j} \cdot
  \mat{B}_{j, i'} + \mat{A}_{i, k} \cdot \mat{B}_{k, i'} = a \cdot a'
  + b \cdot b'.\] We cannot simply pick $\mat{A} \cdot \mat{B}$ as
  $(\mat{A} \cdot \mat{B})_{i, i'}$ may differ from this value due to
  other nonzero entries. Hence, we rename all counters of $\mat{B}$,
  except for $j$ and $k$, with fresh counters. This way, we avoid
  possible overlaps and we can select precisely the four desired
  entries. More formally, let $\mat{A}' \defeq \ext{\mat{A}}{d}$,
  $\mat{B}' \defeq \ext{\mat{B}}{d}$ and $\mat{C} \defeq \mat{A}'
  \cdot \perm{\mat{B}'}{\sigma}$, where $\sigma \colon [2d] \to [2d]$
  is the permutation: \[ \sigma \defeq \prod_{\ell \in [d] \setminus
    \{j, k\}} (\ell; \ell + d).  \] Let $i'' \defeq i' + d$ if $i'
  \not\in \{j, k\}$ and $i'' \defeq i'$ otherwise. We have:
  \begin{align*}
    \mat{C}_{i, i''}
    &= \sum_{\ell \in [2d]} \mat{A}'_{i, \ell} \cdot
    \mat{B}'_{\sigma(\ell), i'}    
    && \text{(by def.\ of $\mat{C}$ and by $\sigma(i'') = i'$)} \\
    &= \sum_{\ell \in [d] \text{ s.t. } \sigma(\ell) \in [d]}
    \mat{A}'_{i, \ell} \cdot \mat{B}'_{\sigma(\ell), i'}
    && \text{(by def.\ of $\mat{A}'$ and $\mat{B}'$, and by $i, i' \in
      [d]$)} \\
    &= \mat{A}'_{i, j} \cdot \mat{B}'_{j, i'} + \mat{A}'_{i, k} \cdot
    \mat{B}'_{k, i'} \\
    &= a \cdot a' + b \cdot b'.
  \end{align*}
  
  Since $a$ and $b$ (resp.\ $a'$ and $b'$) have the same sign, and
  since $a, b, a', b' \neq 0$, we conclude that $|\mat{C}_{i, i''}|
  \geq 2$ and consequently that $\norm{\C} \geq \norm{\mat{C}} \geq
  2$.
\end{proof}

To avoid cumbersome subscripts, we write $\vec{e}$ for $\vec{e}_1$ in
the rest of the section. Moreover, let $\lambda_\ell(\mat{C}) \defeq
(\mat{C}^\ell \cdot \vec{e})(1)$ for every matrix $\mat{C}$ and
$\ell \in \N$.

The following technical lemma will be key to mimic doubling. It shows
that, from any class of norm at least $2$, we can extract a matrix
with sufficiently fast growth.

\begin{lem}\label{lem:doubling}
  For every class of matrices $\C$ such that $\norm{\C} \geq 2$, there
  exists $\mat{C} \in \C$ with $\lambda_{n+1}(\mat{C}) \geq
  2 \cdot \lambda_n(\mat{C})$ for every $n \in \N$.
\end{lem}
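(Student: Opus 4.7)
The hypothesis $\norm{\C}\geq 2$ supplies $\mat{M}\in\C$ and indices $a,b$ with $|\mat{M}_{a,b}|=c\geq 2$; set $d \defeq \size{\mat{M}}$. My plan is to construct $\mat{C}\in\C$ satisfying the stronger property $\mat{C}\cdot\vec{e} = k\vec{e}$ for some integer $k\geq 2$. Once such a $\mat{C}$ is in hand, $\mat{C}^n\vec{e} = k^n\vec{e}$ yields $\lambda_n(\mat{C})=k^n$, and the recurrence $\lambda_{n+1}(\mat{C}) = k\,\lambda_n(\mat{C}) \geq 2\,\lambda_n(\mat{C})$ holds trivially for all $n\in\N$. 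The whole proof reduces to engineering a ``clean'' $\mat{C}$ of this form using only the three closure operations of $\C$ (conjugation by permutation, extension, and multiplication).

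The key construction is a ``bridge'' product of two conjugated-and-extended copies of $\mat{M}$. Pick a sufficiently large ambient dimension $D$ (any $D\geq 2d-2$ is enough, and we pad further as needed), set $\mat{N} \defeq \ext{\mat{M}}{D-d}$, and choose permutations $\sigma_1,\sigma_2$ so that, in $\mat{N}_1\defeq\perm{\mat{N}}{\sigma_1}$, the $(a,b)$-entry of $\mat{M}$ lies at position $(1,\beta)$, and in $\mat{N}_2\defeq\perm{\mat{N}}{\sigma_2}$ it lies at position $(\beta,1)$, where $\beta$ is a shared ``bridge'' coordinate; the other images of $[d]\setminus\{a,b\}$ under $\sigma_1$ and under $\sigma_2$ are routed into scratch coordinates taken disjointly from each other and from $\{1,\beta\}$, which is feasible once $D$ is large enough. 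In the product $\mat{N}_1\mat{N}_2$, the sum $(\mat{N}_1\mat{N}_2)_{1,1} = \sum_k (\mat{N}_1)_{1,k}(\mat{N}_2)_{k,1}$ picks up the key contribution $c\cdot c = c^2 \geq 4$ from the index $k=\beta$, together with a controllable residual contribution $\mat{M}_{a,a}\mat{M}_{b,b}$ from the shared coordinate $k=1$ and some entries coming from column~$1$ of $\mat{N}_2$ restricted to scratch coordinates of $\sigma_2$ not shared with $\sigma_1$.

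The main obstacle is that conjugation by a permutation acts on rows and columns simultaneously, so one cannot independently decouple where rows and where columns of a single copy of $\mat{M}$ end up, which is exactly why the parasitic terms above survive the two-factor product. The remedy is a short case analysis and, when needed, an additional factor. When $a=b$, a single conjugate places $\mat{M}_{a,a}$ at position $(1,1)$ inside a sufficiently-extended matrix, and squaring handles a potential negative sign. When $a\neq b$, we perform the bridge construction above and then multiply on the left (respectively right) by a third carefully chosen $\perm{\mat{N}}{\sigma_3}$ whose role is to project column $1$ onto $\vec{e}$: its $\sigma_3$ is built to annihilate the parasitic rows of column $1$ by routing them into a fresh scratch coordinate while leaving the $c^2$ contribution intact, and a further squaring then replaces $k$ by $k^2$ so that the residual $\mat{M}_{a,a}\mat{M}_{b,b}$ becomes negligible relative to the $c^{4}$ main term, ensuring $\mat{C}\vec{e} = k\vec{e}$ with $k\geq 2$. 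The remaining verification is the combinatorial check that for $D$ large enough the required permutations $\sigma_1,\sigma_2,\sigma_3$ exist; this reduces to a straightforward bijection count on a set of disjoint scratch coordinates.
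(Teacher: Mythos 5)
Your plan hinges on producing a matrix $\mat{C} \in \C$ with the exact eigenvector property $\mat{C} \cdot \vec{e} = k\vec{e}$ for some $k \geq 2$, and this is where the proposal breaks: such a matrix need not exist in $\C$ at all, so no amount of bridging and permutation bookkeeping can deliver it. Concretely, let $\C$ be the class generated by $\mat{M} \defeq \begin{pmatrix} 1 & 1 \\ 1 & 1 \end{pmatrix}$. Since $\mat{M}^2 = 2\mat{M}$, we have $\norm{\C} \geq 2$, so the lemma applies. Every member of $\C$ is (up to identity matrices) a product of generators $\perm{\ext{\mat{M}}{n}}{\sigma}$, and each such generator acts on a vector by $v_i, v_j \leftarrow v_i + v_j$ on some pair of coordinates, leaving the rest unchanged. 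Starting from $\vec{e}$, all coordinates stay nonnegative and are non-decreasing under every generator; hence if the final vector is $k\vec{e}$, every coordinate other than the first must remain $0$ throughout, which forces every generator in the product to avoid coordinate $1$ (touching it would set some other coordinate to a value $\geq 1$ that can never return to $0$), and therefore the first coordinate stays $1$. So no $\mat{C} \in \C$ satisfies $\mat{C}\vec{e} = k\vec{e}$ with $k \geq 2$, even though the lemma itself holds for this class (e.g.\ via $\mat{M}^2$, for which $\mat{M}^2\vec{e} = (2,2)^\top \neq 2\vec{e}$). The specific steps that fail in your construction are the claims that a third conjugate can ``annihilate the parasitic rows of column $1$ by routing them into a fresh scratch coordinate'' and that squaring makes the residual ``negligible'': routing noise into scratch coordinates relocates it but does not zero it, so those rows still appear as nonzero entries of the first column of the product; and ``negligible'' is not ``zero'', which is what your reduction to $\lambda_n = k^n$ requires. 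The same problem already kills your easy case $a = b$: placing $c$ at position $(1,1)$ does not clear the rest of column $1$ (in the example above, $\mat{M}^2$ has a $2$ at position $(2,1)$).

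Once you retreat from exact cancellation to ``the residual is small'', you are facing the real difficulty of the lemma, which your proposal does not address: after one application the vector is no longer a multiple of $\vec{e}$, so at the next application the residual entries feed back into coordinate $1$, and you need a quantitative invariant that survives arbitrarily many iterations. This is exactly what the paper's proof supplies. It takes $\mat{C}$ to be a product of $m$ permutation-conjugates of $\ext{\mat{A}}{|Y|}$, each step dumping the noise into a \emph{fresh} block of auxiliary counters $y_{i,2},\ldots,y_{i,d}$, and it maintains the invariant set $V$ of vectors with $|\vec{v}(y_{i,j})| \leq (3c/4)^i \cdot \vec{v}(x_1)/(4d)$; choosing $m$ with $(3c/4)^m \geq 8cd$ guarantees both $\vec{v}_m(x_1) \geq 2\vec{v}_0(x_1)$ and $\vec{v}_m \in V$, so the construction composes with itself even though $\mat{C}\vec{e}$ is never a clean multiple of $\vec{e}$. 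To repair your argument you would need to replace the eigenvector goal by such a relative-error invariant and prove it is preserved under iteration; as written, the proposal's central reduction is to a statement that is false in general.
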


\begin{proof}
  Let $\mat{A} \in \C$ be a matrix with some entry $c$ such that $|c|
  \geq 2$. We can assume that $c \geq 2$. Indeed, if it is negative,
  then we can multiply $\mat{A}$ by a suitable permutation of itself
  to obtain an entry equal to $c \cdot c$. We can further assume that
  $c$ is the largest positive coefficient occurring within $\mat{A}$,
  and that it lies on the first column of $\mat{A}$, \ie, $\mat{A}_{k,
    1} = c$ for some $k \in [d]$ where $d \defeq \size{\mat{A}}$. We
  consider the case where $k = 1$. The case where $k \neq 1$ will be
  discussed later.

  For readability, we rename counters $\{1, 2, \ldots, d\}$
  respectively by $X \defeq \{x_1, x_2, \ldots, x_d\}$. Note that
  $(\mat{A} \cdot \vec{e})(x_1) = c \geq 2 \cdot \vec{e}(x_1)$ as
  desired. However, vector $\mat{A} \cdot \vec{e}$ may now hold
  nonzero values in counters $x_2, \ldots, x_d$. Therefore, if we
  multiply this vector by $\mat{A}$, some ``noise'' will be added to
  counter $x_1$. If this noise is too large, then it may cancel the
  growth of $x_1$ by $\approx c$. We address this issue by introducing
  extra auxiliary counters replacing $x_2, \ldots, x_d$ at each
  ``iteration''. Of course, we cannot have infinitely many auxiliary
  counters. Fortunately, after a sufficiently large number $m$ of
  iterations, the auxiliary counters used at the first iteration
  will contain sufficiently small noise so that the process can
  restart from there.

  More formally, let $\mat{A}' \defeq \ext{\mat{A}}{|Y|}$ where $Y
  \defeq \{y_{i, j} : 0 \leq i < m, j \in [2, d]\}$ is the set of
  auxiliary counters, and $m \geq 1$ is a sufficiently large constant
  whose value will be picked later. Let $V$ be the set of vectors
  $\vec{v} \in \Z^{|X| + |Y|}$ satisfying $\vec{v}(x_1) > 0$
  and \[|\vec{v}(y_{i, j})| \leq \left(\frac{3c}{4}\right)^i \cdot
  \frac{\vec{v}(x_1)}{4d} \text{ for every } y_{i, j} \in Y.\]

  Let us fix some vector $\vec{v}_0 \in V$. For every $0 \leq i < m$,
  let $\mat{B}_i \defeq \perm{\mat{A}'}{\sigma_i}$ and $\vec{v}_{i+1}
  \defeq \mat{B}_i \cdot \vec{v}_i$ where $\sigma_i$ is the
  permutation \[\sigma_i \defeq \prod_{j \in [2, d]} (x_j; y_{i,
    j}).\] We claim that: \[\vec{v}_m(x_1) \geq 2 \cdot \vec{v}_0(x_1)
  \text{ and } \vec{v}_m \in V.\] The validity of this claim proves
  the lemma. Indeed, $\mat{C} \cdot \vec{v}_0 = \vec{v}_m$ where
  $\mat{C} \defeq \mat{B}_{m - 1} \cdots \mat{B}_1 \cdot
  \mat{B}_0$. Hence, an application of $\mat{C}$ yields a vector whose
  first component has at least doubled. Since $\vec{e} \in V$ and the
  resulting vector also belong to $V$, this can be iterated
  arbitrarily many times.

  Let us first establish the following properties for every $0 \leq i
  < m$ and $j \in [2, d]$:
  \begin{enumerate}[(a)]
  \item $\vec{v}_i(y_{i, j}) = \vec{v}_{i-1}(y_{i, j}) = \cdots =
    \vec{v}_0(y_{i, j})$ and $\vec{v}_{i+1}(y_{i, j}) =
    \vec{v}_{i+2}(y_{i, j}) = \cdots = \vec{v}_m(y_{i,
      j})$;\label{itm:v:basic}
    
  \item $\vec{v}_{i+1}(x_1) \in \left[\frac{3c}{4} \cdot
    \vec{v}_i(x_1), \frac{5c}{4} \cdot \vec{v}_i(x_1)\right]$;
    and\label{itm:v:x1:bound}
    
  \item $|\vec{v}_{i+1}(y_{i, j})| \leq 2c \cdot
    \vec{v}_i(x_1)$.\label{itm:v:yij:bound}
  \end{enumerate} \smallskip
  Property~\ref{itm:v:basic}, which follows from the definition of
  $\mat{B}_i$, essentially states that the contents of counter $y_{i,
    j}$ is only altered from $\vec{v}_i$ to
  $\vec{v}_{i+1}$. Properties~\ref{itm:v:x1:bound}
  and~\ref{itm:v:yij:bound} bound the growth of the counters in terms
  of $x_1$. Let us prove these two latter properties by induction on
  $i$.

  By definition of $\vec{v}_{i+1}$, we have $\vec{v}_{i+1}(x_1) = c
  \cdot \vec{v}_i + \delta$ where \[\delta \defeq \sum_{\substack{z
      \neq x_1}} (\mat{B}_i)_{x_1, z} \cdot \vec{v}_i(z).\] Therefore,
  $\vec{v}_{i+1}(x_1) \in [c \cdot \vec{v}_i(x_1) - |\delta|, c \cdot
    \vec{v}_i(x_1) + |\delta|]$, and hence
  property~\ref{itm:v:x1:bound} follows from:
  \begin{align*}
    |\delta|
    &\leq \sum_{\substack{z \in X \cup Y \\ z \neq x_1}} |(\mat{B}_i)_{x_1,
      z}| \cdot |\vec{v}_i(z)| \\
    &= \sum_{j \in [2, d]} |\mat{A}'_{x_1, x_j}| \cdot |\vec{v}_i(y_{i,
      j})|
    && \text{(by def.\ of $\mat{B}_i$ and $\sigma_i$)} \\
    &= \sum_{j \in [2, d]} |\mat{A}'_{x_1, x_j}| \cdot |\vec{v}_0(y_{i, j})|
    && \text{(by~\ref{itm:v:basic})} \\
    &\leq dc \cdot (3c / 4)^i \cdot \frac{\vec{v}_0(x_1)}{4d}
    && \text{(by $\vec{v}_0 \in V$ and by maximality of $c$)} \\
    &\leq dc \cdot \frac{\vec{v}_i(x_1)}{4d}
    && \text{(by $\vec{v}_i(x_1) \geq (3c/4)^i \cdot \vec{v}_0(x_1)$
      from~\ref{itm:v:x1:bound})} \\
    &= \frac{c}{4} \cdot \vec{v}_i(x_1).
  \end{align*}

  Similarly, property~\ref{itm:v:yij:bound} holds since, for every
  $j \in [2, d]$:
  \begin{align*}
    |\vec{v}_{i+1}(y_{i, j})|
    &\leq |\mat{A}'_{x_j, x_1}| \cdot \vec{v}_i(x_1) + \sum_{\ell \in
      [2, d]} |\mat{A}'_{x_j, x_\ell}| \cdot |\vec{v}_i(y_{i, \ell})|
    && \text{(by def.\ of $\mat{B}_i$ and $\sigma_i$)} \\
    &\leq c \cdot \vec{v}_i(x_1) + dc \cdot (3c / 4)^i \cdot
    \frac{\vec{v}_0(x_1)}{4d}
    && \text{(by \ref{itm:v:basic} and $\vec{v}_0 \in V$)} \\
    &\leq c \cdot \vec{v}_i(x_1) + dc \cdot \frac{\vec{v}_i(x_1)}{4d}
    && \text{(by~\ref{itm:v:x1:bound})} \\
    &\leq 2c \cdot \vec{v}_i(x_1).
  \end{align*}

  We may now prove the claim. Let $m$ be sufficiently large so that
  $(3c/4)^m \geq 8cd$. We have $\vec{v}_m(x_1) \geq (3c/4)^m \cdot
  \vec{v}_0(x_1) \geq 8cd \cdot \vec{v}_0(x_1)$
  by~\ref{itm:v:x1:bound} and definition of $m$. Hence, since $c
  \geq 2$ and $d \geq 1$, we have $\vec{v}_m(x_1) \geq 2 \cdot
  \vec{v}_0(x_1)$, which satisfies the first part of the
  claim. Moreover, the second part of the claim, namely $\vec{v}_m \in
  V$, holds since for every $y_{i, j} \in Y$, we have:
  \begin{align*}
    |\vec{v}_m(y_{i, j})|
    &= |\vec{v}_{i+1}(y_{i, j})|
    && \text{(by~\ref{itm:v:basic})} \\
    &\leq 2c \cdot \vec{v}_i(x_1)
    && \text{(by~\ref{itm:v:yij:bound})} \\
    &\leq 2c \cdot \frac{\vec{v}_m(x_1)}{(3c/4)^{m - i}}
    && \text{(by $\vec{v}_m(x_1) \geq (3c/4)^{m - i} \cdot
      \vec{v}_i(x_1)$ from~\ref{itm:v:x1:bound})} \\    
    &= 2c \cdot (3c/4)^i \cdot \frac{\vec{v}_m(x_1)}{(3c/4)^m} \\
    &\leq 2c \cdot (3c/4)^i \cdot \frac{\vec{v}_m(x_1)}{8cd}
    && \text{(by def.\ of $m$)} \\
    &= (3c/4)^i \cdot \frac{\vec{v}_m(x_1)}{4d}.
  \end{align*}

  We are done proving the lemma for the case $\mat{A}_{k, 1} = c \geq
  2$ with $k = 1$. This case is slightly simpler as $c$ lies on the
  main diagonal of $\mat{A}$ which means that $\vec{v}_{i+1}(x_1)
  \approx c \cdot \vec{v}_i(x_1)$. If $k \neq 1$, then we have
  $\vec{v}_{i+1}(x_k) \approx c \cdot \vec{v}_i(x_1)$ instead, which
  breaks composability for the next iteration. However, this is easily
  fixed by swapping the names of counters $x_k$ and~$x_1$.
\end{proof}

Let us fix a class $\C$ such that $\norm{\C} \geq 2$ and the matrix
$\mat{C}$ obtained for $\C$ from Lemma~\ref{lem:doubling}. For
simplicity, we will write $\lambda_\ell$ instead of
$\lambda_\ell(\mat{C})$. We prove two intermediary propositions that
essentially show that $\mat{C}$ can encode binary strings. Let
$f_b(\vec{v}) \defeq \mat{C} \cdot \vec{v} + b \cdot \vec{e}$ for both
$b \in \{0, 1\}$ and every $\vec{v} \in \Z^{\size{\mat{C}}}$. Let
$f_\varepsilon$ be the identity function, and let $f_x \defeq f_{x_n}
\circ \cdots \circ f_{x_2} \circ f_{x_1}$ for every $x \in {\{0,
1\}}^n$. Let $\gamma_x \defeq f_x(\vec{e})(1)$ for every $x \in {\{0,
1\}}^*$. Let $\supp{\varepsilon} \defeq \emptyset$ and $\supp{w}
\defeq \{i \in [k] : w_i = 1\}$ be the ``support'' of $w$ for every
sequence $w \in \{0, 1\}^+$ of length $k > 0$.

\begin{prop}\label{prop:gamma:val}
  For every $x \in {\{0, 1\}}^*$, the following holds: $\gamma_x =
  \lambda_{|x|} + \sum_{i \in \supp{x}} \lambda_{|x|-i}$.
\end{prop}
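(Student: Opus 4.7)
The plan is to prove by induction on $|x|$ the stronger vector-valued identity
\[
  f_x(\vec{e}) \;=\; \mat{C}^{|x|} \vec{e} \;+\; \sum_{i=1}^{|x|} x_i \cdot \mat{C}^{|x|-i} \vec{e},
\]
and then to obtain the claimed equality simply by reading off the first coordinate. This vector-level statement is needed rather than the coordinate-wise statement because the induction step requires applying $\mat{C}$ to the full vector $f_x(\vec{e})$, not just to its first entry. This is the only subtle point; everything else is a direct unfolding.

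For the base case $x = \varepsilon$, both sides equal $\vec{e}$ since $f_\varepsilon$ is the identity and the sum is empty. For the inductive step, write $x = y b$ with $y \in \{0,1\}^*$ and $b \in \{0,1\}$, so that $f_{yb} = f_b \circ f_y$, i.e.\ $f_{yb}(\vec{e}) = \mat{C} \cdot f_y(\vec{e}) + b \cdot \vec{e}$. Applying the induction hypothesis to $y$, pushing the factor of $\mat{C}$ through the sum, and then absorbing the trailing $b \cdot \vec{e}$ as the $i = |y|+1$ term yields
\[
  f_{yb}(\vec{e}) \;=\; \mat{C}^{|yb|}\vec{e} \;+\; \sum_{i=1}^{|yb|} (yb)_i \cdot \mat{C}^{|yb|-i}\vec{e}.
\]

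Finally, to recover the proposition, take the first coordinate of both sides of the vector identity. By definition $\gamma_x = f_x(\vec{e})(1)$ and $\lambda_\ell = (\mat{C}^\ell \vec{e})(1)$, so
\[
  \gamma_x \;=\; \lambda_{|x|} \;+\; \sum_{i=1}^{|x|} x_i \cdot \lambda_{|x|-i}.
\]
Since $x_i \in \{0,1\}$, the terms with $x_i = 0$ drop out, and the remaining index set is exactly $\supp{x} = \{i \in [|x|] : x_i = 1\}$, giving $\gamma_x = \lambda_{|x|} + \sum_{i \in \supp{x}} \lambda_{|x|-i}$ as required. There is no real obstacle here; the only thing to be careful about is the direction of composition (the left-to-right reading of $x$ corresponds to innermost-to-outermost application of the $f_b$'s), which is why decomposing $x$ as $yb$ on its right is the correct choice for the induction.
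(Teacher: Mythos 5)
Your proof is correct and follows essentially the same route as the paper: an induction on $|x|$ establishing the vector identity $f_x(\vec{e}) = \mat{C}^{|x|}\vec{e} + \sum_{i \in \supp{x}} \mat{C}^{|x|-i}\vec{e}$ via the decomposition $x = yb$ (which is the right split given that $f_x = f_{x_n} \circ \cdots \circ f_{x_1}$), and then projecting onto the first coordinate. Writing the sum with coefficients $x_i \in \{0,1\}$ rather than over $\supp{x}$ is only a cosmetic difference.
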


\begin{proof}
  It suffices to show that $f_x(\vec{e}) = \mat{C}^{|x|} \cdot \vec{e}
  + \sum_{i \in \supp{x}} \mat{C}^{|x|-i} \cdot \vec{e}$ for every $x
  \in {\{0, 1\}}^*$. Let us prove this by induction on $|x|$. If $|x| =
  0$, then $x = \varepsilon$, and hence $f_x(\vec{e}) = \vec{e} =
  \mat{C}^0 \cdot \vec{e}$. Assume that $|x| > 0$ and that the claim
  holds for sequences of length $|x| - 1$. There exist $b \in \{0,
  1\}$ and $w \in {\{0, 1\}}^*$ such that $x = wb$. We have:
  \begin{align*}
    f_x(\vec{e})
    &= \mat{C} \cdot f_w(\vec{e}) + b \cdot \vec{e} \\
    &= \mat{C} \cdot \left(\mat{C}^{|w|} \cdot \vec{e} + \sum_{i \in
      \supp{w}} \mat{C}^{|w|-i} \cdot \vec{e} \right) + b \cdot
    \vec{e}
    && \text{(by induction hypothesis)} \\
    &= \left(\mat{C}^{|w|+1} \cdot \vec{e} + \sum_{i \in \supp{w}}
    \mat{C}^{|w|+1-i} \cdot \vec{e}\right) + b \cdot \vec{e} \\
    &= \left(\mat{C}^{|x|} \cdot \vec{e} + \sum_{i \in \supp{x}
      \setminus \{|x|\}} \mat{C}^{|x|-i} \cdot \vec{e}\right) +
    b \cdot \mat{C}^{|x|-|x|} \cdot \vec{e} \\
    &= \mat{C}^{|x|} \cdot \vec{e} + \sum_{i \in \supp{x}}
    \mat{C}^{|x|-i} \cdot \vec{e}
    && \text{(by def.\ of $\supp{x}$).} \qedhere
  \end{align*}
\end{proof}

\begin{prop}\label{prop:lex:val}
  For every $x, y \in {\{0, 1\}}^*$, it is the case that $x = y$ if and
  only if $\gamma_x = \gamma_y$.
\end{prop}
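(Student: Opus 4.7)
The forward direction ($x = y \implies \gamma_x = \gamma_y$) is immediate from the definition of $\gamma$. The plan for the non-trivial direction is to exploit the doubling property $\lambda_{n+1} \geq 2\lambda_n$ from Lemma~\ref{lem:doubling} to argue that the numbers $\gamma_x$ behave essentially like binary representations, so $\gamma_x$ determines $x$ uniquely.

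First I would note, via Proposition~\ref{prop:gamma:val}, that $\gamma_x = \lambda_{|x|} + \sum_{j=0}^{|x|-1} b_j^x \lambda_j$ where $b_j^x \in \{0,1\}$ encodes whether $x_{|x|-j} = 1$, and that the $\lambda_j$ are strictly positive with $\lambda_0 = 1$ and $\lambda_j \geq 2^j$ (by induction using Lemma~\ref{lem:doubling}). The core inequality driving everything is the geometric bound
\[
  \sum_{j=0}^{n-1} \lambda_j \;\leq\; \lambda_n \sum_{j=0}^{n-1} 2^{j-n} \;=\; \lambda_n (1 - 2^{-n}) \;<\; \lambda_n,
\]
which follows from $\lambda_j \leq 2^{j-n}\lambda_n$ for $j \leq n$, another consequence of the doubling property.

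Next, I would split on the length. Suppose $\gamma_x = \gamma_y$ and, towards a contradiction, $|x| \neq |y|$; WLOG $|x| > |y|$. Bounding $\gamma_y$ by the displayed geometric sum gives $\gamma_y < 2\lambda_{|y|} \leq \lambda_{|y|+1} \leq \lambda_{|x|} \leq \gamma_x$, a contradiction. Hence $|x| = |y| =: n$. Now if $x \neq y$, take the largest index $j^\star$ where $b_{j^\star}^x \neq b_{j^\star}^y$, and WLOG $b_{j^\star}^x = 1$, $b_{j^\star}^y = 0$. Then
\[
  \gamma_x - \gamma_y \;\geq\; \lambda_{j^\star} - \sum_{j=0}^{j^\star - 1} \lambda_j \;\geq\; \lambda_{j^\star} \cdot 2^{-j^\star} \;\geq\; 1,
\]
contradicting $\gamma_x = \gamma_y$. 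Thus $x = y$.

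I do not expect any serious obstacle: the only substantive input is Lemma~\ref{lem:doubling}, and once $\lambda_{n+1} \geq 2\lambda_n$ is in hand the argument reduces to the standard uniqueness-of-binary-representation estimate, carried out twice (for comparing different lengths and for comparing equal lengths). The one point to be careful about is ensuring $\lambda_j \geq 2^j$ so that the strict inequality $\gamma_y < 2\lambda_{|y|}$ holds; this follows since $\lambda_0 = \vec{e}(1) = 1$ and the doubling bound propagates.
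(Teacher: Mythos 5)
Your proposal is correct and takes essentially the same route as the paper: both reduce to the geometric bound $\sum_{j<n}\lambda_j < \lambda_n$ derived from Lemma~\ref{lem:doubling}, split on $|x|$ versus $|y|$, and in the equal-length case isolate the most significant differing position and compare $\lambda_{j^\star}$ against the tail sum. The only cosmetic difference is that the paper frames the argument as lexicographic monotonicity ($x <_\text{lex} y \implies \gamma_x < \gamma_y$) rather than a proof by contradiction, and closes the equal-length case with $\lambda_\ell/2^\ell > 0$ rather than your $\lambda_{j^\star} \cdot 2^{-j^\star} \geq 1$, so it does not even need the observation $\lambda_0 = 1$.
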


\begin{proof}
  Let $<_\text{lex}$ denote the lexicographical order over ${\{0,
  1\}}^*$. It is sufficient to show that for every $x, y \in {\{0,
  1\}}^*$ the following holds: if $x <_\text{lex} y$, then $\gamma_x <
  \gamma_y$. Indeed, if this claim holds, then for every $x, y \in
  {\{0, 1\}}^*$ such that $x \neq y$, we either have $x <_\text{lex} y$
  or $y <_\text{lex} x$, which implies $\gamma_x \neq \gamma_y$ in
  both cases.

  Let us prove the claim. Let $x, y \in {\{0, 1\}}^*$ be such that $x
  <_\text{lex} y$. We either have $|x| < |y|$ or $|x| = |y|$. If the
  former holds, then the claim follows from:
  \begin{align*}
    \gamma_x
    &= \lambda_{|x|} + \sum_{i \in \supp{x}} \lambda_{|x|-i}
    && \text{(by Proposition~\ref{prop:gamma:val})} \\
    &\leq \lambda_{|x|} + \sum_{i = 1}^{|x|} \lambda_{|x|} / 2^i
    && \text{(by Lemma~\ref{lem:doubling})} \\
    &= \lambda_{|x|} \cdot \left(1 + \sum_{i = 1}^{|x|} 1/2^i\right) \\
    &= \lambda_{|x|} \cdot (2 - 1/2^{|x|}) \\
    &< 2 \cdot \lambda_{|x|}
    && \text{(since $\lambda_{|x|} > 0$)} \\
    &\leq \lambda_{|y|}
    && \text{(by Lemma~\ref{lem:doubling})} \\
    &\leq \gamma_y
    && \text{(by Proposition~\ref{prop:gamma:val}).}
  \end{align*}       

  It remains to prove the case where $|x| = |y| = k$ for some $k >
  0$. Since $x <_\text{lex} y$, there exist $u, v, w \in {\{0, 1\}}^*$
  such that $x = u0v$ and $y = u1w$. Let $\ell \defeq k - |u| -
  1$. Note that $\ell = |v| = |w|$. The proof is completed by
  observing that:
  \begin{align*}
    \gamma_y - \gamma_x
    &= \lambda_\ell + \sum_{i \in \supp{w}}
    \lambda_{\ell - i} - \sum_{i \in \supp{v}}
    \lambda_{\ell - i}
    && \text{(by Proposition~\ref{prop:gamma:val})} \\
    &\geq \lambda_\ell - \sum_{i = 1}^{|v|}
    \lambda_{\ell - i} \\
    &\geq \lambda_\ell - \sum_{i = 1}^{|v|}
    \lambda_\ell / 2^i
    && \text{(by $\lambda_\ell \geq 2^i \cdot \lambda_{\ell - i}$ from
      Lemma~\ref{lem:doubling})} \\    
    &= \lambda_\ell \cdot \left(1 - \sum_{i = 1}^{\ell} 1 / 2^i\right)
    && \text{(by $|v| = \ell$)} \\
    &= \lambda_\ell / 2^\ell \\
    &> 0.
    && \qedhere
  \end{align*}
\end{proof}

We may finally prove the last part of our trichotomy.

\begin{thm}\label{thm:undec:normtwo}
  $\zreach{\C}$ is undecidable if $\norm{\C} \geq 2$.
\end{thm}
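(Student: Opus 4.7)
The plan is to reduce Post's Correspondence Problem (PCP), known undecidable, to $\zreach{\C}$, following the classical ``doubling'' reduction but replacing doubling by the injective encoding $x \mapsto \gamma_x$ supplied by Propositions~\ref{prop:gamma:val} and~\ref{prop:lex:val}. Given a PCP instance $(u_1, v_1), \ldots, (u_n, v_n)$ over $\{0, 1\}$, I will build an affine VASS $\V$ in $\C$ with $2d$ counters, where $d \defeq \size{\mat{C}}$, split into a ``$u$-side'' (the first $d$ coordinates) and a ``$v$-side'' (the last $d$), both initialized to $\vec{e}$. The target reachability query will be of the form $p(\vec{e}_1, \vec{e}_1) \trans{*}_\Z q(\vec{0})$.

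First I would construct the main-loop gadgets. For every $i \in [n]$ and every bit $b$ of $u_i$, a transition on the $u$-side uses the matrix $\ext{\mat{C}}{d} \in \C$ (which acts as $\mat{C}$ on the $u$-side and as the identity on the $v$-side) together with the additive vector $b \cdot \vec{e}_1$; symmetrically, a transition on the $v$-side uses the matrix $\perm{\ext{\mat{C}}{d}}{\sigma} \in \C$, where $\sigma$ is the permutation swapping the two blocks, with additive vector $b \cdot \vec{e}_{d+1}$. Chaining one such transition per bit produces a gadget whose net effect is to apply $f_{u_i}$ to the $u$-side and $f_{v_i}$ to the $v$-side, and between pair gadgets the VASS returns to a central control-state from which it may either select another pair or transition to the checking control-state $q$. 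After $k$ iterations, the $u$-side holds $f_{u_{i_1} \cdots u_{i_k}}(\vec{e})$ and the $v$-side holds $f_{v_{i_1} \cdots v_{i_k}}(\vec{e})$, so their first coordinates are precisely $\gamma_{u_{i_1} \cdots u_{i_k}}$ and $\gamma_{v_{i_1} \cdots v_{i_k}}$.

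Next, for the final equality check, I would equip $q$ with self-loops $(q, \mat{I}, \pm \vec{e}_j, q)$ for every coordinate $j \notin \{1, d+1\}$, plus the single joint self-loop $(q, \mat{I}, -\vec{e}_1 - \vec{e}_{d+1}, q)$. Reaching the target $q(\vec{0})$ then amounts to zeroing every coordinate: the auxiliary self-loops freely zero all coordinates except the two ``first'' ones, while the joint decrement drives coordinates $1$ and $d + 1$ to $(0, 0)$ exactly when they start equal. By Proposition~\ref{prop:lex:val}, this is equivalent to $u_{i_1} \cdots u_{i_k} = v_{i_1} \cdots v_{i_k}$, i.e., to the existence of a PCP solution. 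Routing the initial state $p$ so that $q$ is only reachable after passing through at least one pair gadget rules out the trivial empty-sequence false positive.

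The main obstacle I expect is not the algebraic structure of the encoding, which Propositions~\ref{prop:gamma:val} and~\ref{prop:lex:val} have already packaged, but the soundness of the final check: the auxiliary $\pm \vec{e}_j$ loops must not be usable to compensate for a mismatch on the tested coordinates. Restricting the only transitions that touch coordinates $1$ and $d+1$ at $q$ to the single joint decrement is what makes the reduction sound under the $\Z$-semantics. Verifying that every matrix used by $\V$, namely $\mat{I}$, $\ext{\mat{C}}{d}$ and $\perm{\ext{\mat{C}}{d}}{\sigma}$, lies in $\C$ is then immediate from closure of $\C$ under extension, conjugation by permutations, and multiplication; undecidability of PCP thus transfers to $\zreach{\C}$.
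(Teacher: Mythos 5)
Your proposal is correct and follows essentially the same route as the paper: a reduction from the Post correspondence problem in which the matrix $\mat{C}$ of Lemma~\ref{lem:doubling} encodes the top and bottom strings on two $d$-counter blocks via $\gamma$, and Proposition~\ref{prop:lex:val} turns equality of the two first coordinates into equality of the strings. Your acceptance gadget (free $\pm\vec{e}_j$ loops on the untested coordinates, a joint decrement of counters $1$ and $d+1$, target $q(\vec{0})$) is only a minor variation on the paper's simultaneous decrement toward $r(\vec{e},\vec{e})$ — indeed a slightly more careful one, since it explicitly disposes of the nonzero auxiliary coordinates produced by $\mat{C}$, and it explicitly rules out the empty match.
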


\begin{proof}
  \newcommand{\tile}[2]{%
    \left[
      \begin{tabular}{cc}
        \!\!\!{$#1$}\!\!\! \\
        \hline
        \!\!\!{$#2$}\!\!\!
      \end{tabular}
      \right]%
  }

  We give a reduction from the Post correspondence problem inspired
  by~\cite{Rei15}. There, counter values can be doubled as a
  ``native'' operation. Here, we adapt the construction with our
  emulation of doubling. Let us consider an instance of the Post
  correspondence problem over alphabet $\{0, 1\}$: \[\Gamma \defeq
  \left\{\tile{u_1}{v_1}, \tile{u_2}{v_2}, \ldots,
  \tile{u_\ell}{v_\ell}\right\}.\] We say that $\Gamma$ \emph{has a
    match} if there exists $w \in \Gamma^+$ such that the underlying
  top and bottom sequences of $w$ are equal.

  Let $\mat{C}$ be the matrix obtained for $\C$ from
  Lemma~\ref{lem:doubling}, let $d \defeq \size{\mat{C}}$, and let
  $\vec{e}$ be of size $d$. For every $x \in {\{0, 1\}}^*$, let $g_x$
  and $h_x$ be the linear mappings over $\Z^{2d}$ defined as $f_x$,
  but operating on counters $1, 2, \ldots, d$ and counters $d+1, d+2,
  \ldots, 2d$ respectively.

  \begin{figure}[h]
  \centering
  \begin{tikzpicture}[->, node distance=1.75cm, auto, very thick, scale=0.8, transform shape, font=\Large]      
    \tikzset{every state/.style={inner sep=1pt, minimum size=25pt}}

    \node[state]         (p)                   {$p$};
    \node[state]         (p1) [above=1cm of p] {$q_1$};
    \node[state, dotted] (pi) [left=     of p] {$q_i$};
    \node[state]         (pl) [below=1cm of p] {$q_\ell$};
    \node[state]         (q)  [right=5cm of p] {$r$};

    \path[->]
    (p)  edge[bend left=20]  node[yshift=+5pt] {$g_{u_1}$} (p1)
    (p1) edge[bend left=20]  node[yshift=+5pt] {$h_{v_1}$} (p)

    (p)  edge[bend right=20] node[yshift=-5pt, swap] {$g_{u_\ell}$} (pl)
    (pl) edge[bend right=20] node[yshift=-5pt, swap] {$h_{v_\ell}$} (p)

    (p) edge[]           node[] {$(-\vec{e}, -\vec{e})$} (q)
    (q) edge[loop above] node[] {$(-\vec{e}, -\vec{e})$} ()    
    ;

    \path[dotted, ->]
    (p)  edge[bend left=20] node[xshift=-5pt] {$g_{u_i}$} (pi)
    (pi) edge[bend left=20] node[xshift=-5pt] {$h_{v_i}$} (p)
    ;
  \end{tikzpicture}
  \caption{Affine VASS $\V$ for the Post correspondence problem. Arcs
    labeled by mappings of the form $g_x$ and $h_x$ stand for
    sequences of $|x|$ transitions implementing $g_x$ and
    $h_x$.}\label{fig:pcp}
\end{figure}
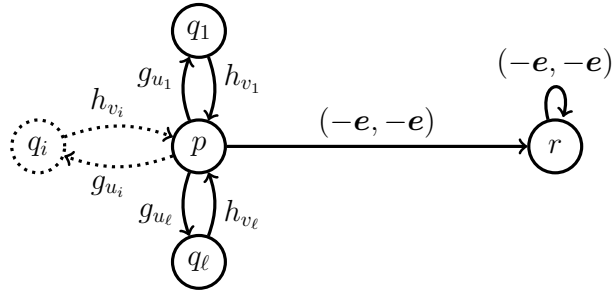

  Let $\V \defeq (2d, Q, T)$ be the affine VASS such that $Q$ and $T$
  are as depicted in Figure~\ref{fig:pcp}. Note that $\V$ belongs to
  $\C$. Indeed, $g_x$ and $h_x$ can be obtained from matrix $\mat{C}
  \in \C$ and the fact that $\C$ is a class, and hence closed under
  counter renaming and larger dimensions. We claim that \[p(\vec{e},
  \vec{e}) \trans{*}_\Z r(\vec{e}, \vec{e}) \text{ if and only if }
  \Gamma \text{ has a match}.\]
  
  Note that any sequence $w \in T^+$ from $p$ to $p$ computes
  $g_{w_x} \circ h_{w_y}$ for some
  word \[\tile{w_x}{w_y} \in \Gamma^+.\] Therefore:
  \begin{align*}
    &\phantom{{}\iff{}} p(\vec{e}, \vec{e}) \trans{*}_\Z r(\vec{e},
    \vec{e}) \\
    &\iff \exists w \in T^+, \vec{v} \in \Z^{2d} : p(\vec{e}, \vec{e})
    \trans{w}_\Z p(\vec{v}) \trans{*}_\Z r(\vec{e}, \vec{e}) \\
    &\iff \exists w \in T^+, \vec{v} \in \Z^{2d} :
    p(\vec{e}, \vec{e}) \trans{w}_\Z p(\vec{v}) \text{ and
    } \vec{v}(1) = \vec{v}(d + 1) \\
    &\iff \exists w \in T^+ : \gamma_{w_x} = \gamma_{w_y}
    && \text{(by def.\ of $g$, $h$ and $\gamma$)} \\
    &\iff \exists w \in T^+: w_x = w_y
    && \text{(by Proposition~\ref{prop:lex:val})} \\
    &\iff \Gamma \text{ has a match.}
    &&\qedhere
  \end{align*}
\end{proof}

We conclude this section by proving
Theorem~\ref{thm:trichotomy}~\ref{itm:undec} which can be
equivalently formulated as follows:

\begin{cor}
  $\zreach{\C}$ is undecidable if $\C$ does not only contain
  pseudo-transfer matrices and does not only contain pseudo-copy
  matrices.
\end{cor}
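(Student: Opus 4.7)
The plan is to obtain this corollary as an immediate consequence of Proposition~\ref{prop:row:col:two} and Theorem~\ref{thm:undec:normtwo}, which together reduce the question to a norm computation. First, I would unpack the hypothesis: ``$\C$ does not only contain pseudo-transfer matrices and does not only contain pseudo-copy matrices'' means that $\C$ is not contained in the set of pseudo-transfer matrices and not contained in the set of pseudo-copy matrices. Hence there exist matrices $\mat{A}, \mat{B} \in \C$ such that $\mat{A}$ is not a pseudo-copy matrix and $\mat{B}$ is not a pseudo-transfer matrix (possibly after relabeling).

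Next, I would invoke Proposition~\ref{prop:row:col:two} on these two witnesses to conclude $\norm{\C} \geq 2$. This is the only nontrivial step, but it has already been performed in the previous subsection: the proposition carefully composes $\mat{A}$ and $\mat{B}$ (after counter renaming and extension) to expose an entry of absolute value at least~$2$. Finally, applying Theorem~\ref{thm:undec:normtwo} to this class yields that $\zreach{\C}$ is undecidable.

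There is essentially no obstacle here, since the corollary is purely a matter of assembling the two previous results. The only subtle point worth making explicit in the write-up is the direction of the ``not pseudo-copy/not pseudo-transfer'' labels, to match the statement of Proposition~\ref{prop:row:col:two} exactly. With that alignment noted, the proof fits in a couple of lines.
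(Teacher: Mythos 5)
Your proposal is correct and follows exactly the same two-step route as the paper: unpack the hypothesis to obtain a non-pseudo-copy matrix and a non-pseudo-transfer matrix in $\C$, apply Proposition~\ref{prop:row:col:two} to get $\norm{\C} \geq 2$, and conclude via Theorem~\ref{thm:undec:normtwo}. The extra care you take to align the ``not pseudo-copy'' / ``not pseudo-transfer'' roles with the proposition's statement is a reasonable clarification but does not change the substance.
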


\begin{proof}
  By Proposition~\ref{prop:row:col:two}, $\norm{\C} \geq 2$, hence
  undecidability follows from Theorem~\ref{thm:undec:normtwo}.
\end{proof}

\section{A complexity dichotomy for reachability}
\label{sec:dichotomy}
This section is devoted to the following complexity dichotomy on
$\reach{\C}$, which is mostly proven by exploiting notions and results
from the previous section:

\begin{thm}\label{thm:dichotomy}
  The reachability problem $\reach{\C}$ is equivalent to the
  (standard) VASS reachability problem if $\C$ only contains
  permutation matrices, and is undecidable otherwise.
\end{thm}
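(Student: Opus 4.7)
The plan is to prove the two directions of the dichotomy separately, reusing the structural machinery developed in Section~\ref{sec:trichotomy}.

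For the "only permutation matrices" direction, the key observation is that $G_d \defeq \C \cap \Z^{d \times d}$ is a finite subgroup of the symmetric group $\S_d$, since every finite submonoid of a group is automatically a subgroup. Given an affine VASS $\V = (d, Q, T)$ in $\C$ with transitions of the form $(p, \mat{P}_\tau, \vec{b}, q)$, I would construct a standard VASS $\V'$ whose state set is $Q \times G_d$ and that tracks the accumulated product of permutations applied along a run. The invariant I would maintain is that a configuration $p(\vec{v})$ reached in $\V$ with cumulative permutation $\sigma$ corresponds to the $\V'$-configuration $(p, \sigma)(\mat{P}_{\sigma^{-1}} \vec{v})$. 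A transition of $\V$ applying $\mat{P}_\tau$ and $\vec{b}$ then becomes a purely additive transition of $\V'$ from $(p, \sigma)$ to $(q, \tau\sigma)$ with effect $\mat{P}_{(\tau\sigma)^{-1}} \vec{b}$, since $\mat{P}_{(\tau\sigma)^{-1}} \mat{P}_\tau = \mat{P}_{\sigma^{-1}}$. Non-negativity is preserved throughout because permuting a non-negative vector stays non-negative. A single-query reachability instance is obtained via a finalization gadget: a fresh sink state joined to each $(q_f, \sigma)$ by a transition subtracting $\mat{P}_{\sigma^{-1}} \vec{v}_f$. The converse reduction is immediate since any standard VASS is an affine VASS over the identity matrix, which is a permutation and hence lies in every class.

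For the "otherwise" direction, suppose $\C$ contains a non-permutation matrix $\mat{A}$; I would proceed by a structural case analysis. If $\mat{A}$ has entries in $\{0, 1\}$ but fails to be a permutation matrix, then it must exhibit either an all-zero row/column or a row/column with at least two $1$'s. By combining $\mat{A}$ with class-closure operations (extensions $\ext{\cdot}{n}$, counter renaming $\perm{\cdot}{\sigma}$, and composition) in the implementation style of Definition~\ref{def:op:impl} used in Propositions~\ref{prop:exist:flip} and~\ref{prop:exist:swaps}, one can simulate a clean reset, transfer, or copy operation, each of which makes VASS reachability undecidable by the classical results of~\cite{AK76, DFS98}. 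If $\mat{A}$ has only non-negative entries with some entry strictly larger than $1$, then Lemma~\ref{lem:doubling} produces a matrix $\mat{C} \in \C$ with doubling growth whose entries are non-negative (since $\mat{C}$ is built by products and permutations of the non-negative $\mat{A}$), and the Post correspondence reduction of Theorem~\ref{thm:undec:normtwo} transfers directly to $\reach{\C}$ because all intermediate configurations $\mat{C}^k \vec{e} + \sum \mat{C}^{k-i} \vec{e}$ remain non-negative. Finally, if $\mat{A}$ has a negative entry, then by Proposition~\ref{prop:exist:flip} there is a matrix $\mat{F} \in \C$ implementing a sign flip on a chosen counter, and in $\reach{\C}$ the single transition with matrix $\mat{F}$ and zero vector is firable if and only if the targeted counter equals $0$ (otherwise the flipped counter becomes negative); this yields a zero test, which combined with standard VASS operations simulates a Minsky machine and hence makes $\reach{\C}$ undecidable.

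The main obstacle is to ensure non-negativity throughout each reduction, especially in the sign-flip-based zero test: one must verify that the auxiliary counters introduced by the sign-flip implementation remain non-negative whenever the target counter is zero. In the $0$-implementation case (pseudo-transfer matrices) this is immediate, as auxiliary counters start and remain at zero, while in the $?$-implementation case (pseudo-copy matrices) one needs a finer analysis of the explicit gadget in the proof of Proposition~\ref{prop:exist:flip} to confirm that the predictable ``?'' values stay non-negative when the target is $0$. Together with the established classical results on reset, transfer, and copy VASS, this structural case analysis covers all possible non-permutation matrices and yields undecidability of $\reach{\C}$ in every case outside the permutation regime.
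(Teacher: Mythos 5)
Your decidability half is essentially the paper's own argument: you track the accumulated permutation in the control state and push it into the additive vectors, exactly as in the paper's reduction $\reach{\C} \leq \reach{\I}$ (your finalization gadget just turns the paper's Turing reduction into a many-one one, which the paper also notes is possible). The problem is in the undecidability half, where your case analysis has a genuine gap. Your case~(a) claims that from \emph{any} $\{0,1\}$ non-permutation matrix one can simulate a clean reset, transfer, or copy. This fails for matrices that are neither transfer nor copy matrices but have all entries in $\{0,1\}$, for instance
$\bigl(\begin{smallmatrix}1 & 1\\ 0 & 1\end{smallmatrix}\bigr)$
or
$\bigl(\begin{smallmatrix}1 & 1\\ 1 & 1\end{smallmatrix}\bigr)$.
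For the first of these, every matrix in the generated class is a product of permutation-conjugated extensions of it, hence nonnegative with determinant $1$ and with no zero row or column; no reset, transfer, or copy matrix can lie in such a class, and the determinant obstruction also rules out implementing these (singular) operations in the sense of Definition~\ref{def:op:impl}. Your case~(b) does not catch these matrices either, because you condition it on the \emph{witness matrix itself} having an entry $>1$, which it does not. The paper routes exactly these matrices differently: Proposition~\ref{prop:row:col:two} shows that a class containing a matrix that is neither pseudo-transfer nor pseudo-copy has $\norm{\C} \geq 2$ (even when every generator has entries in $\{0,1\}$), and then Lemma~\ref{lem:doubling} plus the PCP reduction of Theorem~\ref{thm:undec:normtwo}, which remains valid under $\N$-semantics because all matrices involved are nonnegative (Proposition~\ref{prop:undec:two}). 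Without this detour through norm growth, your argument simply does not cover an important family of classes (precisely those with infinite monoid generated by $\{0,1\}$-matrices).

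There is a second, smaller gap in your negative-entry case. You invoke Proposition~\ref{prop:exist:flip} to obtain sign flips, but that proposition is proved only under the standing assumption that $\C$ consists solely of pseudo-transfer or solely of pseudo-copy matrices; for an arbitrary matrix with a negative entry (say with other nonzero entries in the same row and column, or entries of absolute value $\geq 2$) no flip implementation is available. Moreover, even granting a flip, your ``flip as zero test'' needs all other counters feeding the flipped one to be controlled, and you yourself leave the nonnegativity of the $?$-implementation auxiliaries unresolved. The paper avoids all of this: in Proposition~\ref{prop:undec:neg} it applies the matrix $\mat{A}$ with $\mat{A}_{i,j}<0$ directly to configurations supported on the single counter $j$, so that $\mat{A}\cdot\lambda\vec{e}_j = \lambda\,\mat{A}_{\star,j}$ is nonnegative iff $\lambda = 0$, which is a zero test with no implementation machinery at all. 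Finally, note that the paper's last case (Theorem~\ref{thm:undec:nonperm}) only needs the zero-row/zero-column reset argument for matrices already known to be transfer or copy matrices, which is what makes that step sound; your blanket claim for all of case~(a) is where the proof breaks.
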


\subsection{Decidability}
\newcommand{\I}{\mathcal{I}}

Note that the (standard) VASS reachability problem is the problem
$\reach{\mathcal{I}}$ where $\I \defeq \bigcup_{n \geq 1}
\mat{I}_n$. Clearly $\reach{\I} \leq \reach{\C}$ for any class
$\C$. Therefore, it suffices to show the following:

\begin{prop}
  $\reach{\C} \leq \reach{\I}$ for every $\C$ that only contains
  permutation matrices.
\end{prop}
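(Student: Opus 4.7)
The plan is to exploit the fact that permutation matrices commute with the non-negativity constraint by storing the \emph{cumulative} permutation applied along a run in the control state, and carrying out the additive updates in ``pre-permutation'' coordinates. Concretely, given an affine VASS $\V = (d, Q, T)$ belonging to $\C$, I will build a standard VASS $\V' = (d, Q', T')$ with $Q' \defeq (Q \times \S_d) \cup \{q^\star\}$ in which a configuration $(q, \Pi)(\vec{w})$ of $\V'$ encodes the $\V$-configuration $q(\mat{P}_\Pi \cdot \vec{w})$.

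For each transition $t = (p, \mat{P}_\sigma, \vec{b}, q) \in T$ and each $\Pi \in \S_d$, add to $T'$ a (standard, i.e., identity-matrix) transition from $(p, \Pi)$ to $(q, \sigma \circ \Pi)$ whose additive effect is $\mat{P}_{(\sigma \circ \Pi)^{-1}} \cdot \vec{b}$. A direct computation confirms the intended invariant: this transition sends $(p, \Pi)(\vec{w})$ to $(q, \sigma \circ \Pi)(\vec{w} + \mat{P}_{(\sigma \circ \Pi)^{-1}} \vec{b})$, which encodes $q(\mat{P}_{\sigma \circ \Pi} \vec{w} + \vec{b}) = q(\mat{P}_\sigma \cdot \mat{P}_\Pi \vec{w} + \vec{b})$, exactly the $\V$-successor of $p(\mat{P}_\Pi \vec{w})$. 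Since permutation matrices permute coordinates, $\vec{w} \in \N^d$ holds if and only if $\mat{P}_\Pi \vec{w} \in \N^d$ holds, so the $\N$-semantics of $\V'$ faithfully mirrors that of $\V$ under the encoding.

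To match the target $q(\vec{v})$, I add, for each $\Pi \in \S_d$, a standard transition from $(q, \Pi)$ to $q^\star$ with additive effect $-\mat{P}_{\Pi^{-1}} \vec{v}$. Then $p(\vec{u}) \trans{*}_\N q(\vec{v})$ holds in $\V$ if and only if $(p, \mathrm{id})(\vec{u}) \trans{*}_\N q^\star(\vec{0})$ holds in $\V'$: the sink transition can fire under the $\N$-semantics and yield $\vec{0}$ only when the stored vector equals $\mat{P}_{\Pi^{-1}} \vec{v}$, which by the invariant corresponds exactly to being at $q(\vec{v})$ in $\V$.

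The whole construction is clearly effectively computable (the result has $|Q| \cdot d! + 1$ control-states and $(|T|+|Q|) \cdot d!$ transitions), yielding the desired many-one reduction $\reach{\C} \leq \reach{\I}$. I do not foresee any real obstacle: the argument is essentially bookkeeping once one recognizes that permutations preserve $\N^d$ coordinatewise. The only care needed is the placement of the inverse in the pushed-back additive vector $\mat{P}_{(\sigma \circ \Pi)^{-1}} \vec{b}$, which is forced by the invariant $(q, \Pi)(\vec{w}) \leftrightarrow q(\mat{P}_\Pi \vec{w})$.
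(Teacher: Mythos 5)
Your construction is correct and is essentially the paper's proof: both simulate $\V$ by a standard VASS whose control-states record the permutation accumulated so far and whose additive vectors are pushed through that permutation, relying on the fact that permutation matrices preserve $\N^d$. The only (harmless) difference is at the end: the paper settles for a Turing reduction by issuing one reachability query per target state $q_\sigma(\mat{P}_\sigma\vec{v})$, whereas your extra sink state $q^\star$ with the subtraction $-\mat{P}_{\Pi^{-1}}\vec{v}$ packages the same check into a single many-one query, which is a slightly stronger (and still valid) conclusion.
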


\begin{proof}
  Let $\V = (d, Q, T)$ be an affine VASS that belongs to $\C$. We
  construct a (standard) VASS $\V' = (d, Q', T')$ that simulates
  $\V$. Recall that a (standard) VASS is an affine VASS that only uses
  the identity matrix. For readability, we omit the identity matrix on
  the transitions of $\V'$. We assume without loss of generality that
  each transition $t \in T$ satisfies either $\tvec{t} = \vec{0}$ or
  $\tmat{t} = \mat{I}$. Indeed, since permutation matrices are
  nonnegative, every transition of $T$ can be splitted in two parts:
  first applying its matrix, and then its vector.

  The control-states and transitions of $\V'$ are defined as
  $Q' \defeq \{q_\sigma : q \in Q, \sigma \in \S_d\}$ and $T' \defeq
  S_{\bm\circlearrowright} \cup S_\text{vec}$, which are to be defined
  shortly. Intuitively, each control-state of $\V'$ stores the current
  control-state of $\V$ together with the current renaming of its
  counters. Whenever a transition $t \in T$ such that $\tvec{t}
  = \vec{0}$ is to be applied, this means that the counters must be
  renamed by the permutation $\tmat{t}$. This is achieved
  by: \[S_{\bm\circlearrowright} \defeq \{(p_\sigma, \vec{0},
  q_{\pi \circ \sigma}) : (p, \mat{P}_\pi, \vec{0}, q) \in
  T, \sigma \in \S_d\}.\] Similarly, whenever a transition $t \in T$
  such that $\tmat{t} = \mat{I}$ is to be applied, this means that
  $\tvec{t}$ should be added to the counters, but in accordance to the
  current renaming of the counters. This is achieved by:
  \[S_\text{vec} \defeq \{(p_\sigma, \mat{P}_\sigma \cdot \vec{b},
  q_\sigma) : (p, \mat{I}, \vec{b}, q), \sigma \in \S_d\}.
    \]

  A routine induction shows that \[p(\vec{u}) \trans{*}_\N
  q(\vec{v}) \text{ in $\V$ iff } p_\varepsilon(\vec{u}) \trans{*}_\N
  q_\sigma(\mat{P}_\sigma \cdot \vec{v}) \text{ in $\V'$},\] where
  $\varepsilon$ denotes the identity permutation. Since this amounts
  to finitely many reachability queries, \ie, $|\S_d| = d!$ queries,
  this yields a Turing reduction\footnote{Although it is not necessary
  for our needs, the reduction can be made \emph{many-one} by weakly
  computing a matrix multiplication by $\mat{P}_{\sigma^{-1}}$ onto
  $d$ new counters, from each control-state $q_\sigma$ to a common
  state $r$.}.
\end{proof}

\subsection{Undecidability}

We show undecidability by considering three types of classes: (1)~classes
with matrices with negative entries; (2)~nontransfer \emph{and} noncopy
classes; and (3)~transfer
\emph{or} copy classes. In each case, we will argue that an ``undecidable
operation'' can be simulated, namely: zero-tests, doubling and resets.

\begin{prop}\label{prop:undec:neg}
  $\reach{\C}$ is undecidable for every class $\C$ that contains a
  matrix with some negative entry.
\end{prop}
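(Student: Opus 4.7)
The strategy is to exploit the negative entry, together with the nonnegativity constraint of the $\mathbb{N}$-semantics, to simulate a zero-test, and then reduce from the halting problem for two-counter Minsky machines, which is undecidable.

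Fix a matrix $\mat{A}\in\C$ with an entry $\mat{A}_{i,j}=-c<0$. The key observation is that in the $\mathbb{N}$-semantics, a transition with matrix $\mat{A}$ fires from $\vec v$ only if $(\mat{A}\vec v)(i) = -c\,\vec v(j) + \sum_{k\neq j}\mat{A}_{i,k}\vec v(k) \geq 0$. If every counter $k\neq j$ with $\mat{A}_{i,k}\neq 0$ is guaranteed to hold $0$ at the moment of firing, the constraint degenerates to $\vec v(j)\leq 0$, which in $\mathbb{N}$ means $\vec v(j)=0$. So such a ``guarded'' firing of $\mat{A}$ realises a zero-test on the counter at index~$j$.

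The plan is to carry this out in three steps. First, I would use dimension extension $\ext{\mat{A}}{n}$ and counter renaming $\perm{\cdot}{\sigma}$ to embed $\mat{A}$ inside a matrix $\mat{A}'\in\C$ that acts on a few main counters (used for the Minsky machine), a designated ``test slot'' at index~$j$, and a collection of fresh auxiliary counters; every nonzero position of row $i$ other than $j$ is thereby redirected onto an auxiliary counter that the surrounding gadget keeps at $0$. Second, I would design a zero-test gadget that (i) copies the value to test into the slot $j$ via additive transitions (identity-matrix transitions with an appropriate $\vec b$, which every class admits), (ii) applies $\mat{A}'$ to actually perform the test, and (iii) absorbs all side-effects that the other rows of $\mat{A}'$ leave behind into throwaway auxiliary counters, restoring a usable simulation state. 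Third, once a zero-test gadget is available, I would carry out the standard encoding of a two-counter Minsky machine as an affine VASS: the two counters are represented directly, increments and decrements are ordinary additive transitions, and each test instruction triggers the gadget; the reachability query then asks whether the halting configuration is reachable.

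The main obstacle lies in Step~2, specifically the cleanup phase: although $\mat{A}'$ can be arranged to fire iff the test slot holds~$0$, firing it generally perturbs many other counters, and cancelling those perturbations while staying inside $\C$ is delicate. I expect the argument to split into subcases according to the shape of $\mat{A}$. When $\mat{A}$ only has $\{-1,0,1\}$ entries and belongs to a pseudo-transfer or pseudo-copy class, the sign-flip construction of Proposition~\ref{prop:exist:flip} provides an essentially free zero-test (since $x\leftarrow -x$ in $\mathbb{N}$ enforces $x=0$), and no serious cleanup is needed. When $\mat{A}$ has larger norm or more intricate shape, one would massage $\mat{A}$ via permutations, extensions and products--possibly exploiting the growth mechanism of Lemma~\ref{lem:doubling} and the ``isolated negative row'' technique of Proposition~\ref{prop:row:col:two}--to produce a derived matrix whose only unavoidable effect is precisely the desired zero-test, routing every other update into scratch counters that the simulation need not preserve.
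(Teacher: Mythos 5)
Your high-level strategy --- exploit the negative entry together with the $\N$-semantics to build a zero-test, then reduce from two-counter Minsky machines --- is exactly the paper's, but the cleanup concern you flag as ``the main obstacle'' is spurious, and your plan to copy values into the test slot does not work as stated.

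The paper's construction is uniform in $\mat{A}$ and needs no cleanup. Let $d = \size{\mat{A}}$ and use $2d$ counters: the Minsky counters $x$ and $y$ live directly at indices $j$ and $j+d$ (no copying), and the remaining $2d-2$ counters are kept at $0$ throughout. Increments and decrements are additive transitions on slots $j$ and $j+d$ alone. A zero-test on $x$ is a single transition with matrix $\begin{pmatrix}\mat{A} & \mat{0}\\ \mat{0} & \mat{I}\end{pmatrix} \in \C$. Under the invariant, the first block of the vector is $\lambda\vec{e}_j$ with $\lambda$ the current value of $x$, and $\mat{A}\cdot\lambda\vec{e}_j$ equals $\vec{0}$ if $\lambda = 0$, and $\lambda\mat{A}_{\star,j}$ --- which is negative at position $i$ --- if $\lambda > 0$. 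So the transition fires iff $x = 0$, and when it does the output is literally $\vec{0}$, preserving the invariant. You correctly derive that the firing condition reduces to $\vec{v}(j)=0$, but miss that the same invariant forces the entire input block to $\vec{0}$ whenever the test can fire, so $\mat{A}$ produces no side-effects and there is nothing to absorb into scratch counters; in particular, no case split on the shape of $\mat{A}$, and no appeal to Lemma~\ref{lem:doubling} or Proposition~\ref{prop:row:col:two}, is required. Separately, your Step~2(i) is not realizable: an identity-matrix transition with a constant vector $\vec{b}$ cannot move the unknown contents of one counter into another --- but this issue too disappears once the Minsky counter simply \emph{is} slot $j$.
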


\begin{proof}
  Let $\mat{A} \in \C$ be a matrix such that $\mat{A}_{i, j} < 0$ for
  some $i, j \in [d]$ where $d \defeq \size{\mat{A}}$. We show how a
  two counter Minsky machine $\mathcal{M}$ can be simulated by an
  affine VASS $\V$ belonging to $\C$. Note that we only have to show how to
  simulate zero-tests. The affine VASS $\V$ has $2d$ counters:
  counters $j$ and $j + d$ which represent the two counters $x$ and
  $y$ of $\mathcal{M}$; and $2d - 2$ auxiliary counters which will be
  permanently set to value $0$.

  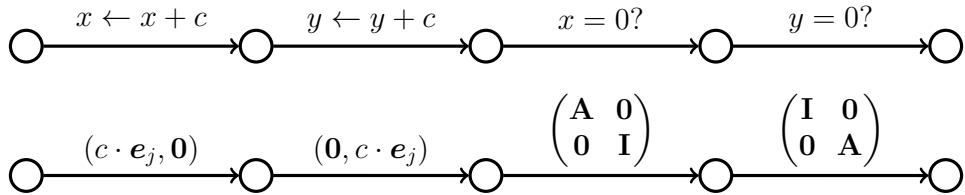
\begin{figure}[h!]
  \centering
  \begin{tikzpicture}[->, node distance=3.25cm, auto, very thick, scale=0.8, transform shape, font=\Large]
    \tikzset{every state/.style={inner sep=1pt, minimum size=15pt}}

    \node[state] (q1)                {};
    \node[state] (q2) [right= of q1] {};
    \node[state] (q3) [right= of q2] {};
    \node[state] (q4) [right= of q3] {};
    \node[state] (q5) [right= of q4] {};

    \path[->]
    (q1) edge[] node[yshift=4pt] {$x \leftarrow x + c$} (q2)
    (q2) edge[] node[yshift=2pt] {$y \leftarrow y + c$} (q3)
    (q3) edge[] node[yshift=4pt] {$x = 0$?}             (q4)
    (q4) edge[] node[yshift=2pt] {$y = 0$?}             (q5)
    ;
    
    \node[state] (r1) [below=45pt of q1] {};
    \node[state] (r2) [right= of r1] {};
    \node[state] (r3) [right= of r2] {};
    \node[state] (r4) [right= of r3] {};
    \node[state] (r5) [right= of r4] {};

    \path[->]
    (r1) edge[] node[] {$(c \cdot \vec{e}_j, \vec{0})$} (r2)
    (r2) edge[] node[] {$(\vec{0}, c \cdot \vec{e}_j)$} (r3)
    (r3) edge[] node[] {
      $
      \begin{pmatrix}
        \mat{A} & \mat{0} \\
        \mat{0} & \mat{I}
      \end{pmatrix}
      $
    } (r4)
    (r4) edge[] node[] {
      $
      \begin{pmatrix}
        \mat{I} & \mat{0} \\
        \mat{0} & \mat{A}
      \end{pmatrix}
      $
    } (r5)
    ;
  \end{tikzpicture}
  \caption{\emph{Top}: example of a counter machine
    $\mathcal{M}$. \emph{Bottom}: an affine VASS $\mathcal{V}$
    simulating $\mathcal{M}$.}\label{fig:minsky}
\end{figure}

  Observe that for every $\lambda \in \N$, the following holds:
  \[
  \mat{A} \cdot \lambda \vec{e}_j =
  \begin{cases}
    \vec{0} & \text{if } \lambda = 0, \\
    \lambda \cdot \mat{A}_{\star, j} & \text{otherwise}.
  \end{cases}
  \] Thus, $\mat{A}$ simulates a zero-test as it leaves all counters
  set to zero if counter $j$ holds value zero, and it generates a
  vector with some negative entry otherwise, which is an invalid
  configuration under $\N$-reachability. Figure~\ref{fig:minsky} shows
  how each transition of $\mathcal{M}$ is replaced in $\V$. We are
  done since
  \begin{align*}
    (m, n) \trans{*}_\N (m', n') \text{ in } \mathcal{M}
    \iff (m \cdot \vec{e}_j, n \cdot \vec{e}_j) \trans{*}_\N (m'
    \cdot \vec{e}_j, n' \cdot \vec{e}_j) \text{ in } \V. \tag*{\qedhere}
  \end{align*}
\end{proof}

\begin{prop}\label{prop:undec:two}
  $\reach{\C}$ is undecidable if $\C$ does not only contain transfer
  matrices and does not only contain copy matrices.
\end{prop}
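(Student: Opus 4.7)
The plan is to split on whether $\C$ contains any matrix with a negative entry. If so, Proposition~\ref{prop:undec:neg} applies directly, so I focus on the complementary case where every matrix of $\C$ has only nonnegative entries.

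Under this nonnegativity assumption, the notions of transfer and pseudo-transfer coincide (and likewise for copy), so the hypothesis of the proposition states that $\C$ contains both a non-pseudo-transfer matrix and a non-pseudo-copy matrix. Proposition~\ref{prop:row:col:two} then yields $\norm{\C} \geq 2$, and Lemma~\ref{lem:doubling} extracts some $\mat{C} \in \C$ with the doubling property. Crucially, because $\mat{C}$ is assembled only from products, extensions, and permutations of nonnegative matrices of $\C$, the matrix $\mat{C}$ itself has nonnegative entries.

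The main step is to reuse the Post correspondence reduction constructed in the proof of Theorem~\ref{thm:undec:normtwo}. The affine VASS $\V$ built there uses only $\mat{C}$ and the identity on its transition matrices, both nonnegative, while its additive vectors are either nonnegative (the $\vec{e}$-terms appearing inside the $g_{u_i}$ and $h_{v_i}$ blocks) or of the form $-\vec{e}$ supported on the two distinguished coordinates. Starting from $p(\vec{e}, \vec{e})$, every application of $g_{u_i}$ and $h_{v_i}$ multiplies by a nonnegative matrix and adds a nonnegative vector, so counters can only grow; by the doubling property of $\mat{C}$, the two distinguished coordinates are at least $2$ after one simulation step and keep growing. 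Hence the $\Z$-reachability witness for a matching PCP instance is realized by the same sequence of transitions under $\N$-semantics, and the converse is immediate from $\trans{*}_\N \subseteq \trans{*}_\Z$. Combined with Theorem~\ref{thm:undec:normtwo}, this transports undecidability from $\zreach{\C}$ to $\reach{\C}$.

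The main obstacle will be to ensure that the final decrement phase keeps every counter nonnegative: the only dangerous steps are the $(-\vec{e}, -\vec{e})$ transitions, and this reduces to checking that the two distinguished coordinates contain values $\geq 1$ at every such subtraction, which follows from the monotone growth produced by the forward simulation.
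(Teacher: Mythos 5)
Your proposal is correct and takes essentially the same route as the paper: dispatch the negative-entry case to Proposition~\ref{prop:undec:neg}, then (using that under nonnegativity transfer and pseudo-transfer coincide) invoke Proposition~\ref{prop:row:col:two} and Lemma~\ref{lem:doubling} to obtain a nonnegative $\mat{C}$ with the doubling property, and observe that the PCP reduction of Theorem~\ref{thm:undec:normtwo} remains valid under $\N$-semantics, the only potentially blocking steps being the final simultaneous decrements, which are safe because the two distinguished coordinates hold values $\geq 1$ throughout. Your write-up is simply a more explicit version of the paper's remark that the only possibly relevant negative values arose from $\mat{C}$ (note only that ``counters can only grow'' is slightly stronger than what nonnegative matrices guarantee—per-counter monotonicity is not needed, just preservation of nonnegativity and the lower bound on the two tracked coordinates).
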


\begin{proof}
  If $\C$ contains a matrix with some negative entry, then we are done
  by Proposition~\ref{prop:undec:neg}. Thus, assume $\C$ only contains
  nonnegative matrices. By Proposition~\ref{prop:row:col:two}, we have
  $\norm{\C} \geq 2$. Let $\mat{C}$ be the matrix obtained for $\C$
  from Lemma~\ref{lem:doubling}. Since $\mat{C} \geq \mat{0}$, we have
  $\mat{C} \cdot \vec{v} \geq \vec{0}$ for every $\vec{v} \geq
  \vec{0}$. Hence, multiplication by $\mat{C}$ is always allowed under
  $\N$-reachability. Thus, the reduction from the Post correspondence
  problem given in Theorem~\ref{thm:undec:normtwo} holds here under
  $\N$-reachability, as the only possibly (relevant) negative values
  arose from $\mat{C}$.
\end{proof}

We may finally prove the last part of our dichotomy:

\begin{thm}\label{thm:undec:nonperm}
  $\reach{\C}$ is undecidable for every class $\C$ with some
  nonpermutation matrix.
\end{thm}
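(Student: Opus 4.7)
My plan is to extend the two preceding undecidability propositions with a direct analysis of the residual case. If $\C$ contains a matrix with some negative entry, then Proposition~\ref{prop:undec:neg} already yields undecidability. Otherwise every matrix in $\C$ is nonnegative, and if $\C$ does not only contain transfer matrices and does not only contain copy matrices, Proposition~\ref{prop:undec:two} applies. The case I would have to handle by hand is thus the following: every matrix in $\C$ is nonnegative, $\C$ contains some nonpermutation matrix $\mat{A}$, and $\C$ either only contains transfer matrices or only contains copy matrices.

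The structural observation I would exploit is that such an $\mat{A}$ of size $d$ must either (i)~have a zero column, or (ii)~be a transfer matrix with every column nonzero but with two distinct columns $j_1, j_2$ whose unique $1$'s lie in a common row $i$. Indeed, transfer and copy matrices both encode a partial function $[d] \to [d]$, and a nonpermutation means this function is not a total bijection, so its image is a proper subset of $[d]$; this forces a zero column unless the function is total but noninjective, which is the situation~(ii), and which cannot occur for copy matrices because there totality forces injectivity to recover image $[d]$.

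I would then reduce from reset VASS reachability in case~(i) and from transfer VASS reachability in case~(ii); both are classically undecidable~\cite{AK76, DFS98}. When $\mat{A}$ has a zero column at index $j$, the reset $c_x \leftarrow 0$ is simulated in a single step by $\perm{\ext{\mat{A}}{n}}{\sigma}$, where $\sigma$ places $c_x$ at position $j$ and zero-initialised auxiliary counters at positions $[d] \setminus \{j\}$: the only nonzero input $v_j = c_x$ contributes nowhere since column $j$ vanishes, so $c_x$ is zeroed and the auxiliaries remain at~$0$. In case~(ii), two successive applications of $\mat{A}$ through appropriate permutations suffice to simulate $c_a \leftarrow c_a + c_b;\ c_b \leftarrow 0$: the first places $c_a, c_b$ at positions $j_1, j_2$ and an auxiliary at position $i$, producing auxiliary $=c_a + c_b$ together with $c_a = c_b = 0$; the second places this auxiliary at $j_1$, a fresh zero at $j_2$, and the (now $0$-valued) counter $c_a$ at $i$, yielding $c_a = c_a + c_b$, $c_b = 0$ and all auxiliaries back to $0$. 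Permuting the roles of counters gives the simulation for arbitrary pairs, and the two steps are routed through a fresh intermediate control-state so that no foreign transition can interleave.

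The routine but delicate point is to verify that these simulations leave the other positions of $\mat{A}$'s frame unchanged. This follows from the sparsity of transfer and copy matrices: whenever only one or two inputs are nonzero, each output row's defining sum collects at most those contributions, and the choice of permutation ensures that the unwanted ones vanish either because the relevant matrix entry is $0$ or because the corresponding input is an auxiliary zero. Since all values involved stay nonnegative throughout, the $\N$-semantics imposes no further obstacle, and the reductions go through unchanged.
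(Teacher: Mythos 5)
Your proposal is correct, and it follows the paper's skeleton up to a genuinely different treatment of one case. Like the paper, you first invoke Propositions~\ref{prop:undec:neg} and~\ref{prop:undec:two} to reduce to a nonnegative transfer-or-copy nonpermutation matrix $\mat{A}$, and your zero-column case coincides with the paper's case $\mat{A}_{\star,i}=\vec{0}$: one application of a permuted $\ext{\mat{A}}{n}$ with zero auxiliaries $0$-implements resets, and undecidability follows from reset-VASS reachability~\cite{AK76}. Where you diverge is the remaining case. The paper observes that a transfer or copy matrix with no zero column and no zero row is a permutation, so the residual case has a zero \emph{row}, which it uses to $?$-implement resets (auxiliaries with arbitrary values), again reducing from reset VASS only. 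You instead exploit the two columns $j_1,j_2$ sharing their $1$ in a row $i$ to build a two-step, $0$-implemented simulation of a genuine transfer $c_a\leftarrow c_a+c_b;\ c_b\leftarrow 0$, and reduce from transfer-VASS reachability~\cite{DFS98}. Your variant buys a cleaner invariant (all auxiliaries stay at $0$, so no cleanup of arbitrary auxiliary values is needed in the target configuration) at the price of a two-transition gadget and a second undecidability black box; the paper's route is shorter but leans on $?$-implementation. Two small points you should patch: (1) your gadget as written assumes $i\notin\{j_1,j_2\}$, which fails already for the canonical transfer matrix $\left(\begin{smallmatrix}1&1\\0&0\end{smallmatrix}\right)$ where $i=j_1$; in that degenerate case a \emph{single} application with $c_a$ at position $i=j_1$, $c_b$ at $j_2$ and zeros elsewhere already realizes the transfer, so the fix is one line, but the case split should be stated. (2) Your justification that the total-but-noninjective situation ``cannot occur for copy matrices'' is garbled as phrased (a copy matrix can have every row nonzero with two rows sharing a column); the correct and sufficient counting argument is that a copy matrix with no zero column has at least $d$ ones yet at most one per row, hence is a permutation, so every nonpermutation copy matrix falls under your case~(i). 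With these repairs the argument is sound.
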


\begin{proof}
  Let $\mat{A} \in \C$ be a matrix which is not a permutation
  matrix. By Propositions~\ref{prop:undec:neg}
  and~\ref{prop:undec:two}, we may assume that $\mat{A}$ is either a
  transfer or a copy matrix. Hence, $\mat{A}$ must have a column or a
  row equal to $\vec{0}$, as otherwise it would be a permutation
  matrix. Thus, we either have $\mat{A}_{\star, i} = \vec{0}$ or
  $\mat{A}_{i, \star} = \vec{0}$ for some $i \in [d]$ where
  $d \defeq \size{\mat{A}}$.

  We show that $\C$ implements resets, \ie, the operation
  $f \colon \Z \to \Z$ such that $f(x)~\defeq~0$. This suffices
  to complete the proof since reachability for VASS with resets is
  undecidable~\cite{AK76}.

  Let $X \defeq \{d + 1, d + 2, \ldots, d + n\}$ be the counters for
  which we wish to implement resets. Let $\mat{A}' \defeq
  \ext{\mat{A}}{n}$ and let $\mat{B}_x \defeq \sigma_x(\mat{A}')$
  where $\sigma_x \defeq (x; i)$. Let $x, y \in X$ be such that $x
  \neq y$. \medskip

  \parag{Case $\mat{A}_{\star, i} = \vec{0}$} We have: $\mat{B}_x \cdot
  \vec{e}_x = (\mat{B}_x)_{\star, x} = \mat{A}'_{\star, i} =
  \vec{0}$. Similarly, it can be shown that $\mat{B}_x \cdot \vec{e}_y
  = \vec{e}_y$. Hence, class $\C$ $0$-implements resets. \medskip

  \parag{Case $\mat{A}_{i, \star} = \vec{0}$} The following holds for
  every $\vec{v} \in \Z^{d + n}$:
  \begin{align*}
    (\mat{B}_x \cdot \vec{v})(x)
    &= \sum_{\ell \in [d + n]} \mat{B}_{x, \ell} \cdot \vec{v}(\ell) \\
    &= \sum_{\ell \in [d + n]} \mat{A}'_{i, \sigma_x(\ell)} \cdot
    \vec{v}(\ell) \\    
    &= 0.
  \end{align*}
  Similarly, $(\mat{B}_x \cdot \vec{v})(y) = \vec{v}(y)$. Hence, class
  $\C$ $?$-imple\-ments resets (see Appendix~\ref{sec:appendix}).
\end{proof}

\section{Parameterization by a system rather than a class}
\label{sec:arb:complexity}
In this section, we consider the (integer or standard) reachability
problem parameterized by a fixed affine VASS, rather than a matrix
class. We show that in contrast to the case of classes, this
parameterization yields an \emph{arbitrary} complexity (up to a
polynomial). More formally, this section is devoted to establishing
the following theorem:

\begin{thm}\label{thm:arb:complexity}
  Let $\emptyset \subset L \subset \{0, 1\}^*$ be a decidable decision
  problem and let $\D \in \{\Z, \N\}$. There exists an affine VASS
  $\V$ such that $L$ and the following problem are interreducible
  under polynomial-time many-one reductions:\medskip

  \begin{tabular}{lp{6cm}}
    \multicolumn{2}{l}{\underline{$\dreach{\V}$}} \\[5pt]
  
    \textsc{Input}: & two configurations $p(\vec{u}), q(\vec{v})$; \\

    \textsc{Decide}: & $p(\vec{u}) \trans{*}_\D q(\vec{v})$ in $\V$? \\
  \end{tabular}
\end{thm}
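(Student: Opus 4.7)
I will exhibit a single affine VASS $\V$ (depending on $L$ and $\D$) that faithfully simulates a Turing machine deciding $L$, then give polynomial-time many-one reductions in both directions. Fix a deterministic Turing machine $M$ that decides $L$, and fix witnesses $w_+ \in L$ and $w_- \notin L$ (guaranteed by nontriviality of $L$). The constructions that establish undecidability in the previous sections---in particular the Minsky-machine simulation of Proposition~\ref{prop:undec:neg} for $\D = \N$, and the Post-correspondence construction of Theorem~\ref{thm:undec:normtwo} for $\D = \Z$---are effective and polynomial-time: they convert any Minsky machine (equivalently any Turing machine) into an affine VASS with a polynomial-size input encoding. I would inline $M$ into one of these constructions to obtain $\V$, equipped with a distinguished initial state $p_\text{init}$, an accept state $p_\text{acc}$, and a polynomial-time encoding $w \mapsto \vec{u}_w$ for which $p_\text{init}(\vec{u}_w) \trans{*}_\D p_\text{acc}(\vec{0})$ in $\V$ iff $w \in L$.

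The forward direction is then immediate: the map $w \mapsto (p_\text{init}(\vec{u}_w), p_\text{acc}(\vec{0}))$ witnesses $L \leq_P \dreach{\V}$. The backward direction $\dreach{\V} \leq_P L$ is the substantive part, because arbitrary queries $p(\vec{u}) \trans{*}_\D q(\vec{v})$ in $\V$ need not match the canonical form that directly corresponds to an $L$-instance. My approach is to engineer $\V$ to be ``rigid'': using auxiliary tag counters together with guard matrices (for instance, matrices whose rows zero out certain counters, thereby enforcing that entry into a simulation gadget is only possible from a canonical configuration), I arrange that every non-canonical configuration is either a dead-end or sits in a polynomial-size deterministic orbit. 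Then, given any query, one first checks in polynomial time whether it is canonical; if so, decode $w$ from $\vec{u}$ and output $w$; otherwise, compute the answer directly by a polynomial-time analysis of $\V$'s structure and output $w_+$ or $w_-$.

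The main obstacle is constructing the guard-and-tag mechanism without destroying the Turing-completeness needed for the simulation of $M$. This requires a careful decomposition of $\V$ into a \emph{simulation core} (faithfully implementing $M$), an \emph{input gate} that fires only on canonical initial configurations, and an \emph{output checker} that is reachable only via a complete valid simulation. Each of these sub-gadgets must be implementable using the affine operations already shown to be available in sections~\ref{sec:trichotomy} and~\ref{sec:dichotomy}, and the combined construction must remain deterministic outside the intended simulation so that non-canonical queries admit a polynomial-time decision. Handling $\D = \Z$ and $\D = \N$ uniformly is a secondary subtlety, resolved by choosing the appropriate underlying building block (reset-based for $\N$, doubling-based for $\Z$) from the earlier sections.
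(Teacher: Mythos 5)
There is a genuine gap, and it sits exactly where you locate the ``substantive part'': the reduction $\dreach{\V} \leq_P L$. If $\V$ contains a simulation core that faithfully runs a Turing machine $M$ deciding $L$, then configurations lying one or more steps inside a genuine computation are reachable (they cannot be dead-ends, or the forward reduction breaks), yet they are non-canonical in your sense, so your scheme must answer queries anchored at them in polynomial time. That is impossible in general: given $w$, one can compute in polynomial time the configuration $c_1$ obtained by performing one simulation step from the canonical configuration $p_{\text{init}}(\vec{u}_w)$, and the non-canonical query ``$c_1 \trans{*}_\D p_{\text{acc}}(\vec{0})$?'' has the same answer as ``$w \in L$?''. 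A polynomial-time decision procedure for all such queries would therefore place $L$ in \P, contradicting the arbitrariness of $L$. The same obstruction hits queries between two intermediate configurations, which encode long segments of $M$'s run; no guard/tag mechanism can make these ``polynomial-size deterministic orbits'' without also destroying the long computations the forward reduction needs. Mapping such queries back to $L$ instead of deciding them does not obviously help either, since an adversarial intermediate configuration need not encode any recoverable $L$-instance.

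The paper's proof avoids a machine simulation altogether, and this is the missing idea. By Matiyasevich's theorem, membership ($\varphi(x)$, where $\varphi$ encodes $L$ via a pairing bijection) is equivalent to the existence of a root of a fixed polynomial $P$, so the hard work is shifted from \emph{running a computation} to \emph{guessing a witness} $\vec{y}$, which the VASS does with trivial nondeterministic increments at $p$. The only computational content of $\V$ is a weak evaluation gadget for $P$ (Proposition~\ref{prop:poly:eval}) whose structure is so constrained---the input counters are never altered, each auxiliary counter is touched only by one copy and its two adjacent loops---that \emph{every} reachability query inside it has a closed-form, polynomial-time answer. This is what makes Lemma~\ref{lem:poly:reduction}~\ref{itm:b} provable: hardness is concentrated in the single query shape $p(x,\vec{u}) \trans{*}_\D q(0,\vec{0})$, and all other queries reduce to simple arithmetic conditions. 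To repair your argument you would essentially have to rediscover this Diophantine route (or some other way to eliminate long computations from the fixed system); the rigidity-by-guards plan around a Turing-complete core cannot meet requirement~\ref{itm:b}.
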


\noindent In order to show Theorem~\ref{thm:arb:complexity}, we will
prove the following technical lemma:

\begin{restatable}{lemm}{lemPolyReduction}\label{lem:poly:reduction}
  Let $\D \in \{\Z, \N\}$ and let $\varphi \colon \D \to \{0, 1\}$ be
  a nontrivial computable predicate. There exists an affine VASS $\V =
  (d, Q, T)$ and control-states $p, q \in Q$ such that:
  \begin{enumerate}[(a)]
  \item For every vector $\vec{u}$, it is the case that $p(x, \vec{u})
    \trans{*}_\D q(0, \vec{0})$ in $\V$ iff $\varphi(x)$
    holds;\label{itm:a}

  \item Any query ``$r(\vec{v}) \trans{*}_\D r'(\vec{v}')$'' can be
    answered in polynomial time if $r \neq p$ or $r'(\vec{v}') \neq
    q(\vec{0})$.\label{itm:b}
  \end{enumerate}
\end{restatable}

Before proving Lemma~\ref{lem:poly:reduction}, let us see why it
implies the validity of Theorem~\ref{thm:arb:complexity}:

\begin{proof}[Proof of Theorem~\ref{thm:arb:complexity}.]
  Let $f \colon \{0, 1\}^* \to \D$ be a polynomial-time computable and
  invertible bijection.\footnote{If $\D = \N$, we can set $f(w) \defeq
    \mathrm{val}(\texttt{1}w) - 1$, \ie, $f = \{\varepsilon \mapsto 0,
    \texttt{0} \mapsto 1, \texttt{1} \mapsto 2, \texttt{00} \mapsto 3,
    \ldots\}$. If $\D = \Z$, we can set $f(\varepsilon) \defeq 0$ and
    $f(w_1 \cdots w_n) \defeq -1^{w_n} \cdot \mathrm{val}(\texttt{1}
    w_1 \cdots w_{n-1})$, \ie, $f = \{\varepsilon \to 0, \texttt{0}
    \mapsto 1, \texttt{1} \mapsto -1, \texttt{00} \mapsto 2,
    \texttt{01} \mapsto -2, \ldots\}$.} Let $\varphi \colon \D \to
  \{0, 1\}$ be the characteristic function of $L$ encoded by $f$,
  \ie, $w \in L$ iff $\varphi(f(w))$ holds. Let $\V$ be the affine
  VASS given by Lemma~\ref{lem:poly:reduction} for $\varphi$, and let $p,
  q$ be its associated control-states. To reduce $L$ to $\reach{\V}$,
  we construct, on input $w$, the pair $I_w \defeq \langle p(f(w),
  \vec{0}), q(0, \vec{0}) \rangle$. We have $w \in L$ iff
  $\varphi(f(w))$ holds iff $I_w \in \reach{\V}$ by
  Lemma~\ref{lem:poly:reduction}.

  The reduction from $\dreach{\V}$ to $L$ is as follows. On input $I =
  \langle r(\vec{v}), r'(\vec{v}') \rangle$:
  \begin{itemize}
  \item If $r \neq p$ or $r'(\vec{v}') \neq q(\vec{0})$, then we check
    whether $r(\vec{v}) \trans{*}_\D r'(\vec{v}')$ in polynomial time
    and return a positive (resp.\ negative) instance of $L$ if it
    holds (resp.\ does not hold). This is possible by
    Lemma~\ref{lem:poly:reduction} and by nontriviality of $L$.
     
  \item Otherwise, $I = \langle p(x, \vec{u}), q(0, \vec{0}) \rangle$
    for some value $x$ and some vector $\vec{u}$, so it suffices to
    return $f^{-1}(x)$. Indeed, by Lemma~\ref{lem:poly:reduction}, $I
    \in \dreach{\V}$ iff $\varphi(x)$ holds iff $f^{-1}(x) \in
    L$.\qedhere
  \end{itemize}
\end{proof}

In order to prove Lemma~\ref{lem:poly:reduction}, we first establish
the following proposition. Informally, it states that although an
affine VASS cannot evaluate a polynomial $P$ with the mere power of
affine functions, it can evaluate $P$ ``weakly''. Moreover, its
structure is simple enough to answer any reachability query in
polynomial time.

\begin{prop}\label{prop:poly:eval}
  Let $\D \in \{\Z, \N\}$, and let $P$ be a polynomial with integer
  coefficients and $k$ variables over $\D$. There exist an affine VASS
  $\V = (d, Q, T)$ and control-states $p, q \in Q$ such that
  $P(\vec{x}) = 0$ iff $p(\vec{x}, \vec{0}) \trans{*}_\D q(\vec{x},
  \vec{0})$ in $\V$. Moreover, the following properties
  hold:
  \begin{enumerate}[(a)]
  \item No transition of $\V$ alters its $k$ first counters
    (corresponding to $\vec{x}$);
    
  \item Any reachability query ``$r(\vec{v}) \trans{*}_\D r'(\vec{v}')$''
    can be answered in polynomial time.\label{itm:poly:poly}
  \end{enumerate}
\end{prop}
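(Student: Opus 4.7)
The plan is to construct $\V$ by decomposing $P(\vec{x}) = \sum_{\alpha} c_\alpha \vec{x}^\alpha$ into its monomials and running one gadget per monomial in sequence, all contributing to a shared accumulator counter $s$ that the target configuration forces to be $0$. The first $k$ counters (holding $\vec{x}$) only appear on the right-hand side of affine transitions; no assignment alters them, so property~(a) of the statement holds by construction.

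The main building block is a multiplication gadget that weakly computes $z' \leftarrow z \cdot x_i$. It consists of a single control-state $L$ with a self-loop whose transition performs $z' \leftarrow z' + z$ and $c \leftarrow c + 1$ (for $\Z$-reachability I include a reverse direction $z' \leftarrow z' - z$, $c \leftarrow c - 1$), followed by a single one-shot affine transition $c \leftarrow c - x_i$ on the way to the next control-state. Since the final target configuration will require $c = 0$, the net number of loop iterations is forced to equal $x_i$, yielding $z' = x_i \cdot z$. Chaining such gadgets produces a dedicated counter $z_\alpha = \vec{x}^\alpha$ for each monomial; folding $c_\alpha z_\alpha$ into the accumulator $s$ is then a single affine transition. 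For $\N$-reachability I instead keep two accumulators $s^+, s^-$ for positive and negative contributions and end with a combined decrement loop, which can reach $(0,0)$ iff $s^+ = s^-$, that is, iff $P(\vec{x}) = 0$.

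For property~\ref{itm:poly:poly}, the key observation is that $\V$ has a constant-sized flat skeleton: a fixed sequence of control-states connected by one-shot transitions and a finite number of simple self-loops. Given any query $r(\vec{v}) \trans{*}_\D r'(\vec{v}')$, I would enumerate the finitely many paths from $r$ to $r'$ in this skeleton. Along each path the only unknowns are the net iteration counts $\tau_j = t_j^+ - t_j^-$ of the self-loops used. Each verification counter $c_j$ evolves linearly: its final value equals $c_j^{\text{init}} + \tau_j - x_{i_j}$, so combining this with the target value of $c_j$ supplied in $\vec{v}'$ \emph{uniquely determines} $\tau_j$ from numerical data in $\vec{v}$ and $\vec{v}'$. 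Once these counts are fixed integers, every remaining counter equation reduces to a simple arithmetic identity to check, together with nonnegativity side conditions in the $\N$-case, giving a polynomial-time decision procedure.

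The main obstacle will be verifying that this gadget layout really does force $\tau_j = x_{i_j}$ in the intended scenario, ruling out unintended interleavings or cheating via the reverse-direction transitions. This hinges on giving each monomial its own private set of auxiliary counters so that gadgets cannot interfere, and on placing the one-shot $c \leftarrow c - x_i$ transition on the unique path leaving the loop, so that achieving $c = 0$ at the target forces the net iteration count to match $x_i$. With this tight layout in place, a straightforward induction along the chain of multiplications yields the equivalence between reaching $q(\vec{x}, \vec{0})$ and $P(\vec{x}) = 0$.
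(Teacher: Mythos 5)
Your construction is essentially the same as the paper's: both weakly evaluate $P$ by chaining repeated-addition multiplication gadgets whose loop iteration counts are pinned by an auxiliary counter that the target configuration forces to be $0$, both decompose $P$ into positively-weighted monomials to handle the $\N$ case, and both use the flat skeleton plus the determined iteration counts to answer arbitrary reachability queries in polynomial time. The minor packaging differences (your ``increment then subtract $x_i$'' verification counter versus the paper's ``copy $y_i$ then decrement to $0$'', and your two accumulators $s^+,s^-$ with a simultaneous-decrement finish versus the paper's ordering of positive monomials before negative ones) are immaterial; do remember, as the paper does, to reset all intermediate product counters on the way to $q$ so the target $q(\vec{x},\vec{0})$ is actually reachable.
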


\begin{proof}
  We adapt and simplify a contruction of the authors which was given
  in~\cite{BHMR19} for other purposes. Let us first consider the case
  of a single monomial $P(y_1, \ldots, y_k) = c y_1^{d_1} \cdots
  y_k^{d_k}$ with $c \geq 1$. We claim that the affine VASS $\V$
  depicted in Figure~\ref{fig:poly:eval} satisfies the claim. Its
  counters are $C \defeq Y \cup Y' \cup A$, where
  \begin{align*}
  Y &\defeq \{y_1, \ldots, y_k\} \text{ corresponds to the input
    variables;} \\
  Y' &\defeq \{y_{i, j} : i \in [k], j \in [d_i]\} \text{ are
    auxiliary counters that allow to copy $Y$; and} \\
  A &\defeq \{a_0\} \cup \{a_{i, j} : i \in [k], j \in [d_i]\} \cup
  \{a_\text{out}\} \text{ serve as accumulators.}
  \end{align*}
  
  \begin{figure}[h!]
  \centering%
  \newcommand{\inc}[1]{$\text{#1}\scriptstyle\mathrm{++}$}%
  \newcommand{\dec}[1]{$\text{#1}\scriptstyle\mathrm{--}$}%
  \newcommand{\pluseq}{\mathrel{+}=}%
  \newcommand{\minuseq}{\mathrel{-}=}%
  \begin{tikzpicture}[->, auto, very thick, scale=0.65, transform shape, font=\Large]      
    \tikzset{every state/.style={inner sep=1pt, minimum size=30pt}}

    \node[state]                       (p)    {$p$};
    \node[state, right=1.80cm of p]    (r0)   {$r_0$};
    \node[state, right=2.65cm of r0]   (r11)  {$r_{1,1}$};
    \node[state, right=2.65cm of r11]  (r1d1) {$r_{1,d_1}$};
    \node[state, right=2.65cm of r1d1] (rk1)  {$r_{k,1}$};
    \node[state, right=2.65cm of rk1]  (rkdk) {$r_{k,d_k}$};
    \node[state, right=3.35cm of rkdk] (q)    {$q$};

    \path[->]
    (p)    edge node {$a_0 \leftarrow c$} (r0)
    (r0)   edge node {$y_{1,1}   \leftarrow y_1$} (r11)
    (rk1)  edge node {$y_{k,d_k} \leftarrow y_k$} (rkdk)

    (rkdk) edge node[yshift=5pt] {
      \begin{tabular}{ll}
        $a_\text{out}$ & $\leftarrow a_{k, d_k}$ \\
        $a_0$          & $\leftarrow 0$ \\
        $a_{i,j}$      & $\leftarrow 0$ ($\forall i, j$)
      \end{tabular}      
    } (q)

    (r11) edge[out=90, in=135, looseness=8] node[swap, xshift=20pt] {
      \begin{tabular}{c}
        \dec{$y_{1,1}$} \\
        $a_{1,1} \pluseq a_0$
      \end{tabular}
    } (r11)
    (r11) edge[out=-90, in=-135, looseness=8] node[xshift=20pt, yshift=5pt] {
      \begin{tabular}{c}
        \inc{$y_{1,1}$} \\
        $a_{1,1} \minuseq a_0$
      \end{tabular}
    } (r11)
    (r1d1) edge[out=90, in=135, looseness=8] node[swap, xshift=20pt] {
      \begin{tabular}{c}
        \dec{$y_{1,d_1}$} \\
        $a_{1,d_1} \pluseq a_{1,d_1-1}$
      \end{tabular}
    } (r1d1)
    (r1d1) edge[out=-90, in=-135, looseness=8] node[xshift=30pt, yshift=5pt] {
      \begin{tabular}{c}
        \inc{$y_{1,d_1}$} \\
        $a_{1,d_1} \minuseq a_{1,d_1-1}$
      \end{tabular}
    } (r1d1)
    
    (rk1) edge[out=90, in=135, looseness=8] node[swap, xshift=20pt] {
      \begin{tabular}{c}
        \dec{$y_{k,1}$} \\
        $a_{k,1} \pluseq a_{k-1,d_{k-1}}$
      \end{tabular}
    } (rk1)
    (rk1) edge[out=-90, in=-135, looseness=8] node[xshift=35pt, yshift=5pt] {
      \begin{tabular}{c}
        \inc{$y_{k,1}$} \\
        $a_{k,1} \minuseq a_{k-1,d_{k-1}}$
      \end{tabular}
    } (rk1)
    (rkdk) edge[out=90, in=135, looseness=8] node[swap, xshift=20pt] {
      \begin{tabular}{c}
        \dec{$y_{k,d_k}$} \\
        $a_{k,d_k} \pluseq a_{k,d_k-1}$
      \end{tabular}
    } (rkdk)
    (rkdk) edge[out=-90, in=-135, looseness=8] node[xshift=30pt, yshift=5pt] {
      \begin{tabular}{c}
        \inc{$y_{k,d_k}$} \\
        $a_{k,d_k} \minuseq a_{k,d_k-1}$
      \end{tabular}
    } (rkdk)
    ;

    \path[->, dotted]
    (r11)  edge node {$y_{1,d_1} \leftarrow y_1$} (r1d1)
    (r1d1) edge node {$y_{k,1}   \leftarrow y_k$} (rk1)
    ;
  \end{tikzpicture}
  \caption{Affine VASS evaluating $c y_1^{d_1} \cdots
    y_k^{d_k}$, where ``\inc{$x$}'', ``\dec{$x$}'', ``$x
    \pluseq x'$'' and ``$x \minuseq x'$'' stand respectively for
    ``$x \leftarrow x + 1$'', ``$x \leftarrow
    x - 1$'', ``$x \leftarrow x + x'$'' and ``$x \leftarrow x -
    x'$''.}%
  \label{fig:poly:eval}
\end{figure}

  By construction, counters from $Y$ are never altered, merely copied
  onto $Y'$. Moreover, if counters from $Y' \cup A$ initially hold
  zero, and if loops are executed so that counters from $Y'$ reach
  zero, then $a_\text{out}$ contains $P(y_1, \ldots, y_k)$ when
  reaching control-state $q$, and every other auxiliary counter holds
  zero (due to the final resets).

  It remains to argue that Item~\ref{itm:poly:poly} holds. Let us
  consider a query ``$r(\vec{v}) \trans{*}_\D r'(\vec{v}')$''. Observe
  that, although $\V$ has nondeterminism in its loops and uses
  nonreversible operations (copies and resets), it is still
  ``reversible'' since the inputs $Y$ are never altered and since each
  counter from $Y'$ can only be altered at a single copy or via the
  two loops next to it. This provides enough information to answer the
  reachability query. Indeed, we can:
  \begin{itemize}
  \item Answer ``false'' if $\vec{v}$ and $\vec{v}'$ disagree on $Y$;
    
  \item Pretend the accumulators do not exist and traverse $\V$
    backward from $r'(\vec{v}')$ to $r$ by undoing the loops (either
    up only or down only), ensuring each counter from $Y'$ reaches its
    correct value; and answer ``false'' if not possible;

  \item Traverse $\V$ forward from $r(\vec{v})$ by running through the
    loops the number of times identified by the previous traversal;
    this now allows to determine the value of the accumulators $A$;

  \item If the values of $A$ are incorrect in $r'(\vec{v}')$, or if we
    ever drop below zero in the case of $\D = \N$, then we answer
    ``false'', and otherwise ``true''.
  \end{itemize}
  Observe that there is no need to execute the loops one step at a
  time, \eg, if $\vec{v}(y_i) = 10 \geq 3 = \vec{v}'(y_{i, j})$, then
  we can compute $7$ directly rather than in seven steps.

  Let us now consider the general case. We can write the polynomial
  $P$ as \[P(y_1, \ldots, y_k) = \sum_{1 \leq i \leq m} Q_i(y_1,
  \ldots, y_k) - \sum_{m < i \leq n} Q_i(y_1, \ldots, y_k),\] where
  each $Q_i$ is a monomial with positive coefficient. Thus, we can
  compose the above construction sequentially with $n$ distinct sets
  of auxiliary counters (only $Y$ is shared).

  Let $a_{\text{out},i}$ be output counter for $Q_i$. We add a last
  transition that computes \[\sum_{1 \leq i \leq m} a_{\text{out},i} -
  \sum_{m < i \leq n} a_{\text{out},i}\] into an extra counter and
  resets $a_{\text{out},1}, \ldots, a_{\text{out},n}$. This preserves
  all of the above properties. Indeed, each copy variable is still
  only affected locally by a single copy and the next two loops. Note that
  ordering the monomials only matters for $\D = \N$ as summing, \eg,
  $-2 + 5$ blocks, while $5 - 2$ does not.
\end{proof}

We may now conclude by proving Lemma~\ref{lem:poly:reduction}, which
we recall:

\lemPolyReduction*

\begin{proof}
  Since $\varphi$ is decidable, by Matiyasevich's
theorem\footnote{There are two variants of Matiyasevich's theorem:
  stated either over $\N$ or $\Z$. This follows, \eg, from Lagrange's
  four-square theorem, \ie, any number from $\N$ can be written as
  $a^2 + b^2 + c^2 + d^2$ where $a, b, c, d \in \Z$.}~\cite{Mat71}, there exists a
  polynomial $P$, with integer coefficients and $k$ variables, such
  that for every $x \in \D$: \[ \varphi(x) \text{
  holds} \iff \exists \vec{y} \in \D^{k-1} : P(x, \vec{y}) = 0.  \]

  Let $\V'$ be the affine VASS obtained from
  Proposition~\ref{prop:poly:eval} for $P$, and let $p'$ and $q'$ be
  its associated control-states. Let us show that the affine VASS $\V$
  depicted in Figure~\ref{fig:polynomial} satisfies the claim. It uses
  counters $C \defeq Y \cup \overline{Y}$ where $Y \defeq \{y_1,
  \ldots, y_k\}$ are the $k$ first counters of $\V'$ corresponding to
  the variables of $P$; and where $\overline{Y}$ forms the other
  auxiliary counters of $\V'$. We let $m \defeq |C|$ and sometimes
  refer to the counters $C$ as $\{c_1, \ldots, c_m\}$. For the rest of
  the proof, let us write $\vec{u}_X$ to denote the vector obtained by
  restricting $\vec{u}$ to counters $X \subseteq C$.

  The affine VASS $\V$ is divided into two parts connecting $p$ to
  $q$: the \emph{upper gadget} made of control-states $\{a,
  b_1, \ldots, b_m\}$, and the \emph{lower gadget} made of $\V'$.

  \begin{figure}[h]
  \newcommand{\inc}[1]{$\text{#1}\scriptstyle\mathrm{++}$}%
  \newcommand{\dec}[1]{$\text{#1}\scriptstyle\mathrm{--}$}%
  \centering%
  \vspace*{-5pt} 
  \begin{tikzpicture}[->, node distance=1cm, auto, very thick, scale=0.8, transform shape, font=\Large]      
    \tikzset{every state/.style={inner sep=1pt, minimum size=25pt}}

    \colorlet{colLower}{cyan!25}
    
    \node[state] (p)                                  {$p$};
    \node[state, fill=colLower] (pp) [right=5cm of p]  {$p'$};
    \node[state, fill=colLower] (qq) [right=1cm of pp] {$q'$};    
    \node[state]                (q)  [right=4cm of qq] {$q$};

    \begin{scope}[on background layer]
      \node[fit=(pp)(qq), inner sep=20pt, fill=colLower,
           label=below:$\V'$] (lower) {};
    \end{scope}

    \path[->]
    (p) edge node[xshift=10pt] {reset $\overline{Y}$} (pp)
    ;
    
    \path[->, looseness=15]
    (p) edge[out=155, in=125] node[xshift=20pt, yshift=5pt]
         {\inc{$y_2$}} (p)
    (p) edge[out= 55, in= 25] node[xshift=-20pt, yshift=5pt]
         {\inc{$y_k$}} (p)
    (p) edge[out=-155, in=-125] node[swap, xshift= 20pt, yshift=-5pt]
         {\dec{$y_2$}} (p)
    (p) edge[out= -55, in= -25] node[swap, xshift=-20pt, yshift=-5pt]
         {\dec{$y_k$}} (p)
    ;
         
    \path[->, looseness=15, dotted]
    (p) edge[out= 105, in= 75] node[] {} (p)
    (p) edge[out=-105, in=-75] node[] {} (p)
    ;

    \path[->]
    (qq) edge node {reset $Y$} (q)
    ;

    \path[->, dotted]
    (pp) edge[out=15, in=195, looseness=2] node {} (qq)
    ;

    \colorlet{colUpper}{magenta!25}

    \node[state, fill=colUpper]  (a) [above=3cm of p, xshift=3.5cm] {$a$};
    \node[state, fill=colUpper] (b1) [above right=5pt and 4cm of a] {$b_1$};
    \node[yshift=2pt]           (bi) [      right=        4cm of a] {$\vdots$};
    \node[state, fill=colUpper] (bm) [below right=5pt and 4cm of a] {$b_m$};

    \draw[->] (p) |- (a);

    \path[->]
    (a)  edge[bend  left=5] node[xshift=35pt, yshift=+5pt]
         {$c_1 \leftarrow 1$} (b1)
    (a)  edge[bend right=5] node[xshift=35pt, yshift=-5pt, swap]
         {$c_m \leftarrow 1$} (bm)
    ;

    \path[->]
    (b1) edge[out=-10, in=90] node {} (q)
    (b1) edge[out=+10, in=90] node[xshift=-5pt, yshift= 5pt]
         {$c_1 \leftarrow -c_1$} (q)

    (bm) edge[out=+10, in=90] node {} (q)
    (bm) edge[out=-10, in=90] node[xshift=-5pt, yshift=5pt, swap]
         {$c_m \leftarrow -c_m$} (q)

    (b1) edge[loop above] node[xshift=-5pt] {\inc{$c_1$}} ()
    (bm) edge[loop below] node[xshift=-5pt] {\inc{$c_m$}} ()
    ;
    
    \path[->, looseness=15]
    (a) edge[out=155,  in= 125] node[xshift=+20pt, yshift=5pt]
         {\inc{$c_1$}} (a)    
    (a) edge[out= 55,  in=  25] node[xshift=-20pt, yshift=5pt]
         {\inc{$c_m$}} (a)         
    (a) edge[out=-155, in=-125] node[xshift=+25pt, yshift=-5pt, swap]
         {\dec{$c_1$}} (a)
    (a) edge[out=-55,  in= -25] node[xshift=-20pt, yshift=-5pt, swap]
         {\dec{$c_m$}} (a)
    ;

    \path[->, looseness=15, dotted]
    (a) edge[out= 105, in= 75] node[] {} (a)
    (a) edge[out=-105, in=-75] node[] {} (a)
    ;

  \end{tikzpicture}
  \caption{Affine VASS $\V$ of Lemma~\ref{lem:poly:reduction}, where
    ``\inc{$x$}'', ``\dec{$x$}'' and ``reset $\{x_1, x_2, \ldots\}$''
    stand respectively for ``$x \leftarrow x + 1$'', ``$x \leftarrow x
    - 1$'' and ``$x_1 \leftarrow 0; x_2 \leftarrow 0; \cdots$''.}%
  \label{fig:polynomial}
\end{figure}

  The purpose of the lower gadget is to satisfy Item~\ref{itm:a} by
  checking whether $\varphi(y_1)$ holds. This is achieved by
  (1)~guessing an arbitrary assignment to counters $Y \setminus
  \{y_1\}$; (2)~resetting auxiliary counters $\overline{Y}$;
  (3)~evaluating $P(y_1, \ldots, y_k)$; and (4)~resetting counters
  $Y$.

  The purpose of the upper gadget is to satisfy Item~\ref{itm:b} by
  simplifying other queries from $p$ to $q$. More precisely, the upper
  gadget allows to generate an arbitrary nonzero vector. This is
  achieved by (1)~setting each counter $c_i \in C$ to an arbitrary
  value; (2)~nondeterministically setting a counter $c_j$ to $1$;
  (3)~setting $c_j$ to an arbitrary positive value;
  (4)~nondeterministically keeping or flipping the sign of $c_j$ (the
  latter only works if $\D = \Z$). These steps ensure that all counter
  values are possible, provided that some counter $c_j \neq 0$.

  Let us first show Item~\ref{itm:a}. Suppose $p(x, \vec{u})
  \trans{*}_\D q(0, \vec{0})$. Clearly, the target is not reached
  through the upper gadget. So, the following holds for some value
  $x'$ and some vectors $\vec{y}, \vec{y}'$:
  \[
  p(x, \vec{u}_{Y \setminus \{y_1\}}, \vec{u}_{\overline{Y}})
  \trans{*}_\D  p(x, \vec{y}, \vec{u}_{\overline{Y}})
  \trans{}_\D  p'(x, \vec{y}, \vec{0})
  \trans{*}_\D q'(x', \vec{y}', \vec{0})
  \trans{}_\D   q(0, \vec{0}, \vec{0}).
  \]
  Since $\V'$ does not alter counters $Y$, we have $(x, \vec{y}) =
  (x', \vec{y}')$ and consequently:
  \[
  p'(x, \vec{y}, \vec{0}) \trans{*}_\D
  q'(x, \vec{y}, \vec{0}).
  \]
  Hence, $P(x, \vec{y}) = 0$, which implies that $\varphi(x)$ holds,
  as desired. Conversely, if $\varphi(x)$ holds, then $P(x, \vec{y}) =
  0$ holds for some vector $\vec{y}$. Clearly, we can achieve the
  following:
  \[
  p(x, \vec{u}_{Y \setminus \{y_1\}}, \vec{u}_{\overline{Y}})
  \trans{*}_\D p(x, \vec{y}, \vec{u}_{\overline{Y}})
  \trans{}_\D  \underbrace{p'(x, \vec{y}, \vec{0})
  \trans{*}_\D q'(x, \vec{y}, \vec{0})}_\text{by Prop.~\ref{prop:poly:eval}}
  \trans{}_\D  q(0, \vec{0}, \vec{0}).\]

  Let us now show Item~\ref{itm:b}. Recall that we want to show that
  queries not covered by Item~\ref{itm:a} can be answered in
  polynomial time. These are queries of the form
  $r(\vec{v}) \trans{*}_\D r'(\vec{v}')$ where $\neg(r = p \land
  r'(\vec{v}') = q(\vec{0}))$, which amounts to either $r \neq p$ or
  $r'(\vec{v}') \neq q(\vec{0})$.

  We assume w.l.o.g.\ that $r$ can reach $r'$ in the underlying graph,
  as it can be tested in linear time. Let $Q'$ be the control-states
  of $\V'$. We make a case distinction on $r$ and $r'$, and explain
  each time how to answer the query.\medskip

  \noindent\emph{Lower part}:\medskip
  
  \begin{itemize}
  \setlength\itemsep{5pt}
    
  \item \emph{Case $r = r' = p$.} Amounts to $\vec{v}'(c) =
    \vec{v}(c)$ for every $c \in C \setminus \{y_2, \ldots y_k\}$.

  \item \emph{Case $r = p$, $r' \in Q'$.} Recall that $\V'$ does
    not alter $Y$, that $p$ can generate any values within $Y \setminus
    \{y_1\}$, and that the transition from $p$ to $p'$ resets
    $\overline{Y}$. Hence, we have:
    \[
    p(\vec{v}) \trans{*}_\D r'(\vec{v}')
    \iff
    \vec{v}(y_1) = \vec{v}'(y_1)
    \land
    p'(\vec{v}'_Y, \vec{0}) \trans{*}_\D r'(\vec{v}'_Y,
    \vec{v}'_{\overline{Y}}).
    \]
    The latter can be answered in polynomial time by
    Proposition~\ref{prop:poly:eval}.
    
  \item \emph{Case $r = p$, $r' = q$.} Since $r = p$, we must have
    $\vec{v}' \neq \vec{0}$ by assumption. Thus, the answer is
    ``true'' since the upper gadget allows to reach any nonzero vector
    in $q$.

  \item \emph{Case $r, r' \in Q'$.} Can be tested in polynomial time
    by Proposition~\ref{prop:poly:eval}.
    
  \item \emph{Case $r \in Q'$, $r' = q$.} Recall that $\V'$ does
    not alter $Y$, and that the transition from $q'$ to $q$ resets
    $Y$, but not $\overline{Y}$. Hence, we have:
    \[
    r(\vec{v}) \trans{*}_\D q(\vec{v}')
    \iff
    r(\vec{v}_Y, \vec{v}_{\overline{Y}}) \trans{*}_\D q'(\vec{v}_Y,
    \vec{v}'_{\overline{Y}}).
    \]
    The latter can be answered in polynomial time by
    Proposition~\ref{prop:poly:eval}.

  \end{itemize}

  \medskip\noindent\emph{Upper part:}\medskip
  \begin{itemize}
    \setlength\itemsep{5pt}

  \item \emph{Case $r \in \{p, a\}$ and $r' = a$.} The
    answer is always ``true''.

  \item \emph{Case $r \in \{p, a\}$ and $r' = b_i$.}  Amounts to
    $\vec{v}'(c_i) \geq 1$.

  \item \emph{Case $r = a$, $r' = q$.} Amounts to $\bigvee_{i \in
    [m]} \vec{v}'(c_i) \neq 0$.

  \item \emph{Case $r = r' = b_i$.} Amounts to $\vec{v}'(c_i) \geq
    \vec{v}(c_i)$ and $\vec{v}'(c_j) = \vec{v}(c_j)$ for all $j \neq
    i$.

  \item \emph{Case $r = b_i$, $r' = q$.} Amounts to
    $|\vec{v}'(c_i)| \geq |\max(\vec{v}(c_i), 0)|$ and $\vec{v}'(c_j)
    = \vec{v}(c_j)$ for $j \neq i$. \qedhere
  \end{itemize}
\end{proof}

\section{Conclusion and further work}
\label{sec:conclusion}
Motivated by the use of relaxations to alleviate the tremendous
complexity of reachability in VASS, we have studied the complexity of
integer reachability in affine VASS\@.

Namely, we have shown a trichotomy on the complexity of integer
reachability for affine VASS\@: it is \NP-complete for any class of
reset matrices; \PSPACE-complete for any class of pseudo-transfers
matrices and any class of pseudo-copies matrices; and undecidable for
\emph{any} other class. Moreover, the notions and techniques
introduced along the way allowed us to give a complexity dichotomy for
(standard) reachability in affine VASS\@: it is decidable for any class
of permutation matrices, and undecidable for any other class. This
provides a complete general landscape of the complexity of
reachability in affine VASS\@.

We further complemented these trichotomy and dichotomy by showing
that, in contrast to the case of classes, the (integer or standard)
reachability problem has an arbitrary complexity when it is
parameterized by a fixed affine VASS\@. A further direction of study is
the range of possible complexities for integer reachability relations
for specific matrix monoids, which is entirely open.

Another direction lies in the related \emph{coverability problem}
which asks whether $p(\vec{u}) \trans{*}_\D q(\vec{v}')$ for some
$\vec{v}' \geq \vec{v}$ rather than $\vec{v}' = \vec{v}$. As shown in
the setting of~\cite{HH14,CHH18}, this problem is equivalent to the
reachability problem for $\D = \Z$. Indeed, given an affine VASS $\V$,
it is possible to construct an affine VASS $\V'$ such that
\begin{equation}
  p(\vec{u}) \trans{*}_\Z q(\vec{v}) \text{ in } \V
  \iff
  p(\vec{u}, -\vec{u}) \trans{*}_\Z q(\vec{v}, -\vec{v}) \text{ in }
  \V'.\tag{*}\label{eq:cov:reach}
\end{equation}
This can be achieved by replacing each affine transformation
$\mat{A}\vec{x} + \vec{b}$ of $\V$ by
\[
\begin{bmatrix}
  \mat{A} & \mat{0} \\
  \mat{0} & \mat{A}
\end{bmatrix}
\begin{bmatrix}
  \vec{x} \\
  \vec{x}'
\end{bmatrix}
+
\begin{bmatrix}
   \vec{b} \\
  -\vec{b}
\end{bmatrix}.
\]
Furthermore, note that classes are closed under this construction,
which shows that integer coverability and integer reachability are
equivalent w.r.t.\ classes. However,~\eqref{eq:cov:reach} does not
hold for $\D = \N$. In this case, it is well-known, \eg, that the
coverability problem is decidable for VASS with resets or transfers,
while the reachability problem is not. Hence, a precise
characterization of the complexity landscape remains unknown in this
case.

\bibliographystyle{alpha}
\bibliography{references}

\clearpage
\appendix
\section{Appendix}
\label{sec:appendix}
\subsection{Details for the proof of Proposition~\ref{prop:exist:flip}}
\leavevmode\medskip

\parag{Pseudo-transfer matrix} Let us first prove
properties~\ref{itm:flips:src:tgt} and~\ref{itm:flips:other} stated
within the proof of Proposition~\ref{prop:exist:swaps} for the case
where $\mat{A}$ is a pseudo-transfer matrix. The validity
of~\ref{itm:flips:src:tgt} for $\mat{B}_{s, t}$ follows from:
\begin{align}
  (\mat{B}_{s, t} \cdot \vec{e}_s)(k)
  &= (\mat{B}_{s, t})_{k, s} \nonumber \\
  &= \mat{A}'_{\pi_{s, t}(k), b}
  && \text{(since $\pi_{s, t}(s) = b$)} \nonumber \\
  &= -1 \text{ if } \pi_{s, t}(k) = a \text{ else } 0\label{eq:flip:tr} \\
  &= -1 \text{ if } k = t \text{ else } 0
  && \text{(since $\pi_{s, t}(t) = a$)} \nonumber
  \intertext{where~\eqref{eq:flip:tr} follows from $\mat{A}'_{a, b} =
    -1$ and the fact that $\mat{A}'$ is a pseudo-transfer matrix. The
    validity of~\ref{itm:other} $\mat{B}_{s, t}$ follows from:}
  (\mat{B}_{s, t} \cdot \vec{e}_u)(k) \nonumber
  &= (\mat{B}_{s, t})_{k, u} \nonumber \\
  &= \mat{A}'_{\pi_{s, t}(k), u} = 1
  && \text{(since $u \in X' \setminus \{s, t\}$)} \nonumber \\
  &= 1 \text{ if } \pi_{s, t}(k) = u \text{ else } 0
  && \text{(by def.\ of $\mat{A}'$)} \nonumber \\
  &= 1 \text{ if } k = u \text{ else } 0
  && \text{(since $\pi_{s, t}(u) = u$)}. \nonumber
\end{align}
The same proofs apply mutatis mutandis for $\mat{C}_t$. \medskip

\parag{Pseudo-copy matrix} Let us now prove
Proposition~\ref{prop:exist:flip} for the case where $\mat{A}$ is a
pseudo-copy matrix. For this case, we consider $?$-implementation and
hence $V_X = \Z^{d'}$. Similarly to the case of pseudo-transfer
matrices, we claim that for every $\vec{v} \in V_X$ and every $s, t, u
\in X'$ such that $s \neq t$ and $u \not\in \{s, t\}$, the following
holds:
\begin{enumerate}[(1)]
\item $(\mat{B}_{s, t} \cdot \vec{v})(t) = (\mat{C}_t \cdot
  \vec{v})(t) = -\vec{v}(s)$;\label{itm:src:tgt:flip}
  
\item $(\mat{B}_{s, t} \cdot \vec{v})(u) = (\mat{C}_t \cdot
  \vec{v})(u) = \vec{v}(u)$.\label{itm:other:flip}
\end{enumerate}

Indeed, we have:
\begin{align*}
  (\mat{B}_{s, t} \cdot \vec{v})(t)
  &= \sum_{\ell \in [d] \cup X'} (\mat{B}_{s, t})_{t, \ell} \cdot
  \vec{v}(\ell) \\
  &= \sum_{\ell \in [d] \cup X'} \mat{A}'_{a, \pi_{s, t}(\ell)} \cdot
  \vec{v}(\ell)
  && \text{(since $\pi_{s, t}(t) = a$)} \\
  &= \mat{A}'_{a, b} \cdot \vec{v}(s) + \sum_{\substack{\ell \in [d]
      \cup X' \\ \ell \neq t}} \!\!\mat{A}'_{a, \pi_{s, t}(\ell)} \cdot
  \vec{v}(\ell)
  && \text{(since $\pi_{s, t}(s) = b$)} \\
  &= -\vec{v}(s)
  && \text{(by $\mat{A}'_{a, j} \neq 0 \iff j = b$)},
  \intertext{where the last equality follows from $\mat{A}'_{a, b} =
    -1$ and the fact that $\mat{A}'$ is a pseudo-copy matrix. Moreover,
    we have:}
  (\mat{B}_{s, t} \cdot \vec{v})(u)
  &= \sum_{\ell \in [d] \cup X'} (\mat{B}_{s, t})_{u, \ell} \cdot
  \vec{v}(\ell) \\
  &= \mat{A}'_{u, u} \cdot \vec{v}(u) + \sum_{\substack{\ell \in [d]
      \cup X' \\ \ell \neq u}} \!\!\mat{A}'_{a, \pi_{s, t}(\ell)} \cdot
  \vec{v}(\ell)
  && \text{(since $\pi_{s, t}(u) = u$)} \\
  &= \vec{v}(u)
  && \text{(by $\mat{A}'_{u, j} \neq 1 \iff j = u$)}.
\end{align*}
Again, the same proofs apply mutatis mutandis for $\mat{C}_t$.

We now prove the proposition. Let $x \in X$ and $\vec{v} \in V_X$. We
obviously have $\mat{F}_x \cdot \vec{v} \in V_X$. If $a \neq b$, we
have:
\begin{align*}
  (\mat{F}_x \cdot \vec{v})(x)
  &= (\mat{B}_{z, x} \cdot (\mat{B}_{y, z} \cdot \mat{B}_{x, y}\cdot
  \vec{v}))(x_)
  && \text{(by def.\ of $\mat{F}_x$)} \\
  &= -(\mat{B}_{y, z} \cdot (\mat{B}_{x, y} \cdot \vec{v}))(z)
  && \text{(by~\ref{itm:src:tgt:flip})} \\
  &= (\mat{B}_{x, y} \cdot \vec{v})(y) &&
  \text{(by~\ref{itm:src:tgt:flip})} \\
  &= -\vec{v}(x) &&
  \text{(by~\ref{itm:src:tgt:flip})},
  \intertext{and if $a = b$, we have:}
  (\mat{F}_x \cdot \vec{v})(x)
  &= (\mat{C}_x \cdot \vec{v})(x)
  && \text{(by def.\ of $\mat{F}_x$)} \\
  &= -\vec{v}(x)
  && \text{(by~\ref{itm:src:tgt:flip})}.
\end{align*}
Similarly, by applying~\ref{itm:other:flip} repeatedly, we derive
$(\mat{F}_x \cdot \vec{v})(y) = \vec{v}(y)$ for every $y \in X
\setminus \{x\}$.\hfill\qed{}

\subsection{Details for the proof of Proposition~\ref{prop:exist:swaps}}

The details of the proof are similar to those of the proof of
Proposition~\ref{prop:exist:flip}.\medskip

\parag{Transfer matrix} Let us first prove
properties~\ref{itm:src:tgt} and~\ref{itm:other} stated within the
proof of Proposition~\ref{prop:exist:swaps} for the case where
$\mat{A}$ is a transfer matrix. The validity of~\ref{itm:src:tgt}
follows from:
\begin{align*}
  (\mat{B}_{s, t} \cdot \vec{e}_s)(k) = 1
  &\iff (\mat{B}_{s, t})_{k, s} = 1 \\
  &\iff \mat{A}'_{\pi_{s, t}(k), b} = 1
  && \text{(since $\pi_{s, t}(s) = b$)} \\
  &\iff \pi_{s, t}(k) = a
  && \text{(since $\mat{A}$ is a transfer matrix)} \\
  &\iff k = t.
  \intertext{The validity of~\ref{itm:other} follows from:}
  (\mat{B}_{s, t} \cdot \vec{e}_u)(k) = 1
  &\iff (\mat{B}_{s, t})_{k, u} = 1 \\
  &\iff \mat{A}'_{\pi_{s, t}(k), u} = 1
  && \text{(since $u \in X' \setminus \{s, t\}$)} \\
  &\iff \pi_{s, t}(k) = u
  && \text{(by def.\ of $\mat{A}'$)} \\
  &\iff k = u.
\end{align*}

\parag{Copy matrix} Let us now prove
Proposition~\ref{prop:exist:swaps} for the case where $\mat{A}$ is a
copy matrix. For this case, we consider $?$-implementation and hence
$V_X = \Z^{d'}$. Similarly to the case of transfer matrices, we claim
that for every $\vec{v} \in V_X$ and every $s, t, u \in X'$ such that
$s \neq t$ and $u \not\in \{s, t\}$, the following holds:
\begin{enumerate}[(1)]
\item $(\mat{B}_{s, t} \cdot \vec{v})(t) =
  \vec{v}(s)$;\label{itm:src:tgt:copy}
  
\item $(\mat{B}_{s, t} \cdot \vec{v})(u) =
  \vec{v}(u)$.\label{itm:other:copy}
\end{enumerate}

Indeed, we have:
\begin{align*}
  (\mat{B}_{s, t} \cdot \vec{v})(t)
  &= \sum_{\ell \in [d] \cup X'} (\mat{B}_{s, t})_{t, \ell} \cdot
  \vec{v}(\ell) \\
  &= \sum_{\ell \in [d] \cup X'} \mat{A}'_{a, \pi_{s, t}(\ell)} \cdot
  \vec{v}(\ell)
  && \text{(since $\pi_{s, t}(t) = a$)} \\
  &= \mat{A}'_{a, b} \cdot \vec{v}(s) + \sum_{\substack{\ell \in [d]
      \cup X' \\ \ell \neq t}} \!\!\mat{A}'_{a, \pi_{s, t}(\ell)} \cdot
  \vec{v}(\ell)
  && \text{(since $\pi_{s, t}(s) = b$)} \\
  &= \vec{v}(s)
  && \text{(by $\mat{A}'_{a, j} = 1 \iff j = b$)},
  \intertext{where the last equality follows from $\mat{A}'_{a, b} = 1$
    and the fact that $\mat{A}'$ is a copy matrix. Morever, we have:}
  (\mat{B}_{s, t} \cdot \vec{v})(u)
  &= \sum_{\ell \in [d] \cup X'} (\mat{B}_{s, t})_{u, \ell} \cdot
  \vec{v}(\ell) \\
  &= \mat{A}'_{u, u} \cdot \vec{v}(u) + \sum_{\substack{\ell \in [d]
      \cup X' \\ \ell \neq u}} \!\!\mat{A}'_{a, \pi_{s, t}(\ell)} \cdot
  \vec{v}(\ell)
  && \text{(since $\pi_{s, t}(u) = u$)} \\
  &= \vec{v}(u)
  && \text{(by $\mat{A}'_{u, j} = 1 \iff j = u$)}.
\end{align*}

We may now prove the proposition. Let $\vec{v} \in V_X$ and let $x, y
\in X$ be such that $x \neq y$. We obviously have $\mat{F}_{x, y}
\cdot \vec{v} \in V_X$. Moreover, we have:
\begin{align*}
  (\mat{F}_{x, y} \cdot \vec{v})(x)
  &= (\mat{B}_{z, x} \cdot (\mat{B}_{x, y} \cdot \mat{B}_{y,
    z} \cdot \vec{v}))(x)
  && \text{(by def.\ of $\mat{F}_{x, y}$)} \\
  &= (\mat{B}_{x, y} \cdot (\mat{B}_{y, z} \cdot \vec{v}))(z)
  && \text{(by~\ref{itm:src:tgt:copy})} \\
  &= (\mat{B}_{y, z} \cdot \vec{v})(z) &&
  \text{(by~\ref{itm:other:copy})} \\
  &= \vec{v}(y) &&
  \text{(by~\ref{itm:src:tgt:copy})},
\end{align*}
and symmetrically $(\mat{F}_{x, y} \cdot \vec{v})(y) =
\vec{v}(x)$. Similarly, by applying~\ref{itm:other:copy} three
times, we derive $(\mat{F}_{x, y} \cdot \vec{v})(u) = \vec{v}(u)$ for
every $u \in X \setminus \{x, y\}$.\hfill\qed{}

\subsection{Details for the proof of Proposition~\ref{prop:row:col:two}}

Let us prove the technical lemma invoked within the proof of
Proposition~\ref{prop:row:col:two}:

\begin{lem}\label{lem:same:sign}
  For every class $\C$, if $\C$ contains a matrix which has a row
  (resp.\ column) with at least two nonzero elements, then $\C$ also
  contains a matrix which has a row (resp.\ column) with at least two
  nonzero elements with the same sign.
\end{lem}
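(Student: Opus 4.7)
The plan is to construct inside $\C$ a realization of the Kronecker square $\mat{A} \otimes \mat{A}$, leveraging the fact that squaring an entry erases its sign: any two nonzero entries $a, b$ sharing a common row (resp.\ column) of $\mat{A}$ yield two strictly positive entries $a^2, b^2$ sharing the corresponding row (resp.\ column) of $\mat{A} \otimes \mat{A}$, which establishes the lemma.

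Concretely, suppose $\mat{A} \in \C$ has $\mat{A}_{i, j} = a \neq 0$ and $\mat{A}_{i, k} = b \neq 0$ with $j \neq k$, and let $d \defeq \size{\mat{A}}$. I would first produce the block-diagonal matrix $\mat{D}$ of size $d^2$ consisting of $d$ copies of $\mat{A}$ on pairwise disjoint $d$-element blocks of counters. Starting from $\mat{A}^\star \defeq \ext{\mat{A}}{(d-1)d}$ (which carries $\mat{A}$ on the first block and identities on the remaining $d-1$ blocks), for each $\ell \in [d]$ let $\sigma_\ell$ denote the permutation of $[d^2]$ that exchanges the first block of counters with the $\ell$-th block, and set $\mat{M}_\ell \defeq \perm{\mat{A}^\star}{\sigma_\ell} \in \C$, so that $\mat{M}_\ell$ has $\mat{A}$ on the $\ell$-th block and identities elsewhere. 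Since each $\mat{M}_\ell$ agrees with the identity outside its own block, the $\mat{M}_\ell$ pairwise commute and their product telescopes: $\mat{M}_1 \cdot \mat{M}_2 \cdots \mat{M}_d = \mat{D} \in \C$.

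Identifying $[d^2]$ with $[d] \times [d]$, the matrix $\mat{D}$ realizes $\mat{I}_d \otimes \mat{A}$. Conjugating $\mat{D}$ by the involution $\tau \colon (i_1, i_2) \mapsto (i_2, i_1)$ that swaps the two Kronecker factors produces $\mat{D}' \defeq \perm{\mat{D}}{\tau} \in \C$, which realizes $\mat{A} \otimes \mat{I}_d$. Hence $\mat{D} \cdot \mat{D}' = (\mat{I}_d \otimes \mat{A})(\mat{A} \otimes \mat{I}_d) = \mat{A} \otimes \mat{A} \in \C$. In this matrix, the row indexed by $(i, i)$ contains $\mat{A}_{i, j}^2 = a^2 > 0$ at column $(j, j)$ and $\mat{A}_{i, k}^2 = b^2 > 0$ at column $(k, k)$, settling the row case. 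The column version is entirely symmetric: if column $j$ of $\mat{A}$ has nonzero entries $a$ at row $i$ and $b$ at row $k$, the same $\mat{A} \otimes \mat{A}$ has $a^2$ and $b^2$ in rows $(i, i)$ and $(k, k)$ of column $(j, j)$.

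The main obstacle is simply careful bookkeeping: one must precisely define the permutations $\sigma_\ell$ and $\tau$, verify that the product $\mat{M}_1 \cdots \mat{M}_d$ actually equals the block-diagonal $\mat{D}$, and confirm that conjugating $\mat{D}$ by $\tau$ indeed swaps the two Kronecker factors. These checks are routine but easy to get off-by-one wrong if one is imprecise about the flat-versus-pair indexing of $[d^2]$. Every closure operation used (extension, permutation conjugation, multiplication) is immediate from the definition of a class.
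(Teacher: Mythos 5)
Your proof is correct, and it follows a genuinely different route from the paper's. You build the full Kronecker square $\mat{A} \otimes \mat{A}$ inside $\C$ (dimension $d^2$, via a block-diagonal matrix $\mat{I}_d \otimes \mat{A}$ and its conjugate $\mat{A} \otimes \mat{I}_d$, multiplied together), so that every entry gets squared at once: row $(i,i)$ then carries $a^2>0$ and $b^2>0$ at columns $(j,j)$ and $(k,k)$, and the column case reads off dually from the same matrix. The paper instead works in dimension $2d$ and squares only the single entry of bad sign: after reducing WLOG to $a<0<b$, it forms $\ext{\mat{A}}{d}$ and a conjugate of it under a carefully chosen permutation, and multiplies so that one target entry equals $a^2$ while the other stays $b$. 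This is more economical (quadratic blow-up avoided), but the column case is then handled by a somewhat delicate transpose computation showing that the matrix $\mat{D}^\transpose$ built from $\mat{A}^\transpose$ is in fact expressible using only $\mat{A}$ and class operations. Your Kronecker-square construction buys complete row/column symmetry and removes the need for that transpose argument, at the cost of a larger auxiliary dimension; all closure steps you invoke (extension, permutation conjugation, product) are indeed exactly the class axioms, so the argument is sound as stated.
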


\begin{proof}
  We first consider the case of rows. Let $\mat{A} \in \C$, $i, j, k
  \in [d]$ and $a, b \neq 0$ be such that $\mat{A}_{i, j} = a$,
  $\mat{A}_{i, k} = b$ and $j \neq k$. If $a$ and $b$ have the same
  sign, then we are done. Thus, assume without loss of generality that
  $a < 0$ and $b > 0$.

  Let us first give an informal overview of the proof where we see
  $\mat{A}$ as an operation over some counters. We have two counters
  $x$ and $y$ that we wish to sum up (with some positive integer
  coefficients), using a supply of counters set to zero. We apply
  $\mat{A}$ to $x$ and some zero counters to produce $a \cdot x$ in
  some counter (while discarding extra noise into some other ones),
  and we then apply $\mat{A}$ again to $a \cdot x$, $y$ and some zero
  counters in such a way that we get $a^2 \cdot x + b \cdot y$. The
  matrix achieving this procedure will have $a^2$ and $b$ on a common
  row.

  More formally, we wish to obtain a matrix $\mat{D}$ with positive
  entries $a^2$ and $b$, and more precisely such that $\mat{D}_{i, j'}
  = a^2$ and $\mat{D}_{i, k} = b$ for some counter $j'$. Let $\mat{B}
  \defeq \ext{\mat{A}}{d}$, $\mat{C} \defeq \perm{\mat{B}}{\sigma}$
  and $\mat{D} \defeq \mat{B} \cdot \mat{C}$ where $\sigma \colon [2d]
  \to [2d]$ is the following permutation:
  \[
  \sigma \defeq
  \begin{cases}
    (j; i; i+d) \cdot \prod_{\ell \in [d] \setminus \{i, j\}}
    (\ell; \ell + d) & \text{if } j \neq i, \\
    \prod_{\ell \in [d] \setminus \{i\}} (\ell; \ell + d) & \text{if
    } j = i.
  \end{cases}
  \]
  We claim that $\mat{D}$ is as desired. First, observe that:
  \begin{align*}
    \mat{D}_{i, k}
    &= \sum_{\ell \in [2d]} \mat{B}_{i, \ell} \cdot
    \mat{B}_{\sigma(\ell), k+d}
    && \text{(by def.\ of $\mat{D}$ and $\sigma(k) = k+d$)} \\
    &= \mat{B}_{i, k} \cdot \mat{B}_{k+d, k+d}\label{eq:ik}
    && \text{(since $\mat{B}_{\sigma(\ell), k+d} \neq 0 \iff
      \sigma(\ell) = k+d$)} \\
    &= b
    && \text{(since $\mat{B}_{k+d, k+d} = 1$)}.
  \end{align*}  
  Thus, $\mat{D}$ has a positive entry on row $i$. It remains to show
  that $\mat{D}$ has another positive entry on row $i$. We make a case
  distinction on whether $j = i$.\medskip

  \parag{Case $j \neq i$} Note that $k \neq i+d$. Hence, we are done
  since:
  \begin{align*}
    \mat{D}_{i, i+d}
    &= \sum_{\ell \in [2d]} \mat{B}_{i, \ell} \cdot \mat{B}_{\sigma(\ell), j}
    && \text{(by def.\ of $\mat{D}$ and $\sigma(i+d) = j$)} \\
    &= \sum_{\ell\, :\, \ell, \sigma(\ell) \in [d]} \mat{B}_{i, \ell} \cdot
    \mat{B}_{\sigma(\ell), j}
    && \text{(since $\mat{B} = \ext{\mat{A}}{d}$ and $i, j \in [d]$)} \\
    &= \mat{B}_{i, j} \cdot \mat{B}_{i, j}
    && \text{(by def.\ of $\sigma$)} \\
    &= a^2.
  \end{align*}
  
  \parag{Case $j = i$} Note that $k \neq j = i$. Hence, we are done
  since:
  \begin{align*}
    \mat{D}_{i, i}
    &= \sum_{\ell \in [2d]} \mat{B}_{i, \ell} \cdot \mat{B}_{\sigma(\ell), i}
    && \text{(by def.\ of $\mat{D}$ and $\sigma(i) = i$)} \\
    &= \sum_{\ell : \ell, \sigma(\ell) \in [d]} \mat{B}_{i, \ell} \cdot
    \mat{B}_{\sigma(\ell), i}
    && \text{(since $\mat{B} = \ext{\mat{A}}{d}$ and $i \in [d]$)} \\
    &= \mat{B}_{i, i} \cdot \mat{B}_{i, i}
    && \text{(by def.\ of $\sigma$)} \\
    &= a^2
    && \text{(since $i = j$)}.
  \end{align*}
  We are done proving the proposition for the case of rows.

  For the case of columns, we can instead assume that
  $\mat{A}^\transpose \in \C$, \ie, the transpose of $\mat{A}$ belongs
  to $\C$. Since $\mat{D}^\transpose$ is as desired, we simply have to
  show that $\mat{D}^\transpose \in \C$. This is the case since:
  \begin{align*}
    \mat{D}^\transpose
    &= (\mat{B} \cdot \mat{C})^\transpose \\
    &= \mat{C}^\transpose \cdot \mat{B}^\transpose \\
    &= (\mat{P}_\sigma \cdot \mat{B} \cdot
    \mat{P}_{\sigma^{-1}})^\transpose \cdot \mat{B}^\transpose
    && \text{(since $\mat{C} = \perm{\mat{B}}{\sigma}$)} \\
    &= (\mat{P}_\sigma \cdot \mat{B}^\transpose \cdot
    \mat{P}_{\sigma^{-1}}) \cdot \mat{B}^\transpose
    && \text{(since $\mat{P}_{\pi^{-1}} = \mat{P}_\pi^\transpose$ for
      every perm.\ $\pi$)} \\    
    &= \perm{\mat{B}^\transpose}{\sigma} \cdot
    \mat{B}^\transpose \\
    &= \perm{(\ext{\mat{A}}{d})^\transpose}{\sigma} \cdot
    (\ext{\mat{A}}{d})^\transpose \\
    &= \sigma(\ext{(\mat{A}^\transpose)}{d}) \cdot
    \ext{(\mat{A}^\transpose)}{d}
    && \text{(since $(\ext{\mat{A}}{d})^\transpose =
      \ext{(\mat{A}^\transpose)}{d}$)}. \\
    &\in \C
    && \text{(since $\mat{A}^\transpose \in \C$)}. \qedhere
  \end{align*}
\end{proof}

\subsection{Details for the proof of Theorem~\ref{thm:undec:nonperm}}

We prove the missing details for both cases:\medskip

\parag{Case $\mat{A}_{\star, i} = \vec{0}$} We have $\mat{B}_x \cdot
\vec{e}_y = \vec{e}_y$ since:

\begin{align*}
  (\mat{B}_x \cdot \vec{e}_y)(k)
  &= (\mat{B}_x){k, y} \\
  &= \mat{A}'_{\sigma_x(k), y}
  && \text{(since $y \neq x$)} \\
  &= 1 \iff k = y
  && \text{(by def.\ of $\mat{A}'$ and $\sigma_x$).}
  \intertext{\parag{Case $\mat{A}_{i, \star} = \vec{0}$} We have:}
  (\mat{B}_x \cdot \vec{v})(y)
  &= \sum_{\ell \in [d + n]} \mat{B}_{y, \ell} \cdot \vec{v}(\ell) \\
  &= \mat{A}'_{y, y} \cdot \vec{v}(y)
  && \text{(by $y \neq x$ and def.\ of $\mat{A}'$ and $\sigma_x$)} \\
  &= \vec{v}(y).
  && \tag*{\qed{}}
\end{align*}

\end{document}